\let\color@begingroup\relax
	\let\color@endgroup\relax}{}%
\def\fix@ieeecolor@hbox#1{%
	\hbox{\color@begingroup#1\color@endgroup}}
\patchcmd\@makecaption{\hbox}{\fix@ieeecolor@hbox}{}{\FAILED}
\patchcmd\@makecaption{\hbox}{\fix@ieeecolor@hbox}{}{\FAILED}
\newtheorem{theo}{Theorem}
\newtheorem{coro}{Corollary}
\newtheorem{exam}{Example}
\newtheorem{lemm}{Lemma}
\newtheorem{defi}{Definition}
\newtheorem{rema}{Remark}
\newtheorem{assu}{Assumption}
\newtheorem{prob}{Problem}
\newcommand{\Rmnum}[1]{\expandafter\@slowromancap\romannumeral #1@}
\def\BibTeX{{\rm B\kern-.05em{\sc i\kern-.025em b}\kern-.08em
		T\kern-.1667em\lower.7ex\hbox{E}\kern-.125emX}}
\begin{document}
	
	\title{ Distributed Framework Construction for Affine Formation Control
	}
	
	\author{Huiming Li$^{1}$, Hao Chen$^{1}$, Xiangke Wang$^{1}$~\IEEEmembership{Senior Member,~IEEE, } Zhongkui Li$^{2}$,~\IEEEmembership{Senior Member,~IEEE, } Lincheng Shen$^{1}$
		\thanks{*This work was supported by the National Natural Science Foundation of China under Grant 62303483, U23B2032, U2241214, and 62376280. The corresponding author is Xiangke Wang. }
		\thanks{$^{1}$College of Intelligence Science and Technology, 
			National University of Defense Technology, Changsha 410073,  China.	}
		\thanks{$^{2}$ The State Key Laboratory for Turbulence and Complex Systems, Department of Mechanics and Engineering Science, College of Engineering, Peking University, Beijing 100871, China.}
		\thanks{{\tt Email: huiminglhm@163.com, chenhao09@nudt.edu.cn, xkwang@nudt.edu.cn, zhongkli@pku.edu.cn, lcshen@nudt.edu.cn.}}
	}
	
	\maketitle
	
	\begin{abstract}
		In affine formation control problems, the construction of the framework with universal rigidity and affine localizability is a critical prerequisite, but it has not yet been well addressed, especially when additional agents join the formation or link/agent failures emerge. Motivated by this observation, we investigate the problem of constructing affine frameworks in three scenarios, including vertex addition, edge deletion and vertex deletion. Our approach starts from the original affine formation and uses geometric methods to locally adjust the structure of the weighted graph to describe the topology, so that the modified framework maintains the universal rigidity and affine localizability. Notably, the developed strategies only utilize local measurements and exhibit distributed characteristics, laying the foundation for applications in multi-agent systems. To demonstrate the compatibility with affine formation control proposals, we present a case study on affine formation tracking in a multi-UAV formation, demonstrating the effectiveness of our algorithms in constructing eligible frameworks in aforementioned scenarios. Moreover, a comparative simulation is also conducted to highlight the low time complexity of our distributed algorithm relative to the centralized optimization-based method.
	\end{abstract}
	
	\begin{IEEEkeywords}
		Affine formations, framework construction, rigidity maintenance, robot swarm  
	\end{IEEEkeywords}
	
	\section{Introduction}
	\IEEEPARstart{A}{ffine} formation control problem has recently attracted an increasing attention of scholars, which allows configuration transformations including translation, rotation, scaling, shearing and their combinations \cite{TAC_Lin_2016,TAC_Zhao_2018,TITS_Wang_2023,AST_Li_2025_BearingBased}. Various studies have been conducted to investigate the collaborative control scheme for affine formations with different system dynamics, e.g., linear dynamics \cite{TAC_Zhao_2018,TAC_Lin_2022}, unicycle models \cite{FITEE_Li_2022} and Euler-Lagrange models \cite{CJA_Xu_2019}, or different operating modes, e.g., switching topologies \cite{SJCO_Yang_2021}, event-triggered conditions \cite{TCST_Zhu_2023}, etc. Undoubtedly, the implementation of the above distributed cooperative control schemes heavily relies on the affine \emph{frameworks}, which are described by assigning coordinates in Euclidean space to the topology represented by a weighted graph \cite{TAC_Lin_2016}. A determined framework fixes the direction of information flow within a multi-agent system, which directly influences the convergence of distributed collaborative control schemes. Moreover, in affine formations, the weighted graph used to describe the topology quantifies the strength of interaction among agents, which also affects the convergence speed of the control proposals.  
	Most related studies used pre-designed frameworks, which generally suffer from inevitable changes during the mission execution process~\cite{Chen2022_survivable}. To address the potential various scenarios, the construction of affine frameworks is of great value and practical significance. 
	
	From a mathematical perspective, the problem of designing frameworks for multi-agent systems is usually transformed into the problem of constructing appropriate graph-related matrices (e.g., the bearing-based Laplacian matrix \cite{TCYB_Zhang_2023} and the stress matrix\cite{RAL_Xiao_2022}), which not only reflect the underlying communication among agents but also affect the control gain of the close-loop system. Based on the theoretical analysis in \cite{TAC_Lin_2016,TAC_Zhao_2018}, there are two critical properties, \emph{universal rigidity} and \emph{affine localizability}, that need to be carefully considered to construct affine frameworks due to their direct impact on stabilizability. Up to now, some efforts (e.g. \cite{RAL_Xiao_2022}) have been put in to construct affine frameworks. 
	
	Starting with the rigidity required by an affine framework, a classic method, the Henneberg Construction (HC), is widely applied in many papers. HC was proposed in \cite{HC_1985} to grow the minimally rigid graphs based on the graph structures and the properties of related matrices. Subsequently, lots of studies followed this idea to analyze how the graph rigidity changes after HCs, and how to obtain graphs with the desired characteristics through HCs. Connelly proved the universal rigidity is maintained after HCs \cite{Geometry_Connelly_2005}, and a series of algorithms are proposed for building and reconstructing rigid graphs, not only the classical distance rigidity \cite{TMC_Andrea_2015}, but also the bearing rigidity \cite{IJC_Eren_2012,TCYB_Trinh_2019,TCYB_Zhang_2023}.  In \cite{TCYB_Yang_2019} and \cite{TAC_Yang_2019}, the issue of growing super stable tensegrity frameworks were addressed with the aid of HCs, including vertex addition, edge splitting and framework merging operations. The work offered a numerical method for determining member types and calculating stress matrices, laying a theoretical foundation for applications in tensegrity frameworks. Different from tensegrity frameworks, which rely on the balance of cables, struts, and bars to achieve mechanical stability, affine frameworks focus on maintaining geometric relationships through affine transformations, enabling scalable and flexible formation maneuvers under various scenes. Nevertheless, limited research has been conducted on reconstructing affine frameworks in general positions when they encounter other unexpected events, such as agent exits.
    
    %
	
	Beyond HCs, to meet specific constraints and address the affine framework construction issue in different situations, a normal idea is to establish an optimization problem to obtain a legal topology in which the minimum energy is spent among all the possible networks. In some existing works, the authors studied framework construction problem by using optimization methods to improve resilience \cite{TRO_Kumar_2022,CJA_Feng_2022} and achieve stable coordination \cite{CDC_Dominic_2011,ICRA_Pratik_2020}.  
	In \cite{RAL_Xiao_2022}, Xiao et al. formulated the affine framework construction problem into a centralized mixed integer semi-definite programming (MISDP) problem to obtain the proper stress matrix for affine formation maneuvers. 
    The ready-to-use solver was applied to construct suitable frameworks, which had a significant impact on the computational time of the NP-hard mixed integer programming problem. However, all the researches mentioned above rely on global information (such as global positions required by \cite{RAL_Xiao_2022}) of multi-agent systems and solve the optimization problem in a centralized way. On the one hand, limited by practical situations, global measurements and central agents are not guaranteed in some real robot systems, for example, only bearing measurements are available in \cite{Li_3D_2022}. 
    On the other hand, the centralized optimization algorithm requires powerful computing resources, which adds difficulty to the update of the framework in real-time conditions. 
    Accordingly, beyond existing research, it is a meaningful and challenging issue to construct affine frameworks using local measurements in a distributed way.

	Motivated by these observations, we investigate the distributed strategies to construct affine frameworks to provide  fundamental support for distributed formation control schemes. Different from \cite{RAL_Xiao_2022,TCYB_Yang_2019}, we design the affine framework construction strategies based on HCs by using local measurements in three specific scenarios, namely vertex addition, edge deletion and vertex deletion. 
	Theoretical analysis and simulations are provided to demonstrate that our algorithms preserve not only universal rigidity but also affine localizability to stabilize affine formations. The main contributions of this paper can be summarized as follows.
	\begin{itemize}
		\item[(1)] We propose distributed strategies to construct hierarchical affine frameworks to address various application scenarios, including vertex addition, edge deletion and vertex deletion. Our approaches do not rely on global measurements, making it well-suited for extensive application in multi-agent systems due to their distributed nature.
		
		\item[(2)] We provide comprehensive theoretical analysis on the mathematical conditions of our proposed framework construction strategies to guarantee the characteristics of affine formations, namely universal rigidity and affine localizability. The presented simulations demonstrate the low time complexity and the widespread practicality in multi-agent systems.
		
		
	\end{itemize}
	
	The remainder of this paper is organized as follows. We provide foundational definitions and lemmas in Section~\ref{sec:Preliminaries}. In Section~\ref{sec:MainResult}, we address the key problem and propose distributed algorithms on affine formation constructions. Then, we provide some discussions about our strategies in Section~\ref{sec:Discuss}. The effectiveness is validated by simulations in Section~\ref{sec:Simulation}. The conclusions are summarized in Section~\ref{sec:Conclusion}.
	
	\textbf{Notation :} Throughout the paper, the partial order $\bm{A} \succ 0 $ ($\bm{A} \succeq 0 $) means that the matrix $\bm{A}$ is positive definite (positive semi-definite, i.e., PSD).
	
	\section{Problem Formulation} \label{sec:Preliminaries}
	In this section, we introduce some basic definitions and supportive lemmas, and then establish the affine framework construction problems to be studied.
	
	\subsection{Affine Formation}
    
	We model the interaction topology of an $n$-agent system as a simple undirected graph $\mathcal{G} = \left(\mathcal{V}, \mathcal{E} \right)$, where $\mathcal{V} = \left\{v_1, \cdots ,v_n\right\}$ is the set of vertices, $\mathcal{E} \subseteq \mathcal{V} \times \mathcal{V}$ is the set of edges, and $|\mathcal{E}|=m$. An edge $e_{ij}$ exists if and only if there is a bidirectional interaction between $v_i$ and $v_j$. Hence, the communication neighbor set of vertex $v_i$ is denoted by $\mathcal{N}_i := \left\{v_j \in \mathcal{V} \mid e_{ij} \in \mathcal{E}\right\}$. Let $\bm{p}_i \in \mathbb{R}^d$ be the position of vertex $v_i$, and $\bm{p}:=\left[\bm{p}_1^T,\cdots,\bm{p}_n^T\right]^T $ describes the \emph{configuration} of the multi-agent system. We define a \emph{framework} as a graph associated with $\bm{p}$, i.e., $\left(\mathcal{G},\bm{p}\right)$. The set of points $\left\{\bm{p}_i\right\}_{i=1}^n$ are called \emph{affinely dependent} if there exist scalars $\left\{a_i\right\}_{i=1}^n$ that are not all zero such that $\sum_{i=1}^{n}a_i \bm{p}_i = 0 $ and $\sum_{i=1}^{n}a_i  = 0 $, and \emph{affinely independent} otherwise. A configuration $\bm{p}$ (or a framework $\left(\mathcal{G},\bm{p}\right)$) in $\mathbb{R}^d$ is said to be in \emph{general position} if no subset of the points $\left\{\bm{p}_i\right\}_{i=1}^n$  of cardinality $d+1$ is affinely
	dependent. For example, a set of points in the plane are in general position if no three of them lie on a straight line. Considering the limited perception distance of agents, we define the perceived neighbors of agent $i$ as $\mathcal{N}_i^p:=\{v_j \in \mathcal{V}\mid \|\bm{p}_i - \bm{p}_j\|\le d_{per}\}$, where $d_{per}$ is the perception distance. We assume that all perceptible agents can communicate, i.e., $\mathcal{N}_i^p \subseteq \mathcal{N}_i $. Thus, the perception ability of an agent determines the members of its neighbors.
	
	
	We assign a \emph{stress}, i.e., a set of scalars $\left\{\varpi_{ij} \right\}_{\left(i,j\right) \in \mathcal{E}}$, for all edges in the undirected graph. That is, a weight $\varpi_{ij} $ is assigned for the edge $ e_{ij}$, and $\varpi_{ij}=\varpi_{ji} $. 
	An \emph{equilibrium stress} is established if and only if $\sum_{v_j \in \mathcal{N}_{i}} \varpi_{i j}\left(\bm{p}_{j}-\bm{p}_{i}\right)=\bm{0}$ holds for any vertex $v_i$ in $(\mathcal{G},\bm{p})$. Its matrix form can be written as $\left(\bm{\Omega} \otimes \mathbf{I}_d\right)\bm{p} =\bm{0}$, 
	where the sign $\otimes$ represents the Kronecker product and $\bm{\Omega} \in \mathbb{R}^{n\times n}$ is termed the \emph{stress matrix}, defined as
	\begin{align}
		[\bm{\Omega}]_{i j}=\left\{\begin{array}{ll}
			-\varpi_{i j} & \text { if } i \neq j \text { and } v_j \in \mathcal{N}_{i} \\
			0 & \text { if } i \neq j \text { and } v_j \notin \mathcal{N}_{i} \\
			\sum_{k \in \mathcal{N}_{i}} \varpi_{i k} & \text { if } i=j
		\end{array}\right.\notag
	\end{align}
	The matrix $\mathbf{I}_d$ denotes the identity matrix in $\mathbb{R}^d$. The following lemmas reveal the connection between the stress matrix and universal rigidity.
    \begin{lemm}[\cite{LAA_Alfakih_2011}]\label{lemma:StressUniRigid}
		Let $\left(\mathcal{G},\bm{p}\right)$ be a framework of $n$ vertices in general position in $\mathbb{R}^d$, $d \le
		n - 1$. Then $\left(\mathcal{G},\bm{p}\right)$ is universally rigid if there exists a stress matrix $\bm{\Omega}$ of $\left(\mathcal{G},\bm{p}\right)$ such
		that $\bm{\Omega}$ is PSD and $\operatorname{rank}\left(\bm{\Omega}\right) = n - d - 1$.
		
    \end{lemm}
    \begin{lemm}[\cite{LAA_Alfakih_2011}]\label{lemma:LaterUniRigid}
		Any universally rigid framework $\left(\mathcal{G},\bm{p}\right)$ in general position admits a PSD stress matrix with rank $\left(n-d-1\right)$, if $\mathcal{G}$ contains a $\left(d+1\right)$-lateration graph as a spanning subgraph.
    \end{lemm}
    
	A graph $\mathcal{G}= \left(\mathcal{V},\mathcal{E}\right)$ of $n$ vertices is called a \emph{$\left(d+1\right)$-lateration} graph if there is a permutation $\pi$ of the vertices, $\pi(1),~\pi(2), \cdots, \pi(n), $ such that (i) the first $\left(d+1\right)$ vertices, $\pi(1), \cdots, \pi(d+1)$, form a clique $ \mathcal{C} \subseteq \mathcal{V} $, which satisfies that for all $ u, v \in \mathcal{C} $ where $ u \neq v $, the edge $ e_{uv} \in \mathcal{E} $, (ii) each remaining vertex $\pi(j)$, for $j = \left(d+2\right),\cdots, n$, is adjacent to $\left(d+1\right)$ vertices in the set $\left\{\pi(1),~\pi(2) \cdots \pi(j-1)\right\}$.

%

	Recently, the stress matrix has been widely applied to stabilize \emph{affine formations} of multiple vehicles  \cite{TAC_Lin_2016,TAC_Zhao_2018,CJA_Xu_2019,SJCO_Yang_2021,Li_LayeredAffine_2021,FITEE_Li_2022,OE_Zheng_2022,TCST_Zhu_2023,TITS_Wang_2023}. Different from rigid formations, affine formations allow the rotation, translation, scaling, shearing and combinations of them, which have a great advantage in enhancing formation maneuverability. 
	Suppose each agent is governed by a single-integrator dynamics, $\dot{z}_i = u_i$. Under the classical local interaction law $u_i = -\sum_{j \in \mathcal{N}_i} \varpi_{ij}z_{ij}$ over the undirected graph $\mathcal{G}$, the closed-loop system is described by 
	\begin{equation}\label{eq:AffineFormation_CL}
		\dot{z} = -\left(\bm{\Omega} \otimes \mathbf{I}_d\right) z.
	\end{equation}
	Considering a target configuration $\bm{p} $ in $\mathbb{R}^d$, the affine image of $\bm{p}$ is shown as 
		$\mathcal{A}\left(\bm{p}\right):=\left\{\bm{q}=(\mathbf{I}_n\otimes\bm{A})\bm{p}+\bm{1}_n \otimes \bm{b} \right\}$, 
	where the invertible matrix $\bm{A} \in \mathbb{R}^{d\times d}$ and translation vector $\bm{b} \in \mathbb{R}^{d}$ are continuous on $t$. 
	An affine formation of $\bm{p}$ is said to be stabilizable over the undirected graph $\mathcal{G}$ if there exists a symmetric matrix $\bm{\Omega}$ associated with $\mathcal{G}$ such that the state of \eqref{eq:AffineFormation_CL} converges to a point in $\mathcal{A}\left(\bm{p}\right)$. The following lemma provides a necessary and sufficient condition for stabilizability of an affine formation.
	
	\begin{lemm}\label{lemma:AffineFormation_Stabilizability}
		Suppose an undirected graph $\mathcal{G}$ has $n$ nodes with $n \ge d+2$ and $\bm{p}=\left[\bm{p}_1^T,\cdots,\bm{p}_n^T\right]^T $ is a general configuration, where $\mathcal{G}$ contains a $\left(d+1\right)$-lateration graph as a spanning subgraph. Then an affine formation of $\bm{p}$ is stabilizable over $\mathcal{G}$ if and only if $\mathcal{G}$ is universally rigid.
    \end{lemm}
    \begin{proof}
		(Sufficiency) If $\mathcal{G}$ is universally rigid, then by Lemma~\ref{lemma:LaterUniRigid}, for a general configuration $\bm{p}$, there exists a stress matrix $\bm{\Omega}$ (satisfying $\bm{\Omega}\mathbf{1}_n =\bm{0}$ and $\left(\bm{\Omega} \otimes \mathbf{I}_d\right) \bm{p} =\bm{0}$) that is of rank $\left(n-d-1\right)$ and PSD. Thus, we construct such a matrix $\bm{\Omega}$ for system \eqref{eq:AffineFormation_CL}, for which the eigenvalues of $\bm{\Omega}$ are all positive other than $d+1$ zero eigenvalues with $d+1$ linearly independent associated eigenvectors. Then it follows from the properties of Kronecker product that system \eqref{eq:AffineFormation_CL} is asymptotically stable.
		Since  $\left(\bm{\Omega} \otimes \mathbf{I}_d\right) \bm{p} =\bm{0}$) that is of rank $\left(n-d-1\right)$, it turns out that for any $A \in \mathbb{R}^{d\times d}$ and $b \in\mathbb{R}^d$,
			$\left(\bm{\Omega} \otimes \mathbf{I}_d\right) \left[\left(\mathbf{I}_n \otimes A\right)\bm{p} + \mathbf{1}_n \otimes \bm{b}\right]
			= \left(\mathbf{I}_n \otimes \bm{A}\right)\left(\bm{\Omega} \otimes \mathbf{I}_d\right)\bm{p}
			=\bm{0}$, 
		which means the affine image $\mathcal{A}\left(\bm{p}\right)$ is a subset of the equilibrium
		set. Moreover, if $\operatorname{span}\left\{
		\bm{p}_1,\cdots,\bm{p}_n\right\} = \mathbb{R}^d$, then $\mathcal{A}\left(\bm{p}\right)$ is a linear subspace of dimension $d^2 + d$, which equals to the dimension of null space of $\bm{\Omega}\otimes\mathbf{I}_d$. Therefore, it is certain that
		the equilibrium set of system \eqref{eq:AffineFormation_CL} equals to $\mathcal{A}\left(\bm{p}\right)$.
		Thus, we obtain that the system \eqref{eq:AffineFormation_CL} converges to $\mathcal{A}\left(\bm{p}\right)$.
		
		(Necessity) If an affine formation of $\bm{p}$ is stabilizable over $\mathcal{G}$,
		then there exists a symmetric matrix $\bm{\Omega}$ associated with $\mathcal{G}$ such
		that the equilibrium set of system \eqref{eq:AffineFormation_CL} equals to $\mathcal{A}\left(\bm{p}\right)$ and the
		state of system $\mathcal{A}\left(\bm{p}\right)$ converges to a point in $\mathcal{A}\left(\bm{p}\right)$. This implies that other than $d + 1$ zero eigenvalues, the eigenvalues of $\bm{\Omega}$
		are positive. Therefore, it is PSD. Thus, by
		Lemma~\ref{lemma:StressUniRigid} it follows that $\mathcal{G}$ is universally rigid.
	\end{proof}
	
	
	In the affine formation control scheme, the leader-follower structure is adopted naturally. Let $n_l$ vertices $\mathcal{V}_l =\lbrace v_1, \cdots, v_{n_l}\rbrace $ be the leaders, and the remaining $n_f = n - n_l$ vertices in the framework be the followers, denoted by $\mathcal{V}_f = \mathcal{V} \backslash \mathcal{V}_l$. Accordingly, the target configuration can be shown as $\bm{p}=\left[\bm{p}_{l}^{T}, \bm{p}_{f}^{T}\right]^{T} $, where $\bm{p}_{l}$ represents the configuration of $n_l$ leaders and $\bm{p}_{f}$ describes the configuration of the followers. 
	In this paper, we assume that $n_l=d+1$ and leaders affinely span (refer to \cite{TAC_Zhao_2018} ) $\mathbb{R}^d$, following the analysis in \cite{TAC_Zhao_2018,FITEE_Li_2022}. Moreover, the configuration of followers can be decided by leaders, that is, affine localizability.
	


	\begin{defi}[Affine localizability \cite{TAC_Zhao_2018} ]\label{defi:AffineLocalizability}
		The affine formation $\left(\mathcal{G},\bm{p}\right)$ is affinely localizable by the leaders if for any $\bm{q} = \left[\bm{q}_l^T,\bm{q}_f^T\right]^T \in \mathcal{A}(\bm{p})$, $\bm{q}_f$ is uniquely determined by $\bm{q}_l$.
	\end{defi}
	
	Consequently, based on Lemma~\ref{lemma:AffineFormation_Stabilizability} and Definition~\ref{defi:AffineLocalizability}, for an affine formation tracking problem, the two most important properties are universal rigidity and affine localizability when the leaders are fully controlled. These two properties describe the geometric properties of affine formations, and their algebraic properties are closely related to the stress matrix, as shown below. 
	
	
	\begin{lemm}[\cite{TAC_Zhao_2018}]\label{lemma:AffineLocalizable}
		If the formation $\left(\mathcal{G},\bm{p}\right)$ is universally rigid and $\left\{ \bm{p}_i \right\}_{i=1}^n$ affinely span $\mathbb{R}^d$, $\left(\mathcal{G},\bm{p}\right)$ is affinely localizable if and only if $\bm{\Omega}_{ff}$ is nonsingular, where the stress matrix $\bm{\Omega}$ associated with $\left(\mathcal{G},\bm{p}\right)$ be $\bm{\Omega} =\left[\begin{array}{cc}
			\bm{\Omega}_{l l} & \bm{\Omega}_{l f} \\
			\bm{\Omega}_{f l} & \bm{\Omega}_{f f}
		\end{array}\right]$, 
		where $\bm{\Omega}_{ll} \in\mathbb{R}^{n_l\times n_l}$, $\bm{\Omega}_{lf} \in\mathbb{R}^{n_l\times n_f},~\bm{\Omega}_{fl} \in\mathbb{R}^{n_f\times n_l}$, and $\bm{\Omega}_{ff} \in\mathbb{R}^{n_f\times n_f}$.
	\end{lemm}
	
	The null space of $\bm{\Omega}$ is shown in Lemma~\ref{lemma:NullStress}. 
	\begin{lemm}[\cite{TAC_Zhao_2018}]\label{lemma:NullStress}
		Assume that the nominal formation $\left(\mathcal{G},\bm{p}\right)$ has a PSD stress matrix satisfying $\operatorname{rank}\left(\bm{\Omega}\right)=n-d-1$. The following conditions are equivalent to each other
		\begin{itemize}
			\item[(i)] $\left\{\bm{p}_i\right\}_{i=1}^n $ affinely span $\mathbb{R}^d$; 
			\item[(ii)] $\operatorname{null}(\bm{\Omega}) = \operatorname{col}\left(\left[\begin{array}{ccc}
				\bm{p}_1 & \cdots & \bm{p}_n \\
				1 & \cdots & 1
			\end{array}\right]^T \right)$.
		\end{itemize}
	\end{lemm}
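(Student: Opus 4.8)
The plan is to establish the two implications $(i) \Rightarrow (ii)$ and $(ii) \Rightarrow (i)$ separately, using the equilibrium stress condition as the bridge between the algebraic structure of $\bm{\Omega}$ and the geometric configuration $\bm{r}$. First I would recall that since $\bm{\Omega}$ is a stress matrix, every row sums to zero, so $\bm{1}_n \in \operatorname{null}(\bm{\Omega})$; and since $\bm{r}$ is an equilibrium stress configuration, $\left(\bm{\Omega} \otimes \bm{I}_d\right)\bm{r} = \bm{0}$, which componentwise means each coordinate vector $\bm{r}^{(k)} = [r_{1,k}, \dots, r_{n,k}]^T$ (the vector of $k$-th coordinates across all vertices) lies in $\operatorname{null}(\bm{\Omega})$ for $k = 1, \dots, d$. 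Writing $\bm{R} := \left[\begin{smallmatrix} \bm{r}_1 & \cdots & \bm{r}_n \\ 1 & \cdots & 1 \end{smallmatrix}\right]^T \in \mathbb{R}^{n \times (d+1)}$, this shows unconditionally that $\operatorname{col}(\bm{R}) \subseteq \operatorname{null}(\bm{\Omega})$, regardless of whether $(i)$ holds.

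For $(i) \Rightarrow (ii)$: the affine span condition says precisely that the $d+1$ columns of $\bm{R}$ are linearly independent, i.e., $\operatorname{rank}(\bm{R}) = d+1$, so $\dim \operatorname{col}(\bm{R}) = d+1$. On the other hand, the hypothesis $\operatorname{rank}(\bm{\Omega}) = n - d - 1$ forces $\dim \operatorname{null}(\bm{\Omega}) = d+1$ by the rank–nullity theorem. Combining the inclusion $\operatorname{col}(\bm{R}) \subseteq \operatorname{null}(\bm{\Omega})$ from the previous paragraph with the equality of dimensions yields $\operatorname{col}(\bm{R}) = \operatorname{null}(\bm{\Omega})$, which is exactly $(ii)$.

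For $(ii) \Rightarrow (i)$: assume $\operatorname{null}(\bm{\Omega}) = \operatorname{col}(\bm{R})$. Then $\dim \operatorname{col}(\bm{R}) = \dim \operatorname{null}(\bm{\Omega}) = n - \operatorname{rank}(\bm{\Omega}) = d+1$. Since $\bm{R}$ has exactly $d+1$ columns, this means the columns are linearly independent, i.e., there is no nonzero vector $\bm{a} = [a_1, \dots, a_n]^T$ — wait, more precisely, no nonzero $\bm{c} \in \mathbb{R}^{d+1}$ with $\bm{R}\bm{c} = \bm{0}$; equivalently, the only scalars $\{a_i\}$ with $\sum_i a_i \bm{r}_i = \bm{0}$ and $\sum_i a_i = 0$ are... actually this is the wrong direction of the affine-dependence statement. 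The correct reading: the columns of $\bm{R}^T$, namely the augmented points $\bar{\bm{r}}_i = [\bm{r}_i^T, 1]^T$, span $\mathbb{R}^{d+1}$ iff $\operatorname{rank}(\bm{R}) = d+1$, and this spanning of $\mathbb{R}^{d+1}$ by the $\bar{\bm{r}}_i$ is the definition of $\{\bm{r}_i\}$ affinely spanning $\mathbb{R}^d$. Hence $(i)$ holds. The main obstacle here is purely bookkeeping: keeping the distinction between $\operatorname{col}(\bm{R})$ as a subspace of $\mathbb{R}^n$ and the row space / the span of the augmented points $\bar{\bm{r}}_i$ in $\mathbb{R}^{d+1}$ straight, and correctly translating "affinely span $\mathbb{R}^d$" into the rank condition $\operatorname{rank}(\bm{R}) = d+1$. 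Everything else is the rank–nullity theorem applied to the fixed identity $\operatorname{col}(\bm{R}) \subseteq \operatorname{null}(\bm{\Omega})$ together with the dimension count $\dim\operatorname{null}(\bm{\Omega}) = d+1$ coming from the rank hypothesis on $\bm{\Omega}$.
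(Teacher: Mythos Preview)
Your argument is correct. The key observations---that $\operatorname{col}(\bm{R}) \subseteq \operatorname{null}(\bm{\Omega})$ holds unconditionally from the equilibrium stress condition, and that condition $(i)$ is equivalent to $\operatorname{rank}(\bm{R}) = d+1$---together with the rank hypothesis $\operatorname{rank}(\bm{\Omega}) = n-d-1$ reduce both implications to a dimension count via rank--nullity. The brief detour in the $(ii)\Rightarrow(i)$ paragraph where you momentarily conflate column and row spaces is harmless, since you recover the right statement: $\operatorname{rank}(\bm{R}) = d+1$ means the augmented vectors $\bar{\bm{r}}_i = [\bm{r}_i^T,\,1]^T$ span $\mathbb{R}^{d+1}$, which is precisely the affine spanning condition.

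As for comparison with the paper: the paper does not prove this lemma at all. It is quoted verbatim as a known result from \cite{TAC_Zhao_2018} and used as a black box (notably in the proof of Theorem~\ref{theo:VertexAddition} to identify $\operatorname{null}(\bm{\Omega}^+)$). So there is no alternative approach to contrast; your write-up simply supplies the standard argument that the cited reference contains.
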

	
	Accordingly, the universal rigidity and affine localizability are two key properties to ensure the feasibility of the framework under various affine formation control schemes, which are also the focus of our subsequent theoretical analysis.
	
	\subsection{Problem Statement}
	
	In a robot swarm, due to the time-varying target tasks and complex environments, the configuration and topology are usually not fixed.  
    For example, when a group of unmanned aerial vehicles (UAVs) pass through dense buildings or woods, situations including UAV joining, crashing, and communication disruptions may occur. From the perspective of graph theory, these situations can be described by the vertex addition, edge deletion and vertex deletion (corresponding to the UAV joining, communication disruption and crashing), respectively.  The original topology of the swarm is changed, and even the inherent properties (such as connectivity, rigidity, etc.) are disrupted. 
    Considering the supportive role of topology in cooperative control schemes, it is particularly crucial to timely construct the topology in various situations, which gives rise to the demand for framework construction strategies. Furthermore, considering the lack of understanding of the global states and measurements for perception-limited agents, the study of distributed algorithms is motivated in this paper.
	
	In this paper, we investigate how to construct an affine framework to quickly adapt to different operating conditions. 
	From the perspective of graph theory, the connections between vertices and the weights of edges are reshaped to guarantee the universal rigidity and affine localizability. 
	Given an original affine framework $\left(\mathcal{G}_0,\bm{p}_0\right)$ with a stress matrix $\bm{\Omega}_0$, we aim to design distributed algorithms to guarantee the universal rigidity and affine localizability of the obtained framework after adding vertices to $\left(\mathcal{G}_0,\bm{p}_0\right)$, or removing vertices and edges from $\left(\mathcal{G}_0,\bm{p}_0\right)$. 
	We assume that $\left(\mathcal{G}_0,\bm{p}_0\right)$ meets the following condition.
	
	\begin{assu}\label{assu:OriginalFrame}
		Suppose that the framework $\left(\mathcal{G}_0,\bm{p}_0\right)$ in general position is affinely localizable and universally rigid, where $\mathcal{G}_0$ contains a $\left(d+1\right)$-lateration graph as a spanning subgraph.
	\end{assu}
	
Based on Lemma~\ref{lemma:LaterUniRigid}, it is deduced that the original stress matrix $\bm{\Omega}_0$ is PSD and $\operatorname{rank}\left(\bm{\Omega}_0\right)=n-d-1$ under  Assumption~\ref{assu:OriginalFrame}. The problems studied in this paper are established as follows. 
	
		
		

	\begin{prob}[Vertex Addition]\label{prob:VA}
		Given an affine framework $\left(\mathcal{G},\bm{p}\right)$ and a new vertex $v_u$, design a strategy to link $v_u$ with $\left(\mathcal{G},\bm{p}\right)$ and then rearrange the stress so that the obtained framework $\left(\mathcal{G}^+,\bm{p}^+\right)$ is universally rigid and affinely localizable, where $\mathcal{G}^+ = \left( \mathcal{V} \cup \{v_u\}, \mathcal{E}^+\right)$ and $\bm{p}^+ = \left[ \bm{p}^T,~\bm{p}_u^T \right]^T$.
	\end{prob}
	
	\begin{prob}[Edge Deletion]\label{prob:ED}
		Given an affine framework $\left(\mathcal{G},\bm{p}\right)$ with an edge $e_{jk}$ (i.e., $j,k\in \mathcal{V},~e_{jk} \in \mathcal{E}$), design a strategy to delete $e_{jk}$ from  $\mathcal{E}$ while maintaining the universal rigidity and affine localizability of the obtained framework $\left(\mathcal{G}_{ed},\bm{p}_{ed}\right)$, where $e_{jk} \notin \mathcal{E}_{ed}$.  
	\end{prob}
	
	\begin{prob}[Vertex Deletion]\label{prob:VD}
		Given an affine framework $\left(\mathcal{G},\bm{p}\right)$ with a specific vertex $v_u \in \mathcal{V}$, design a strategy to delete $v_u$ from the vertex set $\mathcal{V}$ while maintaining the universal rigidity and affine localizability of the obtained framework $\left(\mathcal{G}_{vd},\bm{p}_{vd}\right)$, where $v_u \notin \mathcal{V}_{vd} $.    
	\end{prob}
	
	Lemmas are introduced to lay the foundation for the theoretical analysis in the following contexts.
	\begin{lemm}[Rank-Nullity Theorem \cite{Lang_LA_1986}]\label{lemm:RN} 
		If there is a matrix $A $ with $x$ rows and $y$ columns over a field, then $
		\operatorname{rank}(A) + \operatorname{nullity}(A) = y$, 
		where $\operatorname{nullity}(A)$ means the nullity of the matrix $A$, that is, the dimension of the kernel of $A$.
	\end{lemm}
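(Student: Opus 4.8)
The plan is to prove the Rank-Nullity Theorem by regarding the matrix $A$ (with $x$ rows and $y$ columns over a field $F$) as the linear map $T:F^y \to F^x$ given by $T(\bm{v}) = A\bm{v}$, and then to exhibit an explicit basis of the column space whose cardinality equals $y - \operatorname{nullity}(A)$. First I would set $k \triangleq \operatorname{nullity}(A) = \dim \ker(T)$ and fix a basis $\{\bm{u}_1,\ldots,\bm{u}_k\}$ of $\ker(T)$. Since $\ker(T)$ is a subspace of the finite-dimensional space $F^y$, this linearly independent set extends to a basis $\{\bm{u}_1,\ldots,\bm{u}_k,\bm{w}_1,\ldots,\bm{w}_r\}$ of $F^y$, where by construction $k + r = y$.

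The core claim is that $\{T(\bm{w}_1),\ldots,T(\bm{w}_r)\}$ is a basis of the image $\operatorname{im}(T)$, which immediately gives $\operatorname{rank}(A) = r$ and hence $\operatorname{rank}(A) + \operatorname{nullity}(A) = r + k = y$. For the spanning property, any $\bm{z} \in \operatorname{im}(T)$ equals $T(\bm{v})$ for some $\bm{v} \in F^y$; expanding $\bm{v} = \sum_{j=1}^{k} a_j \bm{u}_j + \sum_{i=1}^{r} b_i \bm{w}_i$ in the chosen basis and using linearity together with $T(\bm{u}_j) = \bm{0}$ yields $\bm{z} = \sum_{i=1}^{r} b_i T(\bm{w}_i)$. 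For linear independence, suppose $\sum_{i=1}^{r} c_i T(\bm{w}_i) = \bm{0}$; then $T\left(\sum_{i=1}^{r} c_i \bm{w}_i\right) = \bm{0}$, so $\sum_{i=1}^{r} c_i \bm{w}_i \in \ker(T)$ and can be written as $\sum_{j=1}^{k} d_j \bm{u}_j$. Rearranging produces a linear dependence among the full basis $\{\bm{u}_j\} \cup \{\bm{w}_i\}$, which forces every $c_i = 0$.

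I expect the only subtle point to be the invocation of the basis-extension fact (every linearly independent subset of a finite-dimensional vector space extends to a basis), which itself rests on the existence of a basis for $F^y$ and the Steinitz exchange property guaranteeing that all bases share the common cardinality $y$. Everything else is routine bookkeeping of linear combinations. Since the result is entirely standard and is cited to \cite{Lang_LA_1986}, I would keep the write-up short and simply lean on these foundational facts rather than re-deriving them from scratch.
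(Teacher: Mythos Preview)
Your proof is correct and follows the classical textbook argument: extend a basis of the kernel to a basis of the domain and show that the images of the extension vectors form a basis of the column space. Note, however, that the paper does not supply its own proof of this lemma; it is stated as a standard result cited from \cite{Lang_LA_1986} and used as a tool in the proof of Theorem~\ref{theo:VertexAddition}, so there is nothing in the paper to compare against beyond the citation itself.
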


        \begin{lemm}[\cite{Zhang_Schur_2006}]\label{lemma:PD}
				For any symmetric matrix $\bm{M}$ of the form $\bm{M}=\left[\begin{array}{cc}
						\bm{A} & \bm{B}\\
						\bm{B}^T & \bm{C}
					\end{array}\right]$,
				if $\bm{C}$ is invertible then the following properties hold:
				\begin{itemize}
					\item[(1)] $\bm{M} \succ 0$ $\Longleftrightarrow$ $\bm{C} \succ 0$ and $\bm{A}-\bm{B} \bm{C}^{-1} \bm{B}^T \succ 0$.
					\item[(2)] If $\bm{C} \succ 0$, then $\bm{M} \succeq 0$ $\Longleftrightarrow$ $\bm{A}-\bm{B} \bm{C}^{-1} \bm{ B}^T \succeq 0$.
				\end{itemize}
	\end{lemm}
	

	\section{CONSTRUCTION OF AFFINE
		FRAMEWORKS IN 2D}\label{sec:MainResult}
	In this section, we aim to develop affine framework construction strategies to solve the problems proposed in Section \ref{sec:Preliminaries}. We consider $d=2$ in this section, and the methods are further extended to three-dimensional spaces in Section~\ref{sec:Discuss}.
	
	
	
	\subsection{Vertex Addition}\label{sec:VertexAddition}
	Given an affine formation $\left(\mathcal{G},\bm{p}\right)$ and a new vertex $v_u$, our objective here is to merge $v_u$ to the original framework $\left(\mathcal{G},\bm{p}\right)$ with appropriate edges and weights, resulting in an new framework $\left(\mathcal{G}^+,\bm{p}^+\right)$ with universal rigidity and affine localizability. Naturally, the leaders in $\left(\mathcal{G},\bm{p}\right)$ can be inherited by $\left(\mathcal{G}^+,\bm{p}^+\right)$, i.e, $\mathcal{V}_l^+  = \mathcal{V}_l$. The vertex addition strategy is proposed as below.
	
            \begin{theo} \label{theo:VertexAddition}
				Under Assumption~\ref{assu:OriginalFrame}, consider a new vertex $v_u$ and an affine framework $\left(\mathcal{G},\bm{p}\right)$ in $\mathbb{R}^2$. Suppose that $| \mathcal{N}_u^p| \ge 3$, and  the set $\left\{\bm{p}_u\right\} \cup \left\{\bm{p}_j:j \in \mathcal{V} \right\}$ is in general position. With a positive scaling parameter $s$, after adding three edges connecting the vertex $v_u$ and the existing vertices $v_i$, $v_j, v_k \in \mathcal{N}_u^p$ to $\left(\mathcal{G},\bm{p}\right)$, the obtained framework $\left(\mathcal{G}^+,\bm{p}^+\right)$ is universally rigid and affinely localizable.
                    %
		\end{theo}

	The requirement $\mid \mathcal{N}_u^p\mid \ge 3$ indicates that for a new vertex to be effectively integrated into the affine formation, it possesses adequate sensing and communication capabilities to establish connections with at least three vertices. 
    The proof of Theorem~\ref{theo:VertexAddition} can be translated into an algebraic problem: construct a suitable stress matrix $\bm{\Omega}^+$ for $\left(\mathcal{G}^+,\bm{p}^+\right)$ and demonstrate its algebraic properties to establish the universal rigidity and affine localizability of $\left(\mathcal{G}^+,\bm{p}^+\right)$, as shown below. 

    In $\mathcal{N}_u^p$, the three neighbors are denoted as $v_i$, $v_j$, $v_k$, and they are in a general position with respect to $v_u$. 
    To generate an eligible affine framework, we add edges between $v_u$ and all of $v_i$, $v_j$ and $v_k$, the resulting framework $\left(\mathcal{G}^+,\bm{p}^+\right)$ is shown as in Fig.~\ref{fig:VertexAdd_2}, where $\mathcal{G}^+ = \left(\mathcal{V} \cup \{v_u\}, \mathcal{E}\cup \left\{e_{iu},~e_{ju},~e_{ku}\right\} \right)$ and $\bm{p}^+ = \left[\bm{p}^T, \bm{p}_u^T\right]^T $. To maintain the universal rigidity and affine localibility of $\left(\mathcal{G}^+,\bm{p}^+\right)$, the stress in $\mathcal{G}^+$ need to be rearranged, as discussed below. 

    \begin{figure}[htbp]
		\centering
		\subfigure[]{\includegraphics[height=0.11\textwidth]{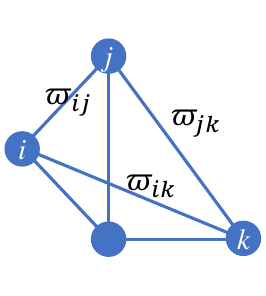} 
		}
		\subfigure[] {\includegraphics[height=0.11\textwidth]{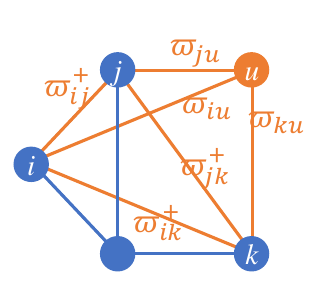}
        \label{fig:VertexAdd_2}
		}
            \subfigure[] {\includegraphics[height=0.11\textwidth]{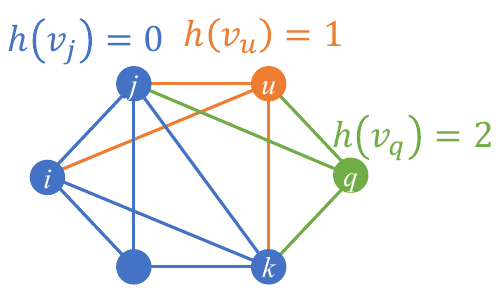}
            \label{fig:VertexAdd_3}
		}
		\caption{An example of vertex addition. (a) The original framework $\left(\mathcal{G}_0,\bm{p}_0\right)$ in general position. (b) The obtained graph $\mathcal{G}^+$, where the vertex $v_u$ (labeled by a orange circle) and three edges are added. (c) Adding the vertex $v_q$ (labeled by a green circle) and three edges are added.}
		\label{fig:VertexAdd}
    \end{figure}
	
	Reorder the elements in $\mathcal{V}$ to place the selected three vertices $v_i$, $v_j$, $v_k$ at the end, i.e., {\small$\mathcal{V} \triangleq \{\cdots,v_{i-1}, v_{i+1},\cdots,v_{j-1}, v_{j+1},\cdots,v_{k-1}, v_{k+1},\cdots,v_i,v_j,v_k\}$} if $i<j<k$. The configuration $\bm{p}$ is also reordered based on the index of $\mathcal{V}$. Thus, for the graph $\mathcal{G}$, the corresponding stress matrix $\bm{\Omega}$ can be described as below.
	\begin{equation}
    \footnotesize
		\bm{\Omega}
		=\left[\begin{array}{c:ccc}
			\bm{\Omega}^{P1} &   & \bm{\Omega}^{P2} &   \\
			\hdashline & \Omega_{i i}& \Omega_{i j}& \Omega_{i k} \\
			(\bm{\Omega}^{P2})^T & \Omega_{i j}& \Omega_{j j}& \Omega_{j k}  \\
			& \Omega_{i k} & \Omega_{k j} & \Omega_{k k}
		\end{array}\right],
	\end{equation}
	%
	where $\bm{\Omega}$ is indexed based on the elements in $\mathcal{V}$.  Inherited from Lemma~\ref{lemma:NullStress}, we have 
	\begin{equation} \label{eq:KerStress}
        \footnotesize
		\left[\begin{array}{ccccc}
			\bm{p}_1 & \cdots & \bm{p}_i & \bm{p}_j & \bm{p}_k \\
			1 & \cdots & 1 & 1 & 1
		\end{array}\right]\bm{\Omega} = \bm{0}.
	\end{equation}
    After adding a new vertex $v_u$ and three edges, $e_{iu},~e_{ju},~e_{ku}$ (as shown in Fig.~\ref{fig:VertexAdd_2}), the following equation~\eqref{eq:StressV_u} is established based on the definition of a equilibrium stress.
	\begin{equation}\label{eq:StressV_u}
        \footnotesize
		\varpi_{iu}\left(\bm{p}_i-\bm{p}_u\right) + \varpi_{ju}\left(\bm{p}_j-\bm{p}_u\right) + \varpi_{ku}\left(\bm{p}_k-\bm{p}_u\right) = \bm{0},
	\end{equation}
	where $\varpi_{iu}$, $\varpi_{ju}$ and $\varpi_{ku}$ are the weights of edges $e_{iu},~e_{ju},~e_{ku}$. 
	The resulting stress matrix $\bm{\Omega}^+$ is expressed as follows.
	\begin{equation}
		\footnotesize
		\begin{aligned}
			\bm{\Omega}^+ \triangleq & \left[\begin{array}{c:c}
				\bm{\Omega}^{P1}  & \begin{array}{cc}
					 \bm{\Omega}^{P2}  &  \bm{0}
				\end{array} \\
				\hdashline 
				\begin{array}{c}
					(\bm{\Omega}^{P2})^T\\
					\bm{0}
				\end{array} & \bm{\Omega}_{va}^+
			\end{array}\right]\\
			=&\left[\begin{array}{c:cccc}
				\bm{\Omega}^{P1}  &  & \bm{\Omega}^{P2} &  &  \bm{0} \\
				\hdashline & \Omega_{i i}^+& \Omega_{i j}^+& \Omega_{i k}^+ & - \varpi_{iu} \\
				(\bm{\Omega}^{P2})^T & \Omega_{j i}^+ & \Omega_{j j}^+ & \Omega_{j k}^+  & - \varpi_{ju} \\
				&\Omega_{k i}^+  & \Omega_{k j}^+  & \Omega_{k k}^+ & - \varpi_{ku}\\
				\bm{0} & - \varpi_{iu}  & - \varpi_{ju}  &- \varpi_{ku} &  \varpi_{uu}
			\end{array}\right],
		\end{aligned}
	\end{equation}
	where $\varpi_{uu} = \varpi_{iu} + \varpi_{ju} + \varpi_{ku}$. 
	Similarly, we have 
	\begin{equation}\label{eq:KerStressPlus}
		\footnotesize
		\left[\begin{array}{cccccc}
			\bm{p}_1 & \cdots & \bm{p}_i & \bm{p}_j & \bm{p}_k & \bm{p}_u \\
			1 & \cdots & 1 & 1 & 1 & 1
		\end{array}\right]\bm{\Omega}^+ = \bm{0}.
	\end{equation}
	
	
	Let $\bm{\Omega}_{va} = \left[\begin{array}{cccc}
		\Omega_{i i} & \Omega_{i j}& \Omega_{i k} & 0\\
		\Omega_{i j} & \Omega_{j j}& \Omega_{j k} & 0 \\
		\Omega_{i k} & \Omega_{k j} & \Omega_{k k} & 0\\
		0 & 0 & 0 & 0
	\end{array}\right]$, and $\bm{\Omega}_u = \bm{\Omega}_{va}^+ - \bm{\Omega}_{va}$. Combining eqs.~\eqref{eq:KerStress}-\eqref{eq:KerStressPlus}, we get
	\begin{equation} \label{eq:Omega_Plus}
		\footnotesize
		\begin{aligned}
			&\left[ \begin{array}{cccc}
					\bm{p}_i & \bm{p}_j & \bm{p}_k & \bm{p}_u\\
					1 & 1 & 1 & 1
				\end{array} \right] \left(\bm{\Omega}_{va}^+ - \bm{\Omega}_{va} \right) \triangleq \bm{P}_u \bm{\Omega}_u = \bm{0}.
		\end{aligned}
	\end{equation}
Since $v_u$ is not collinear with any two of $v_i,v_j,v_k$, the vectors $\left(\bm{p}_i-\bm{p}_u\right)$, $\left(\bm{p}_j-\bm{p}_u\right)$ and $\left(\bm{p}_k-\bm{p}_u\right)$ are linearly independent, which implies that $\operatorname{rank}(\bm{P}_u) = 3$. Based on Lemma~\ref{lemm:RN}, we have $\operatorname{nullity}(\bm{P}_u) = 1 $. Accordingly, there exists a nonzero vector $\bm{\phi} = \left[\phi_1, \phi_2,\phi_3,\phi_4\right]^T$ satisfying $\bm{P}_u \bm{\phi} = \bm{0}$, 
	\begin{equation}\label{eq:phi}
		\small
		\begin{aligned}
			\phi_1 +\phi_2+\phi_3+\phi_4 =0 ,~
			\phi_1 \bm{p}_i + \phi_2 \bm{p}_j + \phi_3 \bm{p}_k + \phi_4 \bm{p}_u = \bm{0} .\\
		\end{aligned}
	\end{equation}
	Thus, combined with eq.\eqref{eq:Omega_Plus}, there exists a vector $\bm{v}$ satisfying $\bm{\Omega}_u = \bm{\phi} \bm{v}^T$. Since $\bm{\Omega}_u $ is a symmetric matrix, $\bm{\Omega}_u $ is the outer product of $ \bm{\phi} $.
	Accordingly, the matrix $\bm{\Omega}_u$ is designed as below.
	\begin{equation}\label{eq:Omega_u1}
        \footnotesize
		\bm{\Omega}_u \triangleq s \bm{\phi} \bm{\phi}^T = s \left[\begin{array}{cccc}
			\phi_1^2 & \phi_1\phi_2 & \phi_1\phi_3 & \phi_1\phi_4\\
			\phi_1 \phi_2 & \phi_2^2 & \phi_2\phi_3 & \phi_2\phi_4\\
			\phi_1 \phi_3 & \phi_2\phi_3 & \phi_3^2 & \phi_3\phi_4\\
			\phi_1 \phi_4 & \phi_2\phi_4 & \phi_3\phi_4 & \phi_4^2
		\end{array}\right],
	\end{equation}
	where $s$ is a scaling parameter. Thus, we have $\varpi_{iu} = -s\phi_1\phi_4,~\varpi_{ju} = -s\phi_2\phi_4,~\varpi_{ku} = -s\phi_3\phi_4$ and $\varpi_{uu} = s \phi_4^2$.
	Then, the corresponding stress matrix of the newly obtained framework $\left(\mathcal{G}^+,\bm{p}^+\right)$ is established as follows.
	\begin{equation}\label{eq:StressMatrixPlus}
        \footnotesize
		\bm{\Omega}^+ = \underbrace{\left[\begin{array}{c:c}
				\bm{\Omega} & \bm{0}_{n\times1}\\
				\hdashline
				\bm{0}_{1 \times n} & 0
			\end{array}\right]}_{\triangleq \bm{\Omega}_a} + \underbrace{\left[\begin{array}{c:c}
				\bm{0} & \bm{0}_{\left(n-3\right)\times4}\\
				\hdashline
				\bm{0}_{4\times\left(n-3\right)}& \bm{\Omega}_u
			\end{array}\right]}_{\triangleq \bm{\Omega}_b},
	\end{equation}
	where $\bm{\Omega}_u$ is described as eq.~\eqref{eq:Omega_u1} with $s>0$ and $\bm{\Omega}$ is the original stress matrix related to $\left(\mathcal{G},\bm{p}\right)$.

		A stress matrix for $\left(\mathcal{G}^+,\bm{p}^+\right)$ is presented in eq.~\eqref{eq:StressMatrixPlus}. By analyzing the properties of $\bm{\Omega}^+$, a comprehensive theoretical analysis is provided in Appendix~\ref{sec:app} to demonstrate the effectiveness of our proposed strategy in constructing affine frameworks. Please refer to Appendix~\ref{sec:app} for details on the proof of Theorem~\ref{theo:VertexAddition}.

				According to Theorem \ref{theo:VertexAddition}, the distributed nature of our proposed strategy is fully presented. Different from \cite{RAL_Xiao_2022}, the global positions are not necessary based on the definition of a equilibrium stress \eqref{eq:StressV_u}. The weight of an edge is determined by the relative position of the corresponding two endpoints when new edges are added to the framework, e.g., $\bm{\Omega}_u$ in eq.~\eqref{eq:Omega_u1}. Accordingly, the stress matrix can be updated based on local measurements. Moreover, only several vertices are involved in the generation and reconstruction of affine frameworks, which supports the distributed execution of our methods. Consequently, the construction of affine framework is solved by local measurements and communication, leading to the great application prospects in multi-robot systems. In fact, our algorithms can be integrated with most, if not all, distributed affine formation control schemes. A simulation is carried out in Section~\ref{sec:Simu_Ctrl} to verify the compatibility, where an example of affine formation tracking control law \cite{FITEE_Li_2022} is applied. 
				 
            Based on Theorem~\ref{theo:VertexAddition}, the newly generated graph $\mathcal{G}^+$, based on our proposed strategy, includes a 3-lateration graph as a spanning subgraph. Each newly added vertex has three neighbors as the parents, and the hierarchical structure is naturally built. Denote the hierarchy of vertices in the original affine framework $\left(\mathcal{G},\bm{p}\right)$ as $0$. Define the hierarchy $h(v_i)$ of a vertex $v_i$ as the length of its longest path from $v_i$ to the vertices in $\left(\mathcal{G},\bm{p}\right)$. To add a free $v_u$ to the existing graph, three existing vertices are selected to be parents, denoted by $v_i,v_j,v_k$. The hierarchy of vertex $v_u$ is defined as $h(v_u) = \max\left(h(v_i) ,h(v_j) ,h(v_k) \right)+ 1$. An example is also shown in Fig~\ref{fig:VertexAdd_3}, where the hierarchy of $v_q$ is $2$.
			%
			 Moreover, we can repeatedly apply Theorem~\ref{theo:VertexAddition} to add $q$ vertices to $\left(\mathcal{G},\bm{p}\right)$. The algorithm is shown in Algorithm~\ref{Algo:VertexAdd}.
			With Algorithm~\ref{Algo:VertexAdd}, we can easily grow the original affine framework to a larger scale with a hierarchical structure, called as the hierarchical affine framework (HAF). In the HAF $(\mathcal{G}^+,\bm{p}^+)$, each vertex $v_u$ with $h(v_u)>0 $ has three parents, so that the graph $\mathcal{G}^+$ contains a 3-lateration graph as a spanning subgraph. The well-structured hierarchical affine framework facilitates the subsequent research on edge and vertex deletion issues. 
				
				\begin{algorithm} \label{Algo:VertexAdd}
					\small
					\caption{\textbf{ Vertex Addition Algorithm in $\mathbb{R}^2$}}
					\KwIn{An affine framework $\left(\mathcal{G},\bm{p}\right)$ with a stress matrix $\bm{\Omega}$, $q$ new vertices $v_{a1},\cdots,v_{aq}$ with $\bm{p}_{a1},\cdots,\bm{p}_{aq}$.}
					\KwOut{An augmented framework $\left(\mathcal{G}_{add},\bm{p}_{add} \right)$ with a new stress matrix $\bm{\Omega}_{add}$.}
					\SetKwFunction{MyFuns} {MyFuns}
					\SetKwProg{Fn}{Function}{:}{}
					Set the positive scaling parameter $s $;\\
					\For{$\ell=1,~\cdots,~q$}{
						$v_u \leftarrow v_{a\ell}$, $\bm{p}_u \leftarrow \bm{p}_{a\ell}$;\\
						Choose three perceived vertices $v_i$, $v_j$ and $v_k$, where $v_i,~v_j,~v_k \in \mathcal{V}$; \\
						$\mathcal{G}_{add}= \left(\mathcal{V}_{add},\mathcal{E}_{add}\right) $ where $\mathcal{V}_{add} \gets \mathcal{V}\cup\left\{ v_u\right\}$, $\mathcal{E}_{add} \gets \mathcal{E}\cup\left\{ e_{iu},~e_{ju},~e_{ku}\right\}$;\\
						$\bm{\Omega}_u \gets s \bm{\phi} \bm{\phi}^T$ based on Eq.~\eqref{eq:Omega_u1}.
						$\bm{p}_{add} \gets \left[\bm{p}^T,\bm{p}_u^T\right]^T$ ; \\ 
						$\bm{\Omega}_{add} \gets \bm{\Omega}_a + \bm{\Omega}_b$ defined in Eq.~\eqref{eq:StressMatrixPlus};\\
						$\mathcal{G} \leftarrow \mathcal{G}_{add}$, $\bm{p} \leftarrow \bm{p}_{add}$, $\bm{\Omega} \leftarrow \bm{\Omega}_{add}$;
					}
					Return $\left(\mathcal{G}_{add},\bm{p}_{add} \right)$ and $\bm{\Omega}_{add}$.
				\end{algorithm}

                It is evident that our proposed HAF growing strategy in Theorem~\ref{theo:VertexAddition}  is incremental, inherently possessing a relatively low time complexity. When adding the new vertices, we need few computational resources to reconstruct the affine framework, because only low dimensional matrix calculations are required in Algorithm~\ref{Algo:VertexAdd}. Providing that there are $q$ new vertices joining the framework, the time complexity can be calculated as $O(qn) $. 
                Therefore, in practice, our algorithm is portable enough to be applied to robots with limited computing power to deal with a variety of collaborative tasks in time.
				
			\begin{rema}
				The proposed vertex addition strategy in this paper draws inspiration from HCs, which are widely employed to construct minimally rigid graphs. However, our contribution extends beyond existing studies by not only investigating the preservation of universal rigidity during vertex addition but also introducing a stress matrix update strategy to ensure structural consistency. Moreover, in contrast to existing literature focusing on the super-stability of tensegrity frameworks \cite{TCYB_Yang_2019}, this paper constructs HAFs in general position based on $(d+1)$-lateration graphs, which inherently accommodates the dimensional scalability of affine formations. Furthermore, the hierarchical structure of the framework enables systematic exploration of edge and vertex deletion strategies in subsequent analyses.
			\end{rema}

			\subsection{Edge Deletion} \label{sec:EdgeDeletion}
			Note that removing an edge from the HAF $\left(\mathcal{G},\bm{p}\right)$ is equivalent to adjusting the related weight to zero, without damaging the universal rigidity and affine localizability. Using the parameter $s$ in eq.~\eqref{eq:Omega_u1}, a natural idea is generated to eliminate a certain edge, as detailed below.
			
			Take the removal of edge $e_{jk}$ between vertex $v_j$ and $v_k$ in $\left(\mathcal{G},\bm{p}\right)$ as an example. If we select two suitable vertices $v_i$ and $v_q$ in $\mathcal{N}_j^p \cap \mathcal{N}_k^p$, 
			the vertices $v_i,~v_j$ and $v_k$ can form a triangle, and the regions of the fourth vertex $v_q$ lies in can be labeled by $\mathfrak{a},\cdots,\mathfrak{g}$, as shown in Fig.~\ref{fig:FourVer}. 
			\begin{figure}[htbp]
				\centering
				\includegraphics[scale=0.12]{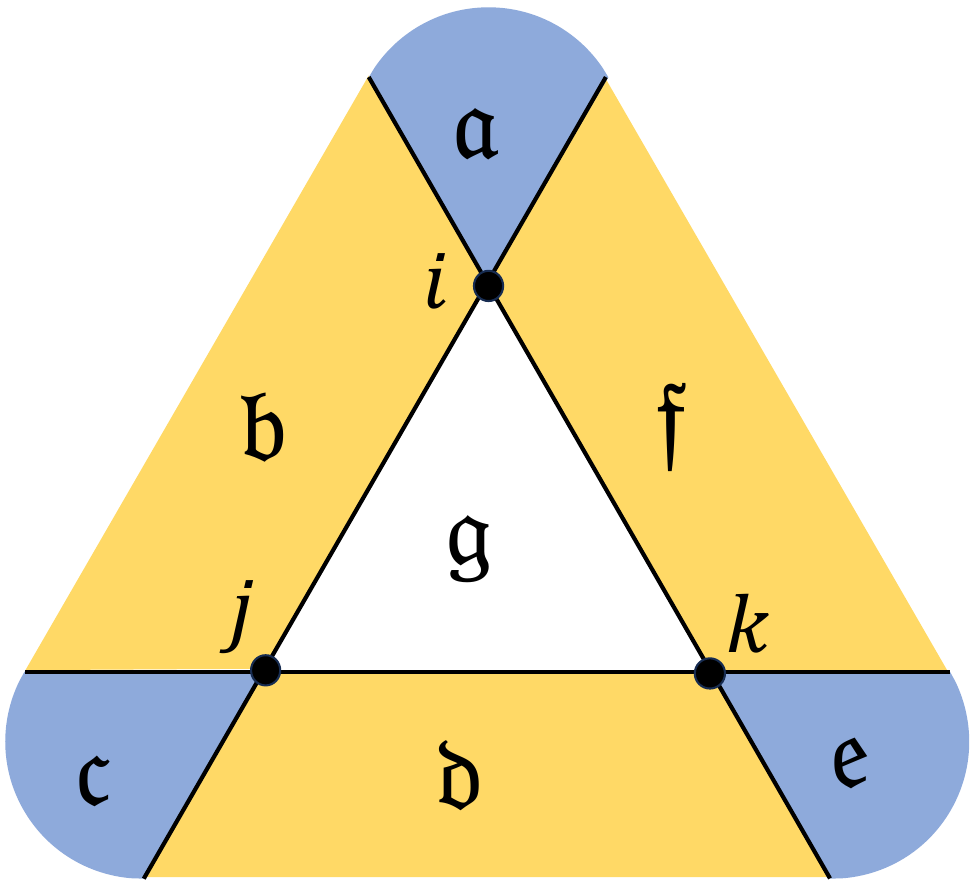}
				\caption{Possible relative configurations of four vertices in general position in $\mathbb{R}^2$.}
				\label{fig:FourVer}
			\end{figure}
			For four non-collinear vertices, we get a non-zero vector $\bm{\phi} $ satisfying $\left[ \begin{array}{cccc}
				\bm{p}_i & \bm{p}_j & \bm{p}_k & \bm{p}_q\\
				1 & 1 & 1 & 1
			\end{array} \right]\bm{\phi} =\bm{0}$, where $\bm{\phi} = \left[\phi_1,\cdots,\phi_4\right]^T$. In the original framework $\left(\mathcal{G},\bm{p}\right)$, the stress matrix $\bm{\Omega}$ is indexed based on the elements in $\mathcal{V}$, which are reordered such that $v_i,~v_j,~v_k,~v_q$ are listed at the end. The entries involved in these four vertices in the stress matrix $\bm{\Omega}$ can be described as below.
			\begin{equation}
				\footnotesize
				\bm{\Omega}_v=\left[\begin{array}{cccc}
					\Omega_{i i}& \Omega_{i j}& \Omega_{i k} & \Omega_{i q}\\
					\Omega_{i j}& \Omega_{j j}& \Omega_{j k} & \Omega_{j q} \\
					\Omega_{i k} & \Omega_{j k} & \Omega_{k k} & \Omega_{k q} \\
					\Omega_{i q} & \Omega_{j q} & \Omega_{k q} & \Omega_{q q}
				\end{array}\right].
			\end{equation}
			After eliminating $e_{jk}$, the framework is described by $\left(\mathcal{G}^-,\bm{p}\right)$, and the corresponding block matrix is established as follows.
			\begin{equation}
				\footnotesize
				\bm{\Omega}_v^-=\left[\begin{array}{cccc}
					\Omega_{i i}^-& \Omega_{i j}^-& \Omega_{i k}^- & \Omega_{i q}^-\\
					\Omega_{i j}^-& \Omega_{j j}^- &  0 & \Omega_{j q}^- \\
					\Omega_{i k}^- & 0 & \Omega_{k k}^- & \Omega_{k q}^- \\
					\Omega_{i q}^- & \Omega_{j q}^- & \Omega_{k q}^- & \Omega_{q q}^-
				\end{array}\right].
			\end{equation}
			
			Inspired by Eqs.~\eqref{eq:Omega_Plus} and \eqref{eq:Omega_u1}, we have 
			\begin{equation}\label{eq:ED_equa}
                \footnotesize
					\left[ \begin{array}{cccc}
						\bm{p}_i & \bm{p}_j & \bm{p}_k & \bm{p}_q\\
						1 & 1 & 1 & 1
					\end{array} \right] \left( \bm{\Omega}_v^- -\bm{\Omega}_v \right) =\bm{0}.
			\end{equation}
			
			Similar to eq.~\eqref{eq:Omega_u1}, let $\bm{\Omega}_u  \triangleq \bm{\Omega}_v^- -\bm{\Omega}_v = s_{ed} \bm{\phi}\bm{\phi}^T$ so that
			\begin{equation}\label{eq:omega_ed}
				\footnotesize
				\begin{aligned}
					&\bm{\Omega}_v^- = \bm{\Omega}_v + \bm{\Omega}_u \\
						=&
						\bm{\Omega}_v +\dfrac{1}{\varpi_{qq}} 
						\left[\begin{array}{cccc}
							\varpi_{i q}^2 & \varpi_{i q}\varpi_{j q} & \varpi_{i q}\varpi_{k q} & -\varpi_{i q}\varpi_{qq}\\
							\varpi_{i q}\varpi_{j q} & \varpi_{j q}^2  & \varpi_{j q}\varpi_{k q} & -\varpi_{j q}\varpi_{qq}\\
							\varpi_{i q}\varpi_{k q} & \varpi_{j q}\varpi_{k q} &  \varpi_{k q}^2 & -\varpi_{k q}\varpi_{qq}\\
							-\varpi_{i q}\varpi_{qq} & -\varpi_{j q}\varpi_{qq} & -\varpi_{k q}\varpi_{qq} & \varpi_{qq}^2
						\end{array}\right] 
					\end{aligned}
				\end{equation}
				where $\varpi_{iq} = -s_{ed} \phi_1\phi_4,~\varpi_{jq} = -s_{ed} \phi_2\phi_4,~\varpi_{kq} = -s_{ed} \phi_3\phi_4$ and $\varpi_{qq} = s_{ed} \phi_4^2$. To eliminate $e_{jk}$, the following equation is established:
				\begin{equation}\label{eq:ED_s1_2D}
                \footnotesize
					\begin{aligned}
						\Omega_{jk} + \dfrac{\varpi_{jq}\varpi_{kq}}{\varpi_{qq}} =0~~~
						\Rightarrow  s_{ed} =-\dfrac{\Omega_{jk}}{\phi_2 \phi_3}.
					\end{aligned}
				\end{equation}
				To maintain the universal rigidity and affine localizability of the obtained framework $\left(\mathcal{G}^-,\bm{p}\right)$, $s_{ed} >0$ 
				is required based on Theorem~\ref{theo:VertexAddition}, implying that
				\begin{equation}\label{eq:ED_s2}
                \footnotesize
					\Omega_{jk} \varpi_{jq}\varpi_{kq} <0.
				\end{equation}
				
				 With a specific $\Omega_{jk}$, the sign of $s_{ed}$ is determined by the signs of $\varpi_{jq}$ and $\varpi_{kq}$, which depend on the relative positions among $v_i,~v_j,~v_k$ and $v_q$. In detail, since $v_q$ is not collinear with any two of $v_i,~v_j$ and $v_k$, the vectors $\bm{p}_q-\bm{p}_i$ and $\bm{p}_q-\bm{p}_j$ can serve as a set of bases in $\mathbb{R}^2$. That is, 
				\begin{equation}
                \footnotesize
					\bm{p}_q-\bm{p}_k = k_1 \left( \bm{p}_q-\bm{p}_i\right) + k_2 \left( \bm{p}_q-\bm{p}_j\right),
				\end{equation}
				where $k_1$ and $k_2$ are two real parameters that are not simultaneously zero. Combining with eq.~\eqref{eq:StressV_u}, we have 
				\begin{equation}
                \footnotesize
					\left( \varpi_{iq} + \varpi_{kq}k_1\right) \left( \bm{p}_q-\bm{p}_i\right) + \left( \varpi_{jq} + \varpi_{kq}k_2\right) \left( \bm{p}_q-\bm{p}_j\right) = \bm{0}\notag.
				\end{equation}
				Accordingly, the following equations are deduced.
				\begin{equation}
                \footnotesize
					\varpi_{iq} = -k_1\varpi_{kq}, ~~\varpi_{jq} = -k_2\varpi_{kq},
				\end{equation}
				which mean $\varpi_{iq}\varpi_{jq} = k_1 k_2 \varpi_{kq}^2 $, $\varpi_{iq}\varpi_{kq} = -k_1 \varpi_{kq}^2 $ and $\varpi_{jq}\varpi_{kq} =-k_2  \varpi_{kq}^2 $. When $v_q$ lies in different regions, the signs of $k_1$, $k_2$ and the weights are listed in Table.~\ref{tab:Symbol_2D}, which serves as a guideline to determine whether the edge $e_{jk}$ can be deleted without damaging universal rigidity and affine localizability. For example, if the weight of $e_{jk}$ is positive, i.e., $\Omega_{jk}<0$, a vertex $v_q$ in $\mathcal{G}$ lying in Region $\mathfrak{a},~\mathfrak{d}$ or $\mathfrak{g}$ should be chosen to ensure a positive $s_{ed}$ based on \eqref{eq:ED_s2} and Table.~\ref{tab:Symbol_2D}. However, it is also possible that there are no vertices lying in these regions. Accordingly, based on whether $e_{jk}$ can be eliminated through the rearrangement of edge weights in $\mathcal{G}$, there are three possible cases presented as follows.
				
				\begin{table}[]
					\centering
					\caption{The sign of the weight of the edges connected to $v_q$.}
					\begin{tabular}{cccccccc}
						\toprule
						\diagbox{}{Regions}  & $\mathfrak{a}$ & $\mathfrak{b}$  & $\mathfrak{c}$  & $\mathfrak{d}$  & $\mathfrak{e}$  & $\mathfrak{f}$  & $\mathfrak{g}$\\
						\midrule
						$k_1$  & $+$  & $+$  & $-$  & $+$  & $+$  & $-$  & $-$ \\
						$k_2$  & $-$  & $+$  & $+$  & $-$  & $+$  & $+$  & $-$ \\
						$\varpi_{iq}$  & $+$  & $+$  & $-$  & $-$  & $-$  & $+$  & $+$ \\
						$\varpi_{jq}$  & $-$  & $+$  & $+$  & $+$  & $-$  & $-$  & $+$ \\
						$\varpi_{kq}$  & $-$  & $-$  & $-$  & $+$  & $+$  & $+$  & $+$ \\
						\bottomrule
					\end{tabular}
					\label{tab:Symbol_2D}
				\end{table}
				
				\textbf{Case (1): There are two vertices $v_i,v_q \in \mathcal{N}_j^p \cap \mathcal{N}_k^p$ to ensure that  $s_{ed} $ in \eqref{eq:ED_s1_2D} is positive.} 
				
				In this case, we can delete the edge $e_{jk}$ directly to obtain a new framework $\left(\mathcal{G}_{ed},\bm{p}_{ed}\right)$, where $\mathcal{G}_{ed} = \left(\mathcal{V},\mathcal{E}\setminus \left\{e_{jk}\right\}\right)$, $\bm{p}_{ed} =\bm{p}$.
				The stress matrix $\bm{\Omega}_{ed}$ for $\left(\mathcal{G}_{ed},\bm{p}_{ed}\right)$ is shown as below, 
				\begin{equation}\label{eq:StressMatrixMinus}
                \footnotesize
					\bm{\Omega}_{ed} = \bm{\Omega} + \left[\begin{array}{c:c}
						\bm{0}_{(n-4)\times (n-4)} & \bm{0}_{(n-4)\times 4}\\
						\hdashline
						\bm{0}_{4 \times (n-4)} & \bm{\Omega}_u
					\end{array}\right],
				\end{equation}
                where $\bm{\Omega}_u = -\dfrac{\Omega_{jk}}{\phi_2 \phi_3}\bm{\phi}\bm{\phi}^T = s_{ed} \bm{\phi}\bm{\phi}^T$.  Obviously, eq.~\eqref{eq:StressMatrixMinus} and \eqref{eq:StressMatrixPlus} have similar forms and meanings. As a direct extension of Theorem~\ref{theo:VertexAddition}, we have the following corollary.
				\begin{coro}\label{coro:ED}
				Consider a HAF $\left(\mathcal{G},\bm{p}\right)$ containing an edge $e_{jk}$ with the weight $-\Omega_{jk}$. If there are proper vertices $v_i$ and $v_q$ in $\mathcal{N}_j^p \cap \mathcal{N}_k^p$ to ensure $s_{ed}>0$ shown in eq.~\eqref{eq:ED_s1}, the obtained framework $\left(\mathcal{G}_{ed},\bm{p}_{ed}\right)$ with the stress matrix \eqref{eq:StressMatrixMinus} is universally rigid and affinely localizable.
				\end{coro}
					
				With the help of Corollary~\ref{coro:ED}, we can rearrange the stress to drive the weight of $e_{jk}$ to zero, which is equivalent to removing the edge from the original framework. 
				The key point is to find the eligible vertices in $\mathcal{N}_j^p \cap \mathcal{N}_k^p$. 
				In practice, the number of perceived neighbors is limited so that the local traversal is usually acceptable. For higher efficiency, we can also set a time upper limit on the search for suitable neighbors. If we cannot find proper neighbors within the specified time, a relay strategy is activated. That is, we can eliminate a specific edge by adding new vertices, which are called as ``temporary vertices". 
				
				\textbf{Case (2): There are no vertices in  $ \mathcal{N}_j^p \cap \mathcal{N}_k^p $ to guarantee positive $s_{ed}$ , but $ \mathcal{N}_j^p \cap \mathcal{N}_k^p \ne \emptyset$.} 
				
				In this case, there are no suitable vertices to ensure the positive scaling parameter $s_{ed} $. To complete our edge deletion algorithm, we can eliminate specific edges by adding a new temporary vertex. The following assumption describes the criteria for selecting the temporary vertices.
				
				\begin{assu}\label{assu:Relay_v1}
					Suppose that the vertices $v_i \in \mathcal{N}_j^p \cap \mathcal{N}_k^p $ and a temporary vertex $v_{r}$ meet the following conditions.
					\begin{itemize}
						\item[\labelitemi] $v_i,v_j,v_k \in \mathcal{N}_r^p$, and the set $\left\{\bm{p}_i,~\bm{p}_j,~\bm{p}_k,~\bm{p}_r\right\}$ is in general position;
						\item[\labelitemi] If the weight $\varpi_{jk} = -\Omega_{jk} > 0$, $v_r$ lies in Region $\mathfrak{a},~\mathfrak{d}$ or $\mathfrak{g}$ of a triangle formed by $v_i$, $v_j$ and $v_k$. If the weight $\varpi_{jk} = -\Omega_{jk} < 0$, $v_r$ lies in Region $\mathfrak{b},~\mathfrak{c},~\mathfrak{e}$ or $\mathfrak{f}$, as shown in Fig.~\ref{fig:FourVer}.
					\end{itemize}
				\end{assu}
					
				Under Assumption~\ref{assu:Relay_v1}, we can reconstruct a framework without edge $e_{jk}$ as follows. Remove $e_{jk}$ in the original affine framework $(\mathcal{G},\bm{p})$, and then add $v_r$ along with three edges  $e_{ir},~e_{jr}$ and $e_{kr}$ based on Theorem~\ref{theo:VertexAddition}. Accordingly, a new framework $(\mathcal{G}_{ed},\bm{p}_{ed})$ is built, where $\mathcal{G}_{ed} = \left(\mathcal{V}\cup \left\{v_r\right\},\mathcal{E}\cup \left\{e_{ir},e_{jr},e_{kr}\right\}\setminus\left\{e_{jk}\right\} \right)$, $\bm{p}_{ed} = \left[\bm{p}^T ,\bm{p}_r^T\right]^T$. The appropriate stress exist and is presented in \eqref{eq:ED_reply} to guarantee that $(\mathcal{G}_{ed},\bm{p}_{ed})$ is universally rigid and affinely localizable. 
						\begin{equation}\label{eq:ED_reply}
                            \footnotesize
							\bm{\Omega}_{ed} = \left[\begin{array}{c:c}
								\bm{\Omega} & \bm{0}_{n\times1}\\
								\hdashline
								\bm{0}_{1 \times n} & 0
							\end{array}\right] + \left[\begin{array}{c:c}
								\bm{0}_{(n-3)\times (n-3)} & \bm{0}_{(n-3)\times 4}\\
								\hdashline
								\bm{0}_{4 \times (n-3)} & \bm{\Omega}_u
							\end{array}\right].
						\end{equation}
						where ${s}_{ed} = -\dfrac{\Omega_{jk}}{\bar{\phi}_2 \bar{\phi}_3}$ and $\bm{\Omega}_u ={s}_{ed}  \bm{\bar{\phi}}\bm{\bar{\phi}}^T$. The vector $\bm{\bar{\phi}}=\left[ \bar{\phi}_1,~\bar{\phi}_2,~\bar{\phi}_3,~\bar{\phi}_4\right]^T$ satisfies  $\left[ \begin{array}{cccc}
							\bm{p}_{i} & \bm{p}_j & \bm{p}_k & \bm{p}_{r}\\
							1 & 1 & 1 & 1 
						\end{array}\right]\bm{\bar{\phi}} =\bm{0}$.

					\textbf{Case (3): $\mathcal{N}_j^p \cap \mathcal{N}_k^p = \emptyset$.} 
					
					In this case, there is no vertex in $\mathcal{N}_j^p \cap \mathcal{N}_k^p $ such that two temporary vertices, $v_{r_1}$ and $v_{r_2}$, are needed with the following assumption.
					\begin{assu} 
						Suppose that the vertices $v_{r_1}$ and $v_{r_2}$ meet the following conditions.
						\begin{itemize}
							\item[\labelitemi] Any two vertices of $v_j$, $v_k$, $v_{r_1}$ and $v_{r_2}$ can measure their relative positions to each other. The set $\left\{\bm{p}_j,~\bm{p}_k,~\bm{p}_{r_1},~\bm{p}_{r_2}\right\}$ is in general position;
							\item[\labelitemi] If the weight $\varpi_{jk} = -\Omega_{jk} > 0$, $v_{r_2}$ lies in Region $\mathfrak{a},~\mathfrak{d}$ or $\mathfrak{g}$ of a triangle formed by $v_{r_1}$, $v_j$ and $v_k$. If the weight $\varpi_{jk} = -\Omega_{jk} < 0$, $v_{r_2}$ lies in Region $\mathfrak{b},~\mathfrak{c},~\mathfrak{e}$ or $\mathfrak{f}$, as shown in Fig.~\ref{fig:FourVer}.
						\end{itemize}
					\end{assu}
					
					Choose $v_j$ and $v_k$ to be the neighbors of $v_{r_1}$, and add the vertex $v_{r_1}$ to the original framework $(\mathcal{G},\bm{p})$ by using Algorithm~\ref{Algo:VertexAdd}. Then, we merge $v_{r_2}$ to the framework with three edges, $e_{r_1 r_2},~e_{jr_2}$ and $e_{kr_2}$, based on Theorem \ref{theo:VertexAddition} again. We can set the scaling parameter as ${s}_{ed} = -\dfrac{\Omega_{jk}}{\bar{\phi}_2 \bar{\phi}_3}$, where $\left[ \begin{array}{cccc}
						\bm{p}_{r_1} & \bm{p}_j & \bm{p}_k & \bm{p}_{r_2}\\
						1 & 1 & 1 & 1 
					\end{array}\right] \left[ \bar{\phi}_1,~\bar{\phi}_2,~\bar{\phi}_3,~\bar{\phi}_4\right]^T =\bm{0}$. Consequently, the framework $(\mathcal{G}_{ed},\bm{p}_{ed})$ is constructed, where $\mathcal{G}_{ed} = \left(\mathcal{V}\cup \left\{v_{r_1},v_{r_2}\right\},\mathcal{E}\cup \left\{e_{\sim r_1},e_{jr_2},e_{kr_2},e_{r_1r_2}\right\}\setminus\left\{e_{jk}\right\} \right)$, $\bm{p}_{ed} = \left[\bm{p}^T ,\bm{p}_{r_1}^T,\bm{p}_{r_2}^T\right]^T$. In \textbf{Case (3)}, the vertex $v_{r_1}$ plays the role similar to $v_i$ in \textbf{Case (2)}. Thus, the similar steps and analysis is omitted.
				
				The edge deletion strategies proposed in different cases share the same fundamental idea presented in Corollary~\ref{coro:ED}, with the differences lying in the selection of neighbor vertices and the design of scaling parameter $s_{ed}$.
				To explain our proposed edge deletion strategy more clearly, an example is presented as below.

				\begin{exam}[Edge Deletion] To remove $e_{23}$ in Fig.~\ref{fig:EdgeDeletion_O2} and ~\ref{fig:EdgeDeletion_O1}, two strategies mentioned above are executed, respectively. The coordinates of the five vertices are $\left[0,1\right]^T,\left[1,0\right]^T,\left[0,-1\right]^T,\left[-1,0\right]^T,\left[1,-1\right]^T$. Suppose that any two vertices can be connected. In Fig.~\ref{fig:EdgeDeletion_O2}, $j=2$, $k=3$ so that $\mathcal{N}_j^p \cap \mathcal{N}_k^p = \{1,4,5\}$. To delete the edge $e_{23}$, let $i=1$ and $q=5$. We have $\bm{\phi} = \frac{1}{\sqrt{10}}\left[1,-2,-1,2\right]^T$ so that $s_{ed} = -\dfrac{\Omega_{23}}{\phi_2  \phi_3 } = 4  >0$. Thus, Corollary~\ref{coro:ED} is applied and the obtained affine  framework is shown in Fig.~\ref{fig:EdgeDeletion_D2}. Nevertheless, in Fig.~\ref{fig:EdgeDeletion_O1}, $\mathcal{N}_j^p \cap \mathcal{N}_k^p = \{1,4\} \ne \emptyset$. If $i=1$ and $q=4$, we can calculate $s_{ed}= - 4 < 0$, which implies that there is no suitable vertices to ensure a positive scaling parameter,  indicating that a temporary vertex is necessary. Thus, following \textbf{Case (2)}, choose a temporary vertex $v_5$ with $\bm{p}_5 = [1,-1]^T$, lying in Region $\mathfrak{d}$. Three edges, $e_{15},~e_{25},~e_{35}$, are added to $\left(\mathcal{G},\bm{p}\right)$ with ${s}_{ed}=2$, resulting $\varpi_{23} = 0$ after the operation. The obtained affine framework is also shown in Fig.~\ref{fig:EdgeDeletion_D2}.
					
					\begin{figure}
						\centering
						\subfigure[]{\includegraphics[scale=0.18]{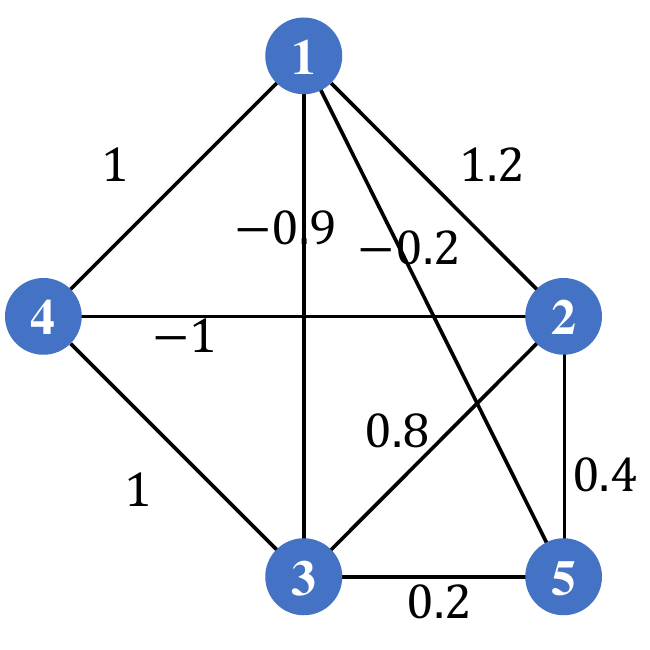} 
							\label{fig:EdgeDeletion_O2}
						}
						\subfigure[]{\includegraphics[scale=0.18]{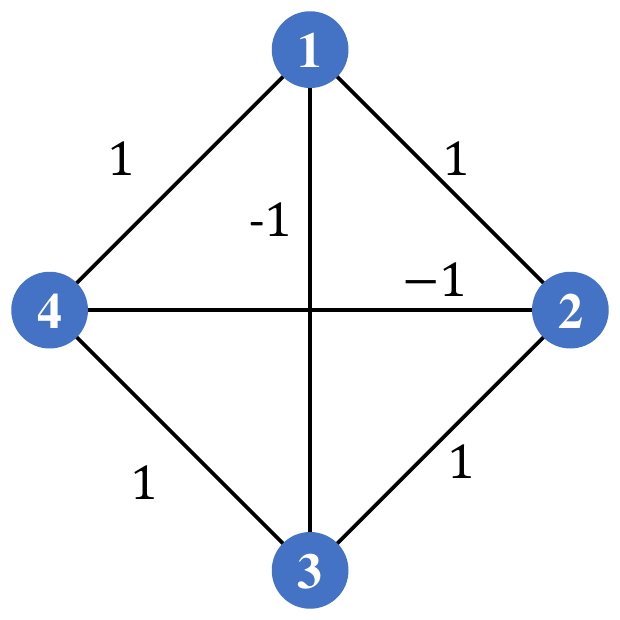} 
							\label{fig:EdgeDeletion_O1}
						}
						\subfigure[] {\includegraphics[scale=0.18]{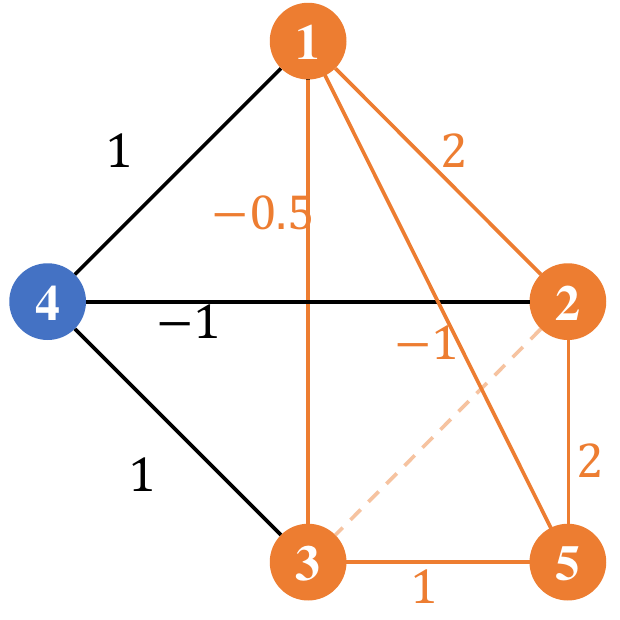}
							\label{fig:EdgeDeletion_D2}
						}
						\caption{An example of edge deletion. (a) Original affine framework $\left(\mathcal{G},\bm{p}\right)$ with five vertices. 
                        (b) Original affine framework $\left(\mathcal{G},\bm{p}\right)$ with four vertices. 
                        (c) The obtained framework $\left(\mathcal{G}_{ed},\bm{p}_{ed}\right)$ without the edge $e_{23}$.}
						\label{fig:EdgeDeletion}
					\end{figure}
					
				\end{exam}
				
				\begin{rema}
					The edge deletion strategy using temporary vertices in \textbf{Case (2) and (3)} is directly inspired by Theorem~\ref{theo:VertexAddition}. To ensure a positive scaling parameter, the positions of temporary vertices need to be carefully selected, taking Table \ref{tab:Symbol_2D} as a reference.
                    In this paper, the temporary vertex is assumed to be on standby and respond immediately. Nevertheless, in a real robot system, more technical details need to be designed to guarantee the smooth operation of temporary robots, which is beyond the scope of this paper.
				\end{rema}

			\subsection{Vertex Deletion }\label{sec:VertexDeparture}
			In this subsection, we consider an inverse operation of ``vertex addition" presented in Section~\ref{sec:VertexAddition}, the removal of a vertex from a HAF $\left(\mathcal{G},\bm{p}\right)$, constructed by Algorithm~\ref{Algo:VertexAdd}. 
            %
			According to whether a vertex in $\left(\mathcal{G},\bm{p}\right)$ has a child or not, the vertices in the graph can be classified into two categories: (a) \emph{outer node}, a vertex with no children (e.g., $v_q$ in Fig.~\ref{fig:VertexAdd}); (b) \emph{inner node}, a vertex with at least one child (e.g., $v_u$ in Fig.~\ref{fig:VertexAdd}). Next, we discuss the removal of the outer and inner nodes from $\left(\mathcal{G},\bm{p}\right)$, respectively. 
			
			\textbf{Case 1 (Deletion of an Outer Node):} 
			%
			We first consider the case with an outer node. Consider a HAF $\left(\mathcal{G},\bm{p}\right)$ with $n~\left(n \ge 5\right)$ vertices, containing an outer node $v_u$ with $h(v_u)>0$. The vertex $v_u$ to be deleted has three parents, namely $v_i,v_j,v_k$. 
            Accordingly, the stress matrix for $\left(\mathcal{G},\bm{p}\right)$ can be described as follows.
			\begin{equation}
				\footnotesize
				\begin{aligned}
					\bm{\Omega} 
						=& \left[\begin{array}{c:ccc:c}
							\bm{\Omega}_{vd}^{B1}& & \bm{\Omega}_{vd}^{B2} &  & \bm{0}_{(n-4)\times 1}\\
							\hdashline
							&\Omega_{i i} & \Omega_{ij} & \Omega_{i k} & \Omega_{i u}\\
							\left(\bm{\Omega}_{vd}^{B2}\right)^T& \Omega_{i j}& \Omega_{jj} & \Omega_{j k} & \Omega_{j u}\\
							& \Omega_{i k}& \Omega_{jk} & \Omega_{k k} & \Omega_{k u}\\
							\hdashline
							\bm{0}_{1\times (n-4)} &\Omega_{i u} & \Omega_{j u} &\Omega_{k u} & \Omega_{u u}
						\end{array}\right],\\
					\end{aligned}
				\end{equation}
				%
				Notably, $\Omega_{u u}>0$. Based on the definition of an equilibrium stress, we have the following equation for $v_u$,
				\begin{equation}\label{eq:VDequal}
					\footnotesize
					\begin{aligned}
						&\Omega_{i u}\left( \bm{p}_i - \bm{p}_u \right) + \Omega_{j u}\left( \bm{p}_j - \bm{p}_u \right) + \Omega_{k u}\left( \bm{p}_k - \bm{p}_u \right) =\bm{0}\\
						\Rightarrow & \underbrace{\left[\begin{array}{cccc}
								\bm{p}_i & \bm{p}_j & \bm{p}_k & \bm{p}_u\\
								1 & 1 & 1 & 1
							\end{array}\right]}_{\triangleq \bm{P}_u} \underbrace{\left[\begin{array}{cccc}
								\Omega_{iu} &
								\Omega_{ju} &
								\Omega_{ku} &
								\Omega_{uu} 
							\end{array}\right]^T}_{\triangleq \bm{\Omega}_\phi} = \bm{0}.
					\end{aligned}
				\end{equation}
				Since $\left(\mathcal{G},\bm{p}\right)$ is in general position, it is deduced that $\operatorname{rank}\left(\bm{P}_u\right) =3$ and $\operatorname{nullity}\left(\bm{P}_u\right) = 1$. 

            After removing vertex $v_u$ and all edges connected to $v_u$, we can obtain a framework $\left(\mathcal{G}_{vd},\bm{p}_{vd}\right)$. The stress matrix $\bm{\Omega}_{vd}$ for the framework $\left(\mathcal{G}_{vd},\bm{p}_{vd}\right)$ is defined as below. 
                \begin{equation}\label{eq:StressMatrixVertexDel_2}
                \footnotesize
					\left[\begin{array}{cc}
						\bm{\Omega}_{vd} & \bm{0} \\
						\bm{0} & 0
					\end{array}\right]   = \bm{\Omega} - \left[\begin{array}{cc}
						\bm{0} & \bm{0} \\
						\bm{0} & \bm{\Omega}_{vd}^u
					\end{array}\right] ,
				\end{equation}
                where $\bm{\Omega}_{vd}^{u} = \dfrac{1}{\Omega_{u u} }\bm{\Omega}_\phi \bm{\Omega}_\phi^T \succeq 0$. 
		%
				Thus,
				\begin{equation}\label{eq:StressMatrixVertexDel}
					\tiny
					\bm{\Omega}_{vd} = \left[\begin{array}{c:ccc}
						\bm{\Omega}_{vd}^{B1} & & \bm{\Omega}_{vd}^{B2} &\\
						\hdashline
						& \Omega_{i i} - \dfrac{\Omega_{i u}^2}{\Omega_{u u}} &\Omega_{i j} - \dfrac{\Omega_{i u} \Omega_{j u}}{\Omega_{u u}} & \Omega_{i k} -\dfrac{\Omega_{i u} \Omega_{k u}}{\Omega_{u u}} \\
						\left(\bm{\Omega}_{vd}^{B2} \right)^T& \Omega_{i j} -\dfrac{\Omega_{i u} \Omega_{j u}}{\Omega_{u u}} & \Omega_{jj} -\dfrac{ \Omega_{j u}^2}{\Omega_{u u}} & \Omega_{jk} -\dfrac{\Omega_{j u} \Omega_{k u}}{\Omega_{u u}} \\
						& \Omega_{i k} -\dfrac{\Omega_{i u} \Omega_{k u}}{\Omega_{u u}} & \Omega_{jk} -\dfrac{\Omega_{j u} \Omega_{k u}}{\Omega_{u u}} & \Omega_{kk} -\dfrac{\Omega_{k u}^2}{\Omega_{u u}} 
					\end{array}\right].
				\end{equation}
			where $\bm{\Omega}_{vd} \in \mathbb{R}^{(n-1)\times(n-1)}$. We establish the following theorem to prove $\left(\mathcal{G}_{vd},\bm{p}_{vd}\right)$ is universally rigid and affinely localizable.
			
			\begin{theo}\label{theo:VD}
                
				Consider a HAF $\left(\mathcal{G},\bm{p}\right)$ with $n~\left(n \ge 5\right)$ vertices. 
                If $v_u$ is an outer node and $h\left(v_u\right)>0$, we can obtain a universally rigid and affinely localizable framework $\left(\mathcal{G}_{vd},\bm{p}_{vd}\right)$ with a stress matrix $\bm{\Omega}_{vd}$ shown in \eqref{eq:StressMatrixVertexDel} after removing vertex $v_u$ and all edges connected to $v_u$. 
			\end{theo}
			
			\begin{proof}
                For an outer node $v_u$ in a HAF, we have $\mathcal{G}_{vd} = (\mathcal{V}\setminus \{v_u\},\mathcal{E}\setminus \{e_{iu},e_{ju},e_{ku}\})$ after removing vertex $v_u$ and all edges connected to $v_u$. Obviously, the obtained framework $\left(\mathcal{G}_{vd},\bm{p}_{vd}\right)$ is in general position. To prove  $\left(\mathcal{G}_{vd},\bm{p}_{vd}\right)$ is universally rigid and affinely localizable, we need to clarify that $\bm{\Omega}_{vd} \succeq 0$, $\operatorname{rank}\left(\bm{\Omega}_{vd}\right) = n-4$ and the block matrix describing the stress among followers $\bm{\Omega}_{ff}^{vd} \succ 0$. The following proof has a similar sketch with Theorem~\ref{theo:VertexAddition} in Appendix~\ref{sec:app}. 
				
                First, $\operatorname{rank}\left(\bm{\Omega}_{vd}\right) = n - 4$ should be explained by analyzing the null space of $\bm{\Omega}_{vd}$. Based on the fact that $\bm{P}_u\bm{\Omega}_{vd}^{u} =\bm{0}$ and $\operatorname{null}(\bm{\Omega})= \operatorname{col}\left(\left[\begin{array}{cccccc}
					\bm{p}_1 & \cdots & \bm{p}_i & \bm{p}_j & \bm{p}_k & \bm{p}_u \\
					1 & \cdots & 1 & 1 & 1 & 1 \\
				\end{array}\right]^T\right)$, we can obtain that $\operatorname{null}(\bm{\Omega}_{vd
                })= \operatorname{col}\left(\left[\begin{array}{ccccc}
					\bm{p}_1 & \cdots & \bm{p}_i & \bm{p}_j & \bm{p}_k  \\
					1 & \cdots & 1 & 1 & 1 \\
				\end{array}\right]^T\right)$ according to eq.~\eqref{eq:StressMatrixVertexDel_2}. Thus, it is deduced that $\operatorname{nullity}(\bm{\Omega}_{vd
                }) = 3$ so that $\operatorname{rank}\left(\bm{\Omega}_{vd}\right) = n - 1-3 =n-4$ based on Lemma~\ref{lemm:RN}.
                
                Then, the analysis of semi-positive definiteness of symmetric matrix $\bm{\Omega}_{vd}$ can be performed by repeatedly applying Lemma~\ref{lemma:PD}, which is similar to the operation of Appendix~\ref{sec:app}. The details are omitted to enhance brevity. The proof of the positive definiteness of $\bm{\Omega}^{vd}_{ff}$ follows similar steps.
				
                Consequently, we conclude that $\left(\mathcal{G}_{vd},\bm{p}_{vd}\right)$ is universally rigid and affinely localizable. 
			\end{proof}
			
			By directly applying Theorem~\ref{theo:VD}, we can remove outer nodes without damaging the universal rigidity and affine localizability of the remaining framework. Then, consider the deletion operation of inner nodes.
			
			
			\textbf{Case 2 (Deletion of an Inner Node):} 
			When an inner node $v_u ~(h(v_u) > 0)$ leaves the HAF, the universal rigidity and affine localizability of the framework are destroyed and need to be repaired. Hence, a strategy inspired by inheritance is derived. Denote the parents of $v_u$ as $\mathcal{P}_u =\{v_{p_i}^u,~v_{p_j}^u,~v_{p_k}^u\}$ and the children as $\mathcal{C}_u =\{v_{c_1}^u,~v_{c_2}^u\cdots v_{c_m}^u\}$, where $h\left(v_{c_1}^u\right) \le \cdots \le h\left(v_{c_m}^u \right) $.
			
			\begin{assu}\label{assu:InnerNode}
                Suppose that $v_{\mathcal{P}} \in \mathcal{N}_{v_\mathcal{C}}^p$, where $v_{\mathcal{P}}  \in \mathcal{P}_u$ and $v_{\mathcal{C}}  \in \mathcal{C}_u$, which means edges can be established between the parents and children of $v_u$. 
			\end{assu}
			
            Based on Theorem~\ref{theo:VD}, a matrix $\bm{\Omega}_{vd}^u$ in \eqref{eq:StressMatrixVertexDel_2} is constructed to update the stress matrix when deleting an outer node. Accordingly, denote the auxiliary matrix to delete $v_u,~v_{c_1}^u\cdots v_{c_m}^u$ as $\bm{\Omega}_b^u,~\bm{\Omega}_b^{uc_1}\cdots \bm{\Omega}_b^{uc_m} \in \mathbb{R}^{n \times n}$,  which can be calculated by the relative position of each vertex and its parents. The order of matrix is unified by inserting zeros. The steps of removing an inner node $v_u$ is presented as follows.
			\begin{itemize}
				\item[S1.] Remove 
				all edges connecting $v_u$, and update the stress matrix as $
					\bm{\Omega}_{cache} = \bm{\Omega} - \bm{\Omega}_b^u-\bm{\Omega}_b^{uc_1}-\cdots -\bm{\Omega}_b^{uc_m} $.
				Obviously, the elements in the $u$-th row and $u$-th column of $\bm{\Omega}_{cache}$ become zero. Delete the $u$-th row and the $u$-th column, and a simplified matrix is shown as $\tilde{\bm{\Omega}}_{cache}\in \mathbb{R}^{(n-1) \times (n-1) }$.
				\item[S2.] The vertex $v_{c_1}$ 
                is chosen to inherit the role of $v_u$ in the HAF. Choose one vertex from $\mathcal{P}_u \setminus \mathcal{P}_{c_1} $, labeled by $v_{p_1}^{c_1}$ to rebuild the parent set of $v_{c_1}$, $\bar{\mathcal{P}}_{c_1}$, with $v_{p_1}^{c_1}$ and the remaining vertices in $\mathcal{P}_{c_1}$. Using the positions of $v_{c_1}$ and its parents, construct the auxiliary matrix $\bm{\Omega}_b^{c_1} \in \mathbb{R}^{(n-1) \times (n-1) }$;
				\item[S3.] Rebuild the parent sets of other vertices in $\mathcal{C}_u \setminus \{v_{c_1}\}$. Taking the vertex $v_{c_2}$ as an example, if $v_{c_1}$ is not a parent of $v_{c_2}$ (i.e., $\mathcal{P}_{c_2} =\{v_u,~v_{p_j}^{c_2},~v_{p_k}^{c_2}\}$), the new parent set is $\bar{\mathcal{P}}_{c_2} =\{v_{c_1},~v_{p_j}^{c_2},~v_{p_k}^{c_2}\}$, where $v_u$ is replaced by $v_{c_1}$. If $v_{c_1} \in \mathcal{P}_{c_2}$, we can choose one vertex in $\mathcal{P}_u \setminus \mathcal{P}_{c_2}$ to replace $v_u$, e.g., $\bar{\mathcal{P}}_{c_2} =\{v_{p_i}^u,~v_{c_1},~v_{p_k}^{c_2}\}$. The remaining children of $v_u$ follow the strategy.
				\item[S4.] Construct auxiliary matrices for vertices in $\mathcal{C}_u \setminus \{v_{c_1}\}$ using their own and parental positions, i.e.,  $\bm{\Omega}_b^{c_1c_2},\cdots,\bm{\Omega}_b^{c_1c_m}\in \mathbb{R}^{(n-1) \times (n-1) }$ (the order is unified by inserting zero). Accordingly, a new affine framework $\left(\mathcal{G}_{vd},\bm{p}_{vd}\right)$ is reconstructed without $v_u$, and the corresponding stress matrix is 
				\begin{equation}\label{eq:SM_VD_inhe}
                \footnotesize
					\bm{\Omega}_{vd} = \tilde{\bm{\Omega}}_{cache} + \bm{\Omega}_b^{c_1c_2} +\cdots+\bm{\Omega}_b^{c_1c_m}.
				\end{equation}
			\end{itemize}
			The universal rigidity and affine localizability of $\left(\mathcal{G}_{vd},\bm{p}_{vd}\right)$ can be proved by analyzing the characteristic of $\bm{\Omega}_{vd}$ shown in Eq.~\eqref{eq:SM_VD_inhe}. Combining the proof sketch of Theorem~\ref{theo:VertexAddition} and \ref{theo:VD}, a similar process can be implemented, which is omitted here. To explain our method more clearly, an example is presented.
			
			\begin{exam}[Vertex Deletion]
				Using Algorithm~\ref{Algo:VertexAdd}, a HAF is generated as shown in Fig.~\ref{fig:Exam_VD_Ori}, where $\bm{p} = \left[8,0,0,8,-8,0,0,-8,9,-10,0,-12,11,1,14,-14,-7,-5\right]^T$, $v_7,~v_8,~v_9$ are outer nodes and $v_5,~v_6$ are inner nodes. To remove an outer node (e.g., $v_9$), Theorem~\ref{theo:VD} is directly applied, and the weights are updated as shown in Fig.~\ref{fig:Exam_VD_ON}. The eigenvalues of the stress matrix are $0,0,0,0.107,0.398,0.91,1.82,4.77$, which demonstrate the universal rigidity and affine localizability of the obtained framework without $v_9$. To verify the strategy of deleting an inner node, take the removal of $v_5$ as an example. After removing all edges connected to $v_5$, the framework is shown in Fig.~\ref{fig:Exam_VD_INC}. Obviously, the rigidity of HAF is destroyed. To repair the universal rigidity and affine localizability, the second and third step, S2 and S3, are executed. We choose $v_6$ to inherit the role of $v_5$, so that the hierarchical relationship between vertices is reconstructed, as presented in Tab.~\ref{tab:Hierarchy}. The final framework is shown in Fig.~\ref{fig:Exam_VD_IN}, and the eigenvalues of the corresponding stress matrix obtained from \eqref{eq:SM_VD_inhe} are $0,0,0,0.204,0.36,1.19,1.38,4.87$. Accordingly, the effectiveness of our proposed approached in reconstructing a HAF has been validated.
				
				\begin{figure}[htbp]
					\centering
					\subfigure[]{\includegraphics[scale=0.24]{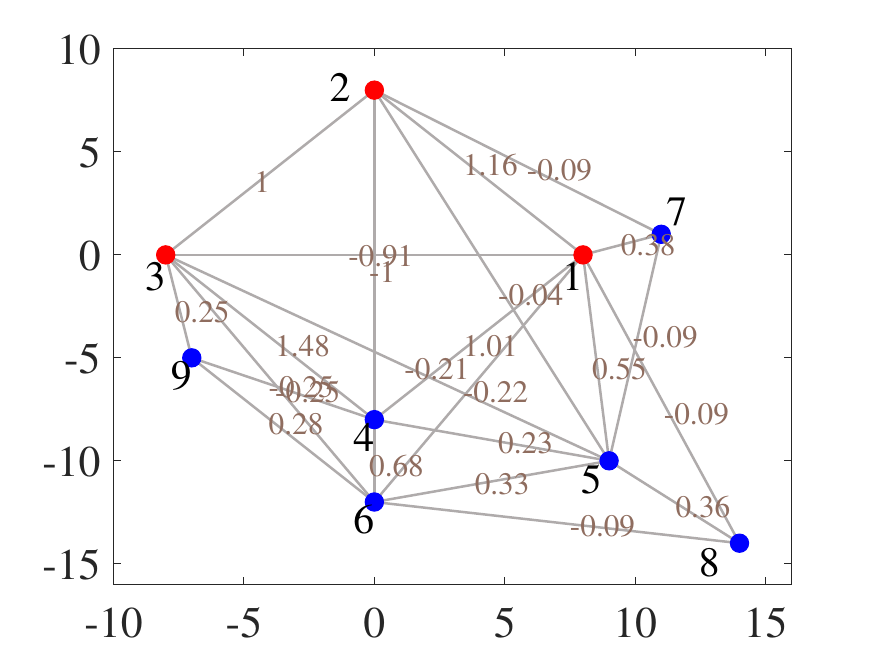} 
						\label{fig:Exam_VD_Ori}
					}
					\subfigure[]{\includegraphics[scale=0.24]{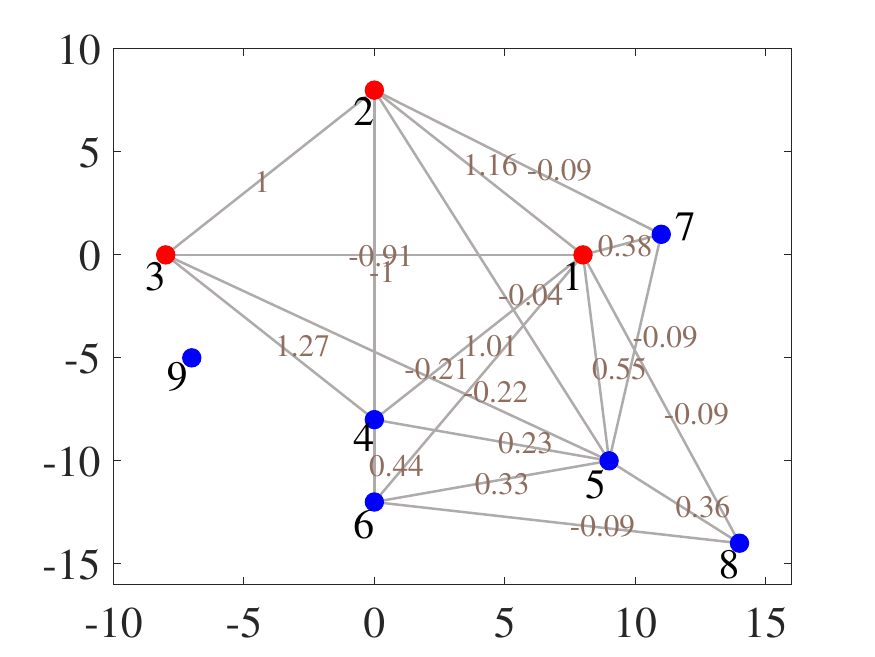} 
						\label{fig:Exam_VD_ON}
					}
					\subfigure[]{\includegraphics[scale=0.24]{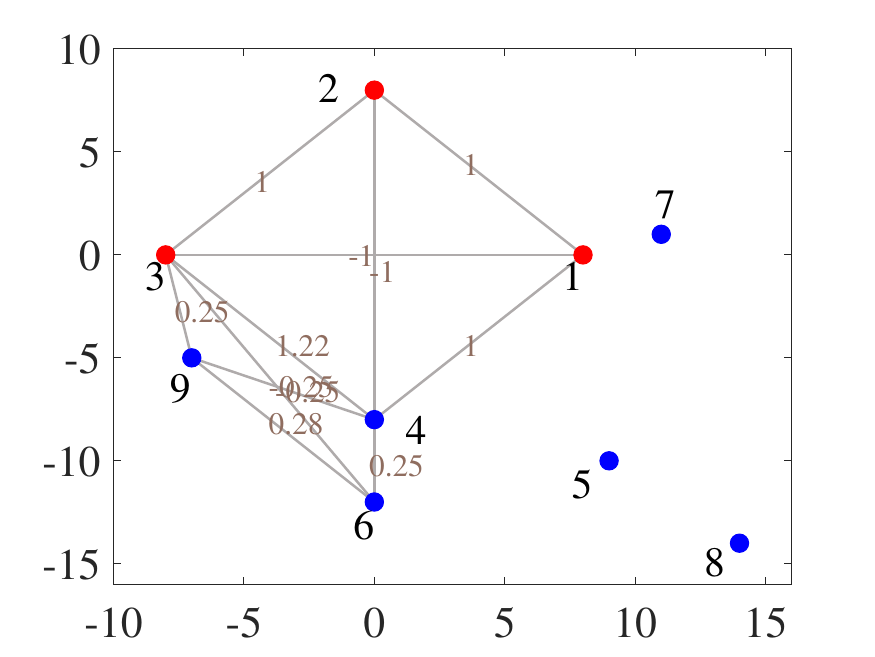} 
						\label{fig:Exam_VD_INC}
					}
					\subfigure[]{\includegraphics[scale=0.24]{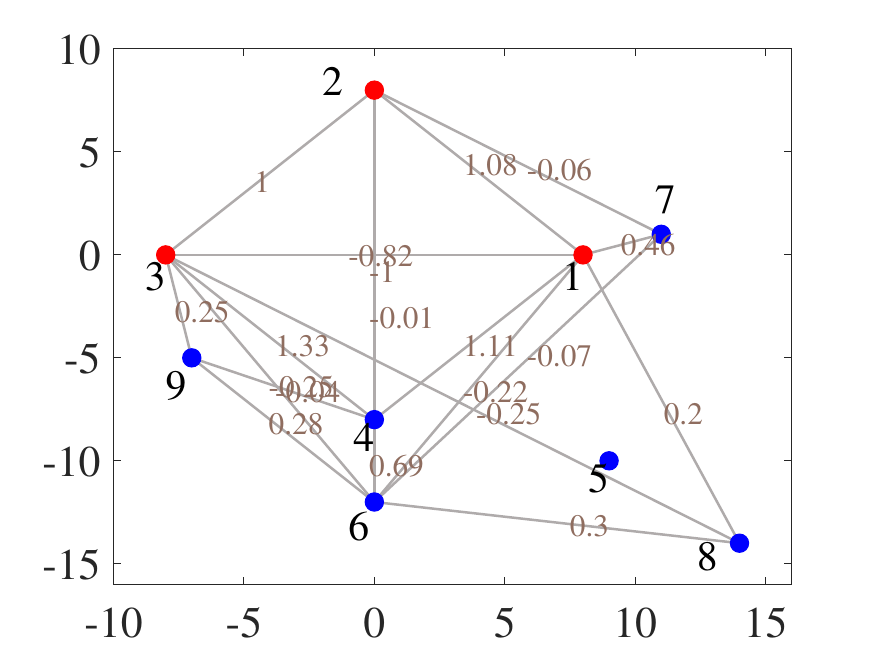} 
						\label{fig:Exam_VD_IN}
					}
					\caption{An example of the vertex deletion algorithm. (a) The original HAF, where $v_1,v_2,v_3 \in \mathcal{V}_l$. (b) The removal of an outer node $v_9$. (c) The removal of an inner node $v_5$: delete all edges connected to $v_5$. (c) The removal of an inner node $v_5$: repair the hierarchy.}
					\label{fig:Exam_VD}
				\end{figure}
				
				\begin{table*}\label{tab:Hierarchy}
					\centering
					\caption{An example of reconstructing the hierarchical structure.}
					\begin{threeparttable}
						\begin{tabular}{c|ccc|ccc}
							\toprule 
							\multirow{2}{*}{Vertex}& \multicolumn{3}{c|}{Before Operation} &\multicolumn{3}{c}{After Operation}\\
							\cmidrule{2-7}
							& Hierarchy$h(\cdot)$ & Parents & Children  & Hierarchy$h(\cdot)$ & Parents & Children\\
							\midrule
							$v_1,~v_2,~v_3,~v_4$ & $0$ & None & ---  & $0$ &  None & --- \\
							$v_5$ & $1$ & $v_1,~v_3,~v_4$ & $v_6,~v_7,~v_8$ & None &  None & None \\
							$v_6$ & $2$ & $v_1,~v_4,~v_5$ & $v_8,~v_9$ & $1$ &  $v_1,~v_3,~v_4$ & $v_7,~v_8,~v_9$ \\
							$v_7$ & $2$ & $v_1,~v_2,~v_5$ & None & $2$ &  $v_1,~v_2,~v_6$ & None\\
							$v_8$ & $3$ & $v_1,~v_5,~v_6$ & None & $2$ &  $v_1,~v_3,~v_6$ & None \\
							$v_9$ & $3$ & $v_3,~v_4,~v_6$ & None & $2$ &  $v_3,~v_4,~v_6$ & None \\
							\bottomrule
						\end{tabular}
						\label{tab:Exam_VD}
						\begin{tablenotes}    
							\footnotesize               
							\item[]``None" represents no data; ``---" represents omitting data.
						\end{tablenotes} 
					\end{threeparttable}
				\end{table*}
				
			\end{exam}

            \begin{rema}
                The affine framework generated by Algorithm~\ref{Algo:VertexAdd} has a clear hierarchical structure with specific parents and children, which is convenient for reconstructing the connection relationship after deleting vertices. For vertex $v_u$ with $h\left(v_u\right)>0$, there are only three parents. 
                As shown in eqs.~\eqref{eq:Omega_Plus}-\eqref{eq:Omega_u1}, the stress update between $v_u$ and its parents follows a specific rule. By reverse-using this rule and updating the stress matrix with negative weights, we can remove the connection edge between $v_u$ and its parents/children. The result is shown in Fig.~\ref{fig:Exam_VD_INC}, where all edges of inner node $v_5$ are disconnected. The rigid reconstruction methods in steps S3 and S4 are similar to Algorithm~\ref{Algo:VertexAdd}, which is equivalent to re-add the vertices in set $\mathcal{C}_u$ to the HAF.
            \end{rema}

                The proposed vertex deletion strategy is also distributed, as it only utilizes local measurements to change the local structure of the affine framework.	Taking the deletion of a inner node shown in Fig.~\ref{fig:Exam_VD_INC} and \ref{fig:Exam_VD_IN} as an example, the connections among the inner node's parents and children are reconstructed, and the remaining part in the framework is preserved. 	

				\section{Extension of affine framework construction in 3D}\label{sec:Discuss}
				The strategies proposed in Section~\ref{sec:MainResult} concentrate on the construction of HAFs in $\mathbb{R}^2$. Actually, our method can be extended to three-dimensional scenarios following the same idea.  The following text briefly analyzes three affine framework reconstruction strategies one by one.

                \textbf{(1) Vertex Addition in $\mathbb{R}^3$.} 
                Given an original affine framework meeting Assumption~\ref{assu:OriginalFrame}, a new vertex $v_u$ is intended to be added to $\left(\mathcal{G},\bm{p}\right)$ in three-dimensional space. Inspired by Theorem~\ref{theo:VertexAddition}, four edges are introduced to the original framework $\left(\mathcal{G},\bm{p}\right)$, and the corresponding stress is determined by the position of the new vertex $v_u$ with respect to the four affinely independent vertices in $\mathcal{N}^p_u$, denoted by $v_i, ~v_j,~v_k$ and $v_g$. Similar to Section~\ref{sec:VertexAddition}, suppose that the set $\left\{\cdots,\bm{p}_i,~\bm{p}_j,~\bm{p}_k,~\bm{p}_g,~\bm{p}_u\right\}$ is in general position.
				
				To grow the affine framework with a new vertex $v_u$, four edges $e_{iu},~e_{ju},~e_{ku},~e_{gu}$ are linked to $\left(\mathcal{G},\bm{p}\right)$ to obtain a extended framework $\left(\mathcal{G}^+,\bm{p}^+\right)$, where $\bm{p}^+ = \left[\bm{p}^T,~\bm{p}_u^T\right]^T$ and $\mathcal{G}^+ = \left(\mathcal{V}\cup \left\{v_u \right\},\mathcal{E}\cup \left\{e_{iu},~e_{ju},~e_{ku},~e_{gu}\right\}\right)$. 
				For the new vertex $v_u$, we have the following equation due to the definition of the equilibrium stress,
				\begin{equation}\label{eq:ES_3D}
                \footnotesize
					\begin{aligned}
						&\varpi_{iu}\left(\bm{p}_u-\bm{p}_i\right) + \varpi_{ju}\left(\bm{p}_u-\bm{p}_j\right) \\
						+& \varpi_{ku}\left(\bm{p}_u-\bm{p}_k\right) + \varpi_{gu}\left(\bm{p}_u-\bm{p}_g\right) = \bm{0},
					\end{aligned}
				\end{equation}
				where $\varpi_{iu},\varpi_{ju},\varpi_{ku},\varpi_{gu}$ are the weights of edges $e_{iu},e_{ju},e_{ku},e_{gu}$. Define $ \bm{P}_u \triangleq \left[ \begin{array}{ccccc}
						\bm{p}_i & \bm{p}_j & \bm{p}_k & \bm{p}_g &  \bm{p}_u\\
						1 & 1 & 1 & 1 & 1
					\end{array} \right]$. 
				Since $\operatorname{rank}(\bm{P}_u) = 4$, there is a nonzero vector $\bm{\phi} = \left[\phi_1,\phi_2,\phi_3,\phi_4,\phi_5\right]^T$ satisfying $\bm{P}_u\bm{\phi} = \bm{0}$. Similarly, define $\bm{\Omega}_u = s \bm{\phi}\bm{\phi}^T$ with $s>0$. Accordingly, the weights can be represented by $\varpi_{iu} = -s\phi_1\phi_5,~\varpi_{ju} = -s\phi_2\phi_5,~\varpi_{ku} = -s\phi_3\phi_5$ and $\varpi_{gu} = -s\phi_4\phi_5$. The augmented stress matrix $\bm{\Omega}^+$ can be rewritten as
				\begin{equation}\label{eq:VA_three}
					\footnotesize
					\bm{\Omega}^+ = \left[\begin{array}{c:c}
						\bm{\Omega} & \bm{0}_{n\times1}\\
						\hdashline
						\bm{0}_{1 \times n} & 0
					\end{array}\right] + \left[\begin{array}{c:c}
						\bm{0}_{(n-4)\times (n-4)} & \bm{0}_{(n-4)\times 5}\\
						\hdashline
						\bm{0}_{5 \times (n-4)} & \bm{\Omega}_u
					\end{array}\right].
				\end{equation}
				Correspondingly, we have the following result on vertex addition in $\mathbb{R}^3$. 
				\begin{coro}\label{coro:VA_three}
%
                    Under Assumption~\ref{assu:OriginalFrame}, consider a new vertex $v_u$ and an affine framework $\left(\mathcal{G},\bm{p}\right)$ in $\mathbb{R}^3$. Suppose that $\mid \mathcal{N}_u^p\mid \ge 4$, and  the set $\left\{\bm{p}_u\right\} \cup \left\{\bm{p}_j:j \in \mathcal{V} \right\}$ is in general position. With a positive $s$, after adding three edges connecting the vertex $v_u$ and the existing vertices $v_i$, $v_j, v_k, v_g \in \mathcal{N}_u^p$ to $\left(\mathcal{G},\bm{p}\right)$, the obtained framework $\left(\mathcal{G}^+,\bm{p}^+\right)$ is universally rigid and affinely localizable, with the corresponding stress matrix determined by \eqref{eq:VA_three}.
				\end{coro}
				
				The similar scheme employed in the proof of Theorem~\ref{theo:VertexAddition} can be applied to explain Corollary~\ref{coro:VA_three}, which is omitted here for simplicity. Obviously, the affine framework generated in $\mathbb{R}^3$ according to Corollary~\ref{coro:VA_three} also has a clear hierarchical structure. For node $v_i$ with $h(v_i) > 0 $, there are four and only four parents. Thus, the graph $ \mathcal{G}^+ $ has a 4-lateration graph as the spanning subgraph. Leveraging the well-defined structure of the hierarchical affine framework, the strategies for edge and vertex deletion in $ \mathbb{R}^3 $ can be extended along the lines of their counterparts in two-dimensional space.
                
				
			\textbf{(2) Edge Deletion in $ \mathbb{R}^3 $.} 
                Assuming that the four neighbors of node $v_u $ are $v_i,v_j,v_k,v_g $, edge $e_ {jk} $ in a HAF $(\mathcal{G},\bm{p})$ is going to be deleted. The entries involved in these five vertices in the stress matrix $\bm{\Omega}$ can be described as below.
			\begin{equation}
				\footnotesize
				\bm{\Omega}_v=\left[\begin{array}{ccccc}
					\Omega_{i i}& \Omega_{i j}& \Omega_{i k} & \Omega_{i g} & \Omega_{i u}\\
					\Omega_{i j}& \Omega_{j j}& \Omega_{j k} & \Omega_{j g} & \Omega_{j u}\\
					\Omega_{i k} & \Omega_{j k} & \Omega_{k k} & \Omega_{k g} & \Omega_{j u}\\
					\Omega_{i g} & \Omega_{j g} & \Omega_{k g} & \Omega_{g g}& \Omega_{g u}\\
					\Omega_{i u} & \Omega_{j u} & \Omega_{k u} & \Omega_{g u}& \Omega_{u u}
				\end{array}\right].
			\end{equation}
			After eliminating $e_{jk}$, the framework is described by $\left(\mathcal{G}^-,\bm{p}\right)$, and the corresponding block matrix is established as follows.
			\begin{equation}
				\footnotesize
				\bm{\Omega}_v^-=\left[\begin{array}{ccccc}
					\Omega_{i i}^-& \Omega_{i j}^-& \Omega_{i k}^- & \Omega_{i g}^- & \Omega_{i u}^-\\
					\Omega_{i j}^-& \Omega_{j j}^-& 0 & \Omega_{j g}^- & \Omega_{j u}^-\\
					\Omega_{i k}^- & 0 & \Omega_{k k}^- & \Omega_{k g}^- & \Omega_{j u}^-\\
					\Omega_{i g}^- & \Omega_{j g}^- & \Omega_{k g}^- & \Omega_{g g}^- & \Omega_{g u}^-\\
					\Omega_{i u}^- & \Omega_{j u}^- & \Omega_{k u}^- & \Omega_{g u}^- & \Omega_{u u}^-
				\end{array}\right].
			\end{equation}
			
			According to the definition of equilibrium stress, we have 
			\begin{equation}\label{eq:ED_equa}
                \footnotesize
				\underbrace{
                \left[ \begin{array}{ccccc}
					\bm{p}_i & \bm{p}_j & \bm{p}_k & \bm{p}_g & \bm{p}_u\\
					1 & 1 & 1 & 1 & 1
				\end{array} \right]}_{\triangleq\bm{P}_{u}^{vd}} \left( \bm{\Omega}_v^- -\bm{\Omega}_v \right) =\bm{0}.
			\end{equation}
		Due to $\operatorname{rank}(\bm{P}_{u}^{vd})=4$, we have $\operatorname{nullity}(\bm{P}_{u}^{vd}) = 1$, so there exists a non-zero vector  $\bm{\phi} = \left[\phi_1,\cdots,\phi_5\right]^T$ that satisfies $\bm{P}_{u}^{vd} \bm{\phi} = \bm{0}$. Let $\bm{\Omega}_u  \triangleq \bm{\Omega}_v^- -\bm{\Omega}_v = s_{ed} \bm{\phi}\bm{\phi}^T$. Thus, $\bm{\Omega}_v^- = \bm{\Omega}_v + \bm{\Omega}_u$, where 
			\begin{equation}\label{eq:omega_3D}
            \footnotesize
				\begin{aligned}
					&\bm{\Omega}_u =\\
							&
							\left[\begin{array}{ccccc}
								\dfrac{\varpi_{i u}^2}{\varpi_{uu}} &  \dfrac{\varpi_{i u}\varpi_{j u}}{\varpi_{uu}} &  \dfrac{\varpi_{i u}\varpi_{k u}}{\varpi_{uu}}&  \dfrac{\varpi_{i u}\varpi_{gu}}{\varpi_{uu}}& -\varpi_{i u}\\
								\dfrac{\varpi_{i u}\varpi_{j u}}{\varpi_{uu}}&   \dfrac{\varpi_{j u}^2}{\varpi_{uu}}&  \dfrac{\varpi_{j u}\varpi_{k u}}{\varpi_{uu}}& \dfrac{\varpi_{j u}\varpi_{gu}}{\varpi_{uu}}& -\varpi_{j u}\\
								\dfrac{\varpi_{i u}\varpi_{k u}}{\varpi_{uu}}&  \dfrac{\varpi_{j u}\varpi_{k u}}{\varpi_{uu}}&   \dfrac{\varpi_{k u}^2}{\varpi_{uu}}& \dfrac{\varpi_{k u}\varpi_{gu}}{\varpi_{uu}}& -\varpi_{k u}\\
								\dfrac{\varpi_{i u}\varpi_{gu}}{\varpi_{uu}}&  \dfrac{\varpi_{j u}\varpi_{gu}}{\varpi_{uu}}&  \dfrac{\varpi_{k u}\varpi_{gu}}{\varpi_{uu}}& \dfrac{\varpi_{gu}^2}{\varpi_{uu}}& -\varpi_{g u}\\
								-\varpi_{i u}& -\varpi_{j u}& -\varpi_{k u}& -\varpi_{g u}& \varpi_{uu}
							\end{array}\right] 
						\end{aligned}
					\end{equation}
					where $\varpi_{iu} = -s_{ed} \phi_1\phi_5,~\varpi_{ju} = -s_{ed} \phi_2\phi_5,~\varpi_{ku} = -s_{ed} \phi_3\phi_5,~\varpi_{gu} = -s_{ed} \phi_4\phi_5$ and $\varpi_{uu} = s_{ed} \phi_5^2$. To eliminate $e_{jk}$, the equation is established as follows.
					\begin{equation}\label{eq:ED_s1}
						\begin{aligned}
							\Omega_{jk} + \dfrac{\varpi_{ju}\varpi_{ku}}{\varpi_{uu}} =0~~~
							\Rightarrow  s_{ed} =-\dfrac{\Omega_{jk}}{\phi_2 \phi_3}.
						\end{aligned}
					\end{equation}
					To maintain the universal rigidity and affine localizability of the obtained framework $\left(\mathcal{G}^-,\bm{p}\right)$, $s_{ed} >0$ 
					is required based on Corollary~\ref{coro:VA_three}, implying that
					\begin{equation}\label{eq:ED_3D}
						\Omega_{jk} \varpi_{ju}\varpi_{ku} <0 ~~\Rightarrow \Omega_{jk} \phi_2 \phi_3 <0.
					\end{equation}
				With a specific $\Omega_{jk}$, the sign of $s_{ed}$ is determined by the signs of $\varpi_{ju}$ and $\varpi_{ku}$, which depend on the relative positions among $v_i,~v_j,~v_k,~v_g$ and $v_u$. 
%
                Similar to Fig.~\ref{fig:FourVer}, four vertices in general position can form a tetrahedron, which divides the space into 15 parts in $\mathbb{R}^3$, as shown in Fig.~\ref{fig:SpaceDivision_3D}. When the fifth node is located in these 15 distinct regions, the sign of the stress varies.
                    \begin{figure}[htbp]
						\centering
                            \subfigure[]{\includegraphics[scale=0.27]{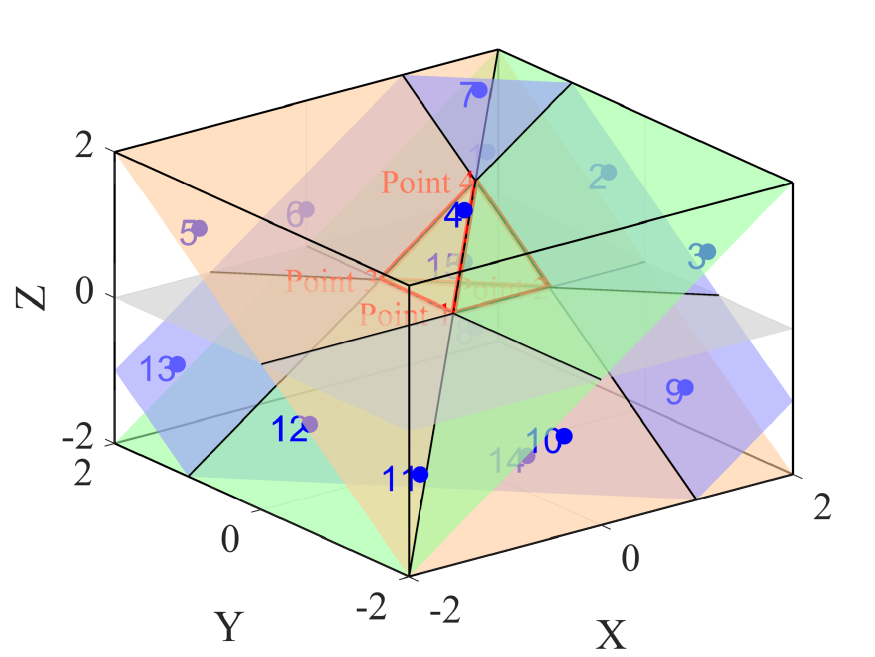} }
						\subfigure[]{\includegraphics[scale=0.27]{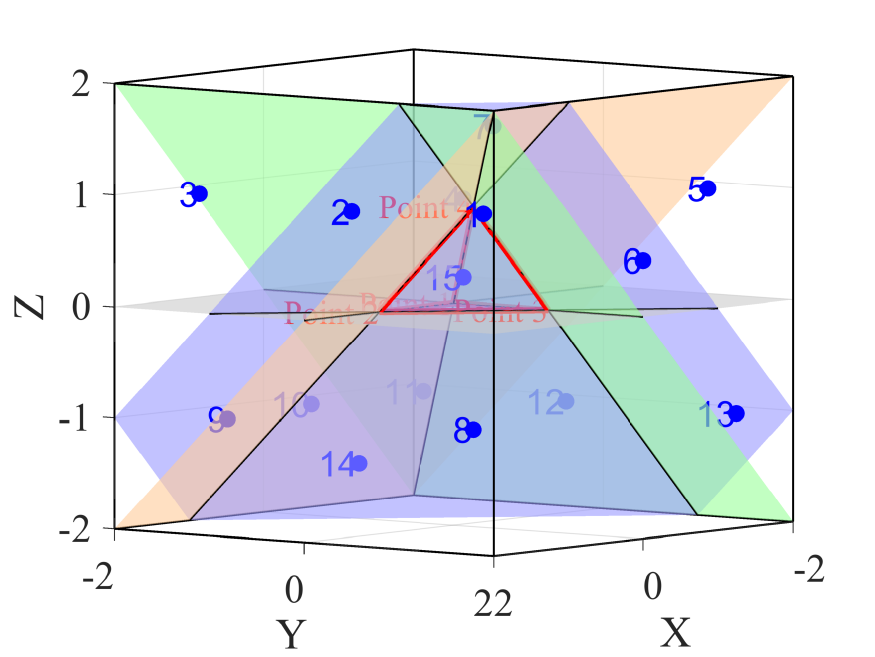} }
						\caption{Possible relative configurations of five vertices in general position in $\mathbb{R}^3$. The four faces of a tetrahedron divide $\mathbb{R}^3$ into 15 regions, with 15 blue dots scattered across these regions. (a) front view. (b) back view.}
						\label{fig:SpaceDivision_3D}
				\end{figure}
                    A similar sign analysis to that in $\mathbb{R}^2$ is presented as follows.
                    The vectors $\bm{p}_u-\bm{p}_i$ and $\bm{p}_u-\bm{p}_j$, $\bm{p}_u-\bm{p}_k$ can serve as a set of bases in $\mathbb{R}^3$. That is, 
					\begin{equation}
                    \footnotesize
						\bm{p}_u-\bm{p}_g = k_1 \left( \bm{p}_u-\bm{p}_i\right) + k_2 \left( \bm{p}_u-\bm{p}_j\right) + k_3 \left( \bm{p}_u-\bm{p}_k\right),
					\end{equation}
				where $k_1$, $k_2$ and $k_3$ are real parameters that are not simultaneously zero. Combining with eq.~\eqref{eq:ES_3D}, we have 
				\begin{equation}
                \begin{aligned}
                    \left( \varpi_{iu} + \varpi_{gu}k_1\right) &\left( \bm{p}_u-\bm{p}_i\right) + \left( \varpi_{ju} + \varpi_{gu}k_2\right) \left( \bm{p}_u-\bm{p}_j\right) \\
				+ &\left( \varpi_{ku} + \varpi_{gu}k_3\right) \left( \bm{p}_u-\bm{p}_k\right) = \bm{0}\notag.
                \end{aligned}
				\end{equation}
					Accordingly, the following equations are deduced.
					\begin{equation}
						\varpi_{iu} = -k_1\varpi_{gu}, \varpi_{ju} = -k_2\varpi_{gu}, \varpi_{ku} = -k_3\varpi_{gu},
					\end{equation}
					which mean $\varpi_{iu}\varpi_{ju} = k_1 k_2 \varpi_{gu}^2 $, $\varpi_{iu}\varpi_{ku} = k_1 k_3  \varpi_{gu}^2 $ and $\varpi_{ju}\varpi_{ku} = k_2 k_3  \varpi_{gu}^2 $. When $v_u$ lies in different regions, the signs of $k_1$, $k_2$, $k_3$ and the weights are listed in Table.~\ref{tab:Symbol}.

					\begin{table*}[htbp]
						\centering
						\caption{The sign of the weight of the edges connected to $v_u$ in $\mathbb{R}^3$.}
						\begin{tabular}{cccccccccccccccc}
							\toprule
							 & R1 & R2  & R3  & R4  & R5  & R6  & R7 & R8 & R9 & R10 & R11 & R12 & R13 & R14 & R15\\
							\midrule
							$k_1$  & $+$  & $+$  & $-$  & $-$  & $-$  & $+$  & $+$ & $-$  & $-$  & $+$  & $+$    & $+$  & $-$  & $+$  & $-$ \\
							$k_2$  & $-$  & $-$  & $-$  & $+$  & $+$  & $+$  & $+$ & $+$  & $+$  & $+$  & $-$    & $-$  & $-$  & $+$  & $-$ \\
							$k_3$  & $-$  & $+$  & $+$  & $+$  & $-$  & $-$  & $+$ & $+$  & $-$  & $-$  & $-$    & $+$  & $+$  & $+$  & $-$ \\
							$\varpi_{iu}\varpi_{ju}$  & $-$  & $-$  & $+$  & $-$  & $-$  & $+$  & $+$ & $-$  & $-$  & $+$  & $-$  & $-$  & $+$  & $+$  & $+$ \\
							$\varpi_{iu}\varpi_{ku}$  & $-$  & $+$  & $-$  & $-$  & $+$  & $-$  & $+$ & $-$  & $+$  & $-$  & $-$  & $+$  & $-$  & $+$  & $+$ \\
							$\varpi_{iu}\varpi_{gu}$  & $-$  & $-$  & $+$  & $+$  & $+$  & $-$  & $-$ & $+$  & $+$  & $-$  & $-$  & $-$  & $+$  & $-$  & $+$ \\
							$\varpi_{ju}\varpi_{ku}$  & $+$  & $-$  & $-$  & $+$  & $-$  & $-$  & $+$ & $+$  & $-$  & $-$  & $+$  & $-$  & $-$  & $+$  & $+$ \\
							$\varpi_{ju}\varpi_{gu}$  & $+$  & $+$  & $+$  & $-$  & $-$  & $-$  & $-$ & $-$  & $-$  & $-$  & $+$  & $+$  & $+$  & $-$  & $+$ \\
							$\varpi_{ku}\varpi_{gu}$  & $+$  & $-$  & $-$  & $-$  & $+$  & $+$  & $-$ & $-$  & $+$  & $+$  & $+$  & $-$  & $-$  & $-$  & $+$ \\
							\bottomrule
						\end{tabular}
						\label{tab:Symbol}
					\end{table*}
                    
                     Similar to the cases analyzed in Section~\Rmnum{3}-B, edge deletion strategies can be divided into two main types. The first achieves the elimination of edge $e_{jk}$ by rearranging stress within the framework to readjust $\Omega_{jk}$ to zero. The second involves introducing temporary nodes for remocing specific edges. However, unlike the two-dimensional case, the position of a temporary node can have up to 15 possibilities, and the number of temporary nodes can reach a maximum of three, increasing the complexity of the analysis. Example~\ref{exam:AffineConstruct_3D} is provided to illustrate the process of deleting an edge from a HAF in $\mathbb{R}^3$.

                \textbf{(3) Vertex Deletion in $\mathbb{R}^3$.} 
                Similar to Section~\Rmnum{3}-C, vertices in the HAF in $\mathbb{R}^3$ can also be divided into two types: inner nodes and outer nodes. On one hand, for an outer node $v_i$ with $h(v_i) > 0 $, the following corollary can be derived from Theorem~\ref{theo:VD}.
                \begin{coro}
                    Consider a HAF $\left(\mathcal{G},\bm{p}\right)$ with $n~\left(n \ge 6\right)$ vertices. If $v_u$ is an outer node and $h\left(v_u\right)>0$, we can obtain a universally rigid and affinely localizable framework $\left(\mathcal{G}_{vd},\bm{p}_{vd}\right)$ after removing vertex $v_u$ and all edges connected to $v_u$. 
                \end{coro}

              On the other hand, for an inner node $v_i$ with $h(v_i) > 0 $, by adopting the inheritance approach and following the four-step method proposed in Section~\Rmnum{3}-C, the deletion of the inner node can be achieved. An example is also shown in Example~\ref{exam:AffineConstruct_3D}.
                
            \begin{exam}\label{exam:AffineConstruct_3D}
                \emph{(Construction of HAF in $\mathbb{R}^3$):} An original affine framework $(\mathcal{G}_0,\bm{p}_0)$ is given in Fig.~\ref{fig:AffineConstruct_3D_Original}, where $\bm{p} = \left[ 0,0,0,8,0,0,0,8,0,0,0,8,4,8,8\right]^T$. The stresses of the edges in $\mathcal{G}_0$ are presented in Fig.~\ref{fig:AffineConstruct_3D_Original}. The operations of vertex addition, edge deletion, and vertex deletion for $(\mathcal{G}_0,\bm{p}_0)$ are shown as follows. 

                \begin{figure}[htbp]
				\centering
				\subfigure[]{
					\includegraphics[scale=0.26]{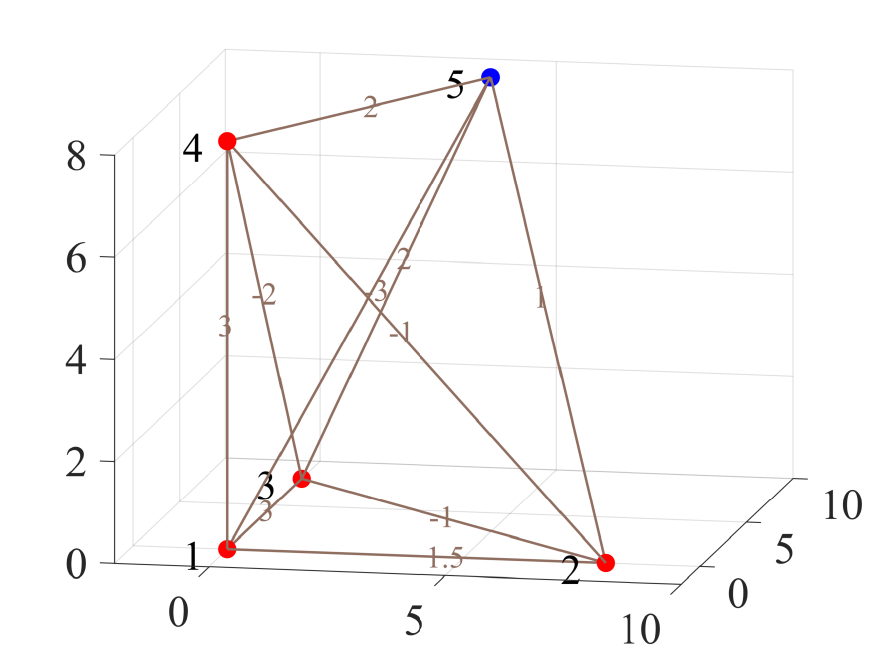}
                    \label{fig:AffineConstruct_3D_Original}
				}
				\subfigure[]{
					\includegraphics[scale=0.26]{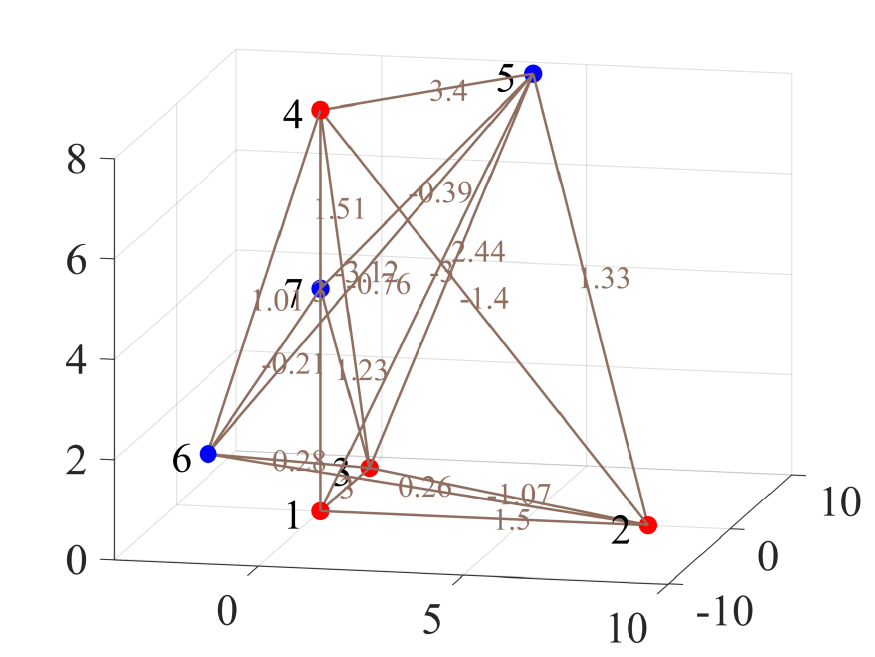}
                    \label{fig:AffineConstruct_3D_VA}
				}
				\subfigure[]{
					\includegraphics[scale=0.26]{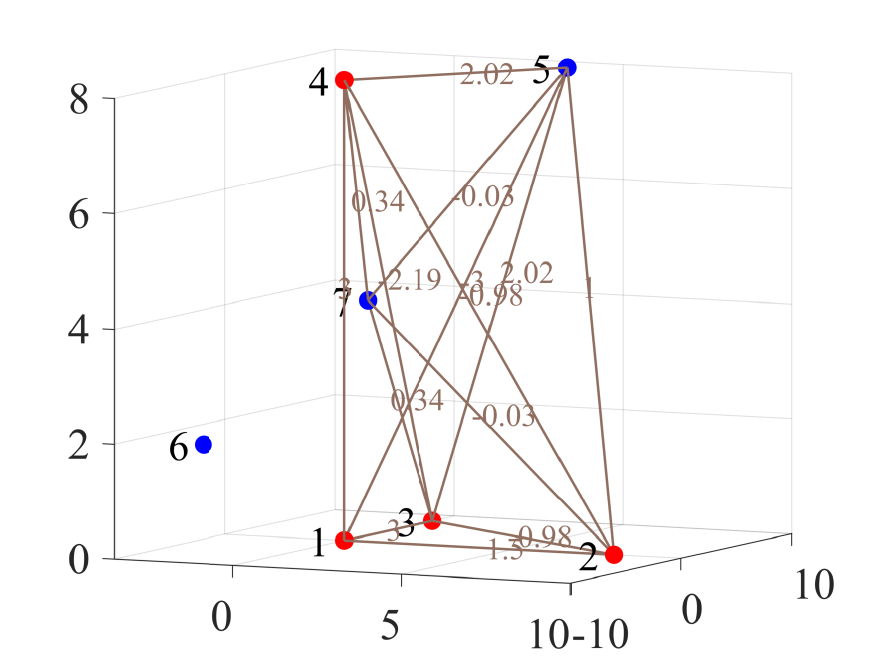}
                    \label{fig:AffineConstruct_3D_VD}
				}
				\subfigure[]{
					\includegraphics[scale=0.26]{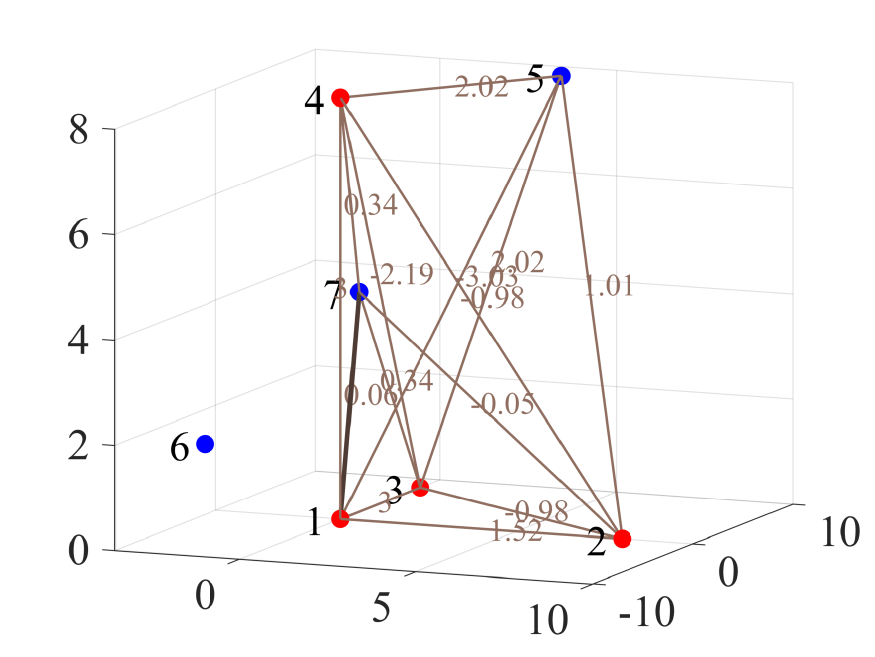}
                    \label{fig:AffineConstruct_3D_ED}
				}
				\caption{An example of edge deletion, vertex addition and deletion in $\mathbb{R}^3$. (a) The original framework. (b) The obtained HAF after adding vertices $v_6$ and $v_7$. (c) The framework after deleting vertex $v_6$. (d) The framework after removing edge $e_{57}$, while adding $e_{17}$.}
				\label{fig:AffineConstruct_3D}
			\end{figure}

            \textbf{(1) Vertex Addition.} 
            In three-dimensional space, randomly select two vertices $ v_6$ and $ v_7$, where $\bm{p}_6 = \left[ -1.542, -8.115  , 1.971 \right]^T$ and $\bm{p}_7 = \left[-0.582,3.919,3.998\right]^T$. Select four parents for $ v_6 $: $ v_2 ,v_3 , v_4 , v_5$, and then connect $ v_6 $ to its parents so that $h(v_6)=1$. Then, update the stress matrix according to eq.~\eqref{eq:VA_three}. Next, add $ v_7 $ to the new framework and update the stress matrix in the same way. The parents of $ v_7 $ are $ v_3 ,v_4,v_5,v_6$ so that $h(v_7)=2$. After adding the two vertices, the stresses of the edges in the obtained HAF are shown in Fig.~\ref{fig:AffineConstruct_3D_VA}.

            \textbf{(2) Vertex Deletion.} 
            According to the definition, it is obvious that $ v_6 $ is an inner node with four parents and one child. To delete $ v_6 $, following the four steps in Section~\Rmnum{3}-C: first, disconnect all edges linked to $ v_6 $. Then, select new parents for the child $ v_7 $, which are $ v_2, v_3, v_4, v_5 $. Update the stress matrix using eq.\eqref{eq:SM_VD_inhe}. The new stresses of the edges are shown in Fig.~\ref{fig:AffineConstruct_3D_VD}.

            \textbf{(3) Edge Deletion.} 
            Taking the deletion of edge $ e_{57} $ in Fig.~\ref{fig:AffineConstruct_3D_VD} as an example, the corresponding element in the stress matrix is $\Omega_{57} = 0.0344$. To ensure $s_{ed} > 0 $, select vertices $v_1,v_2,v_3$ as neighbors. Then, we have
            \begin{equation}
            \footnotesize
            \notag
            \begin{aligned}
                \bm{\phi}_u^{ed} = &\operatorname{null}(\left[\begin{array}{ccccc}
                    \bm{p}_1 & \bm{p}_2 &\bm{p}_3 &\bm{p}_5 &\bm{p}_7\\
                     1 & 1 &1 &1 &1
                \end{array}
                \right]) \\
                =&\left[\begin{array}{ccccc}
                    -0.582 & 0.225 & 0.007 & -0.349 & 0.699
                \end{array}
                \right]^T
            \end{aligned}
            \end{equation}
            Thus, we have $\Omega_{57} \phi_4 \phi_5 = 0.0344 \times (-0.349) \times 0.699<0$, meeting eq.~\eqref{eq:ED_3D}. Therefore, let $s_{ed} = 0.141>0$ and update the stress matrix based on eq.~\eqref{eq:omega_3D}. The obtained affine framework is shown in Fig.~\ref{fig:AffineConstruct_3D_ED}, where edge $e_{57}$ is removed.   
                \end{exam}
				
\section{Simulations}\label{sec:Simulation}
	In this section, two simulations and a comparison are carried out to validate our proposed algorithms. 
				\subsection{Simulation 1: Growing Framework for Affine Formations}
				By using Algorithm~\ref{Algo:VertexAdd}, we present simulations to grow an eligible HAFs with different number of agents, as shown in Fig.~\ref{fig:SimuCompare}. An original framework with four vertices is given in Fig.~\ref{fig:SimuCompare_4}, then a series of vertices with random positions are considered to be added to the framework. As described in Fig.~\ref{fig:SimuCompare_54}-\ref{fig:SimuCompare_204}, frameworks with different swarm size are generated while maintaining universal rigidity and affine localizability, which implying the effectiveness of the proposed distributed vertex addition strategy. We further demonstrate the rigidity of constructing affine frameworks by comparative simulations with \cite{RAL_Xiao_2022}, where a MISDP problem is established and an optimization-based topology design scheme is designed. 
				We use two methods to grow affine frameworks with the same number of agents. The programs run on a laptop with AMD Ryzen 7 and 16 GB memory. Each simulation is repeated ten times and the average runtime is recorded in Table.~\ref{tab:SimuCompare}. It is obvious that Xiao's method takes much more runtime than ours, due to the fact that the number of variables that need to be optimized in the MISDP algorithm increases rapidly with the number of vertices, resulting in a sharp increase in runtime. As a comparison, our method only focuses on several neighbors near the newly added vertex, bringing about a significant reduction in computational cost, which reveals the potential of our method in onboard applications in robot swarms.
				
				\begin{figure}[htbp]
					\centering
					\subfigure[]{
						\includegraphics[scale=0.25]{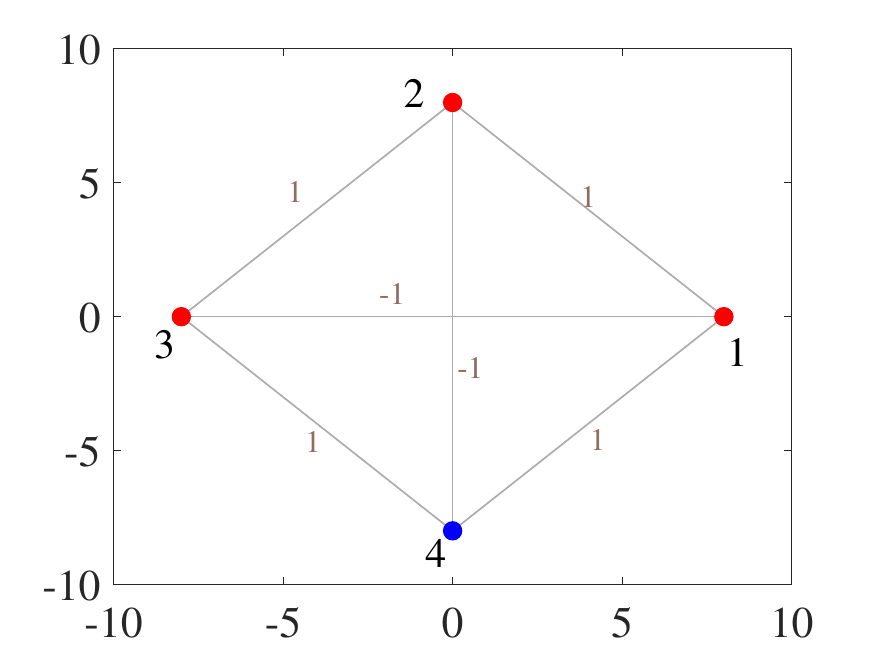}
						\label{fig:SimuCompare_4}
					}
					\subfigure[]{
						\includegraphics[scale=0.25]{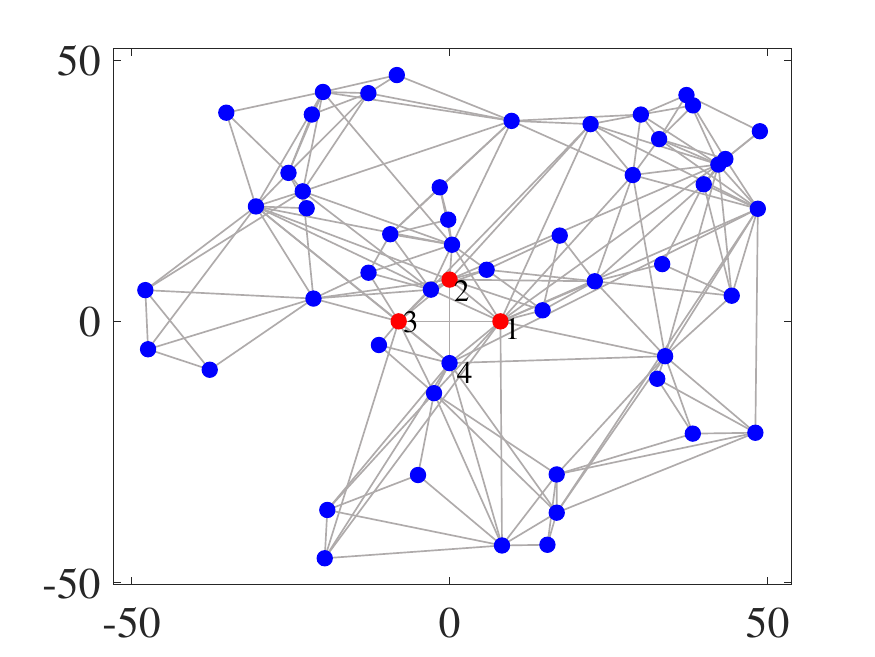}
						\label{fig:SimuCompare_54}
					}
					\subfigure[]{
						\includegraphics[scale=0.25]{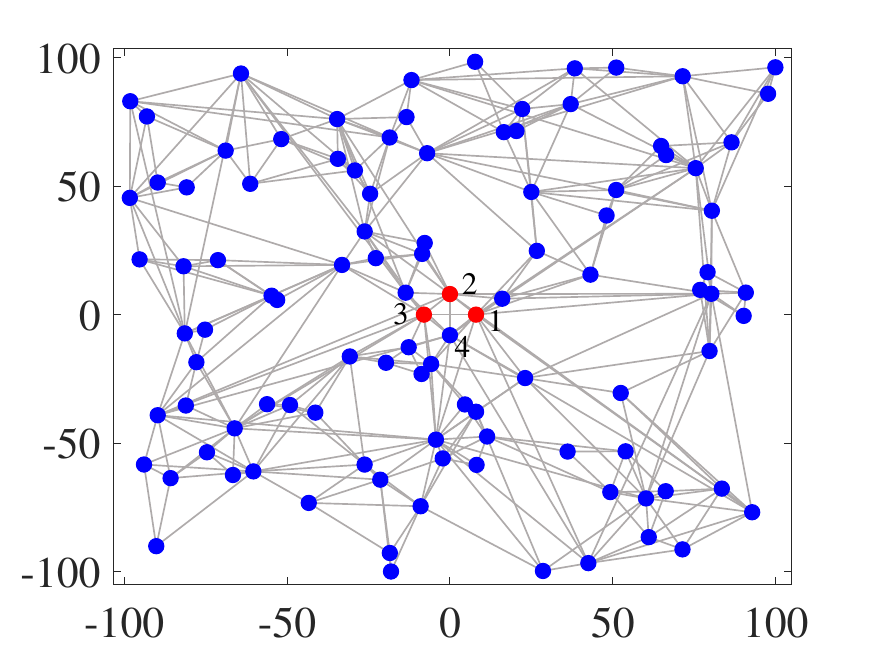}
						\label{fig:SimuCompare_104}
					}
					\subfigure[]{
						\includegraphics[scale=0.25]{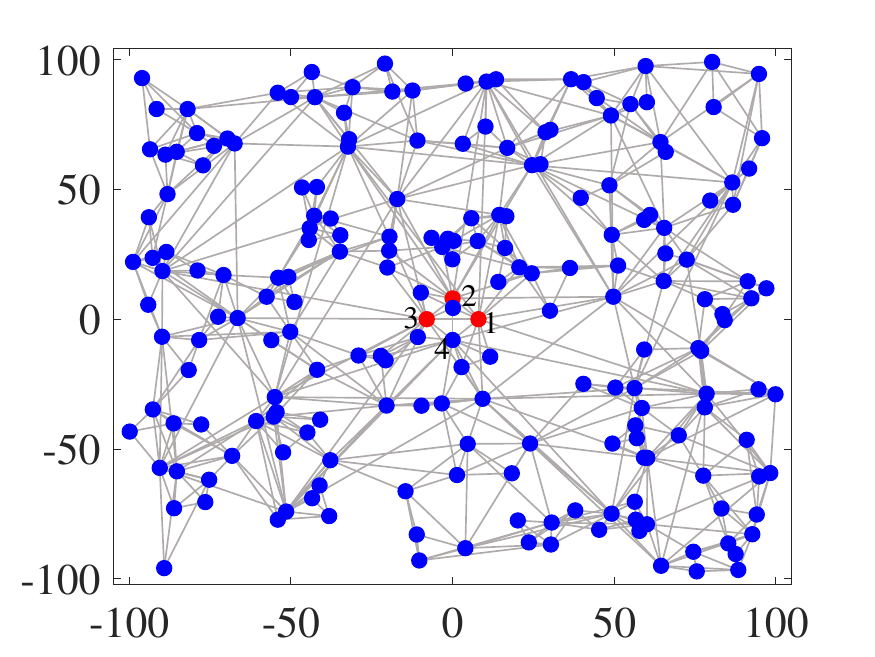}
						\label{fig:SimuCompare_204}
					}
					\caption{An example to grow affine frameworks using Algorithm~\ref{Algo:VertexAdd}. (a) The original framework with 4 agents. (b) The obtained framework after adding 50 agents. (c) The framework after adding after adding 100 agents. (d) The framework after adding after adding 200 agents.}
					\label{fig:SimuCompare}
				\end{figure}
				
				\begin{table}[h]
					\centering
					\caption{Average Runtime with network size when different framework growing strategy is imposed.}
					\begin{tabular}{ccc}
						\toprule 
						Number of Agents & 	Xiao's \cite{RAL_Xiao_2022}$ 
                        \left(s\right) $ & \textbf{Ours} $ \left(s\right) $\\
						\midrule
						5 Agents & 0.8886 & 0.0010822 \\
						6 Agents & 27.9123 & 0.0013588 \\
						7 Agents & 755.8002 & 0.001694 \\
						8 Agents & NA & 0.00217 \\
						20 Agents & NA & 0.01111 \\
						50 Agents & NA & 0.06963 \\
						100 Agents & NA & 0.27694 \\
						200 Agents & NA & 1.30814 \\
						\bottomrule
					\end{tabular}
					\label{tab:SimuCompare}
				\end{table}
				
				\subsection{Simulation 2: Edge Deletion \& Vertex Addition and Deletion}\label{sec:Simu_Ctrl}
				In this subsection, a scenario is designed to validate the effectiveness of our proposed affine framework construction algorithms, namely the proposed vertex addition, edge deletion and vertex deletion strategies in section \ref{sec:MainResult}. We couple the affine framework construction strategies with the formation tracking control law proposed in \cite{FITEE_Li_2022}, with the former providing the nominal topology for the latter. In the scenario, we consider a group of fixed-wing unmanned aerial vehicles (UAVs) modeled by unicycles moving in two-dimensional space while tracking moving leaders and achieving affine transformations.  
				The convergence of affine formation tracking errors can serve as a powerful evidence to prove the effectiveness of our framework construction algorithms. Moreover, the adopted control scheme \cite{FITEE_Li_2022} is just an example and can be replaced by other similar affine formation control laws.
				

			We consider a group of six agents with an original framework shown in Fig.~\ref{fig:SimuScen1_Topo1}, where the first three red vertices are regarded as leaders. Followers are driven to track leaders while achieving affine transformations. During the maneuver, agents adjust the topological connection between each other, as shown in Fig.~\ref{fig:SimuScen1_Topo2}. Then, two additional agents join the formation (Fig.~\ref{fig:SimuScen1_Topo3})  following Algorithm~\ref{Algo:VertexAdd}. After a period of formation flight, three agents leave the formation to perform other tasks, and the affine framework is reshaped as shown in Fig.~\ref{fig:SimuScen1_Topo4}. Fig.~\ref{fig:SimuScen1_Traje} depicts the trajectory of UAVs and Fig.~\ref{fig:SimuScen1_TrackError} shows the evolution of formation tracking errors $\| \bm{p}_i - \bm{p}_i^*\|$, where $ \bm{p}_i$ represents the position of the $i$-th UAV and $\bm{p}_i^*$ represents the desired one. It can be seen that the target formations can be achieved while time-varying affine transformations are performed to pass through various environments,  implying that the obtained framework is universally rigid and affine localizable.
			\begin{figure}
				\centering
				\subfigure[]{
					\includegraphics[scale=0.22]{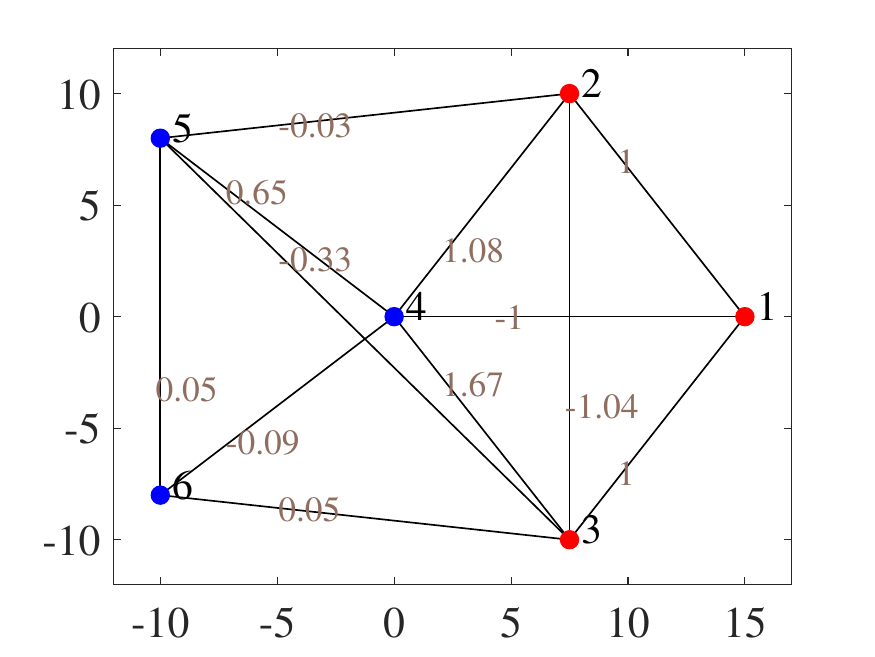}
					\label{fig:SimuScen1_Topo1}
				}
				\subfigure[]{
					\includegraphics[scale=0.22]{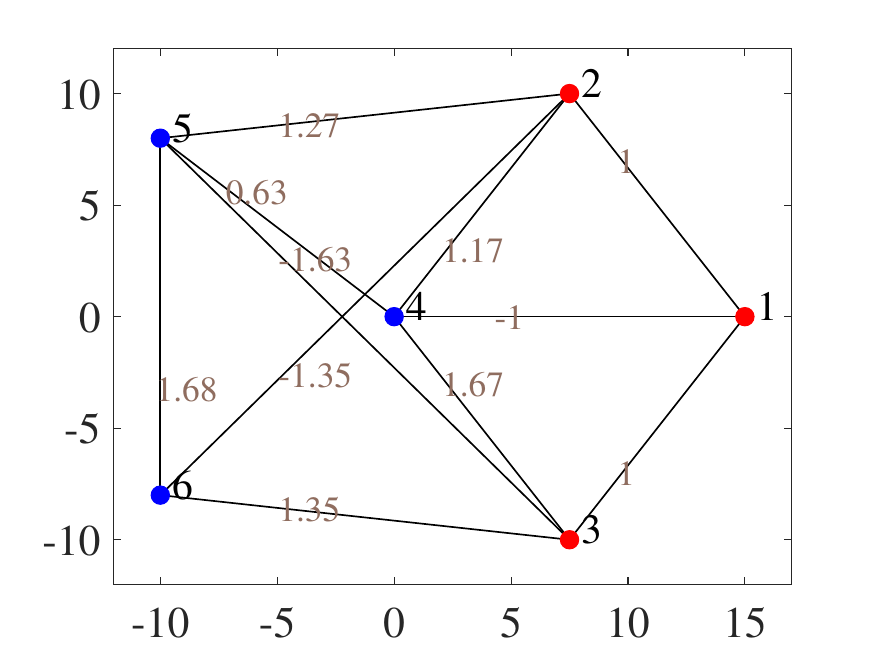}
					\label{fig:SimuScen1_Topo2}
				}
				\subfigure[]{
					\includegraphics[scale=0.22]{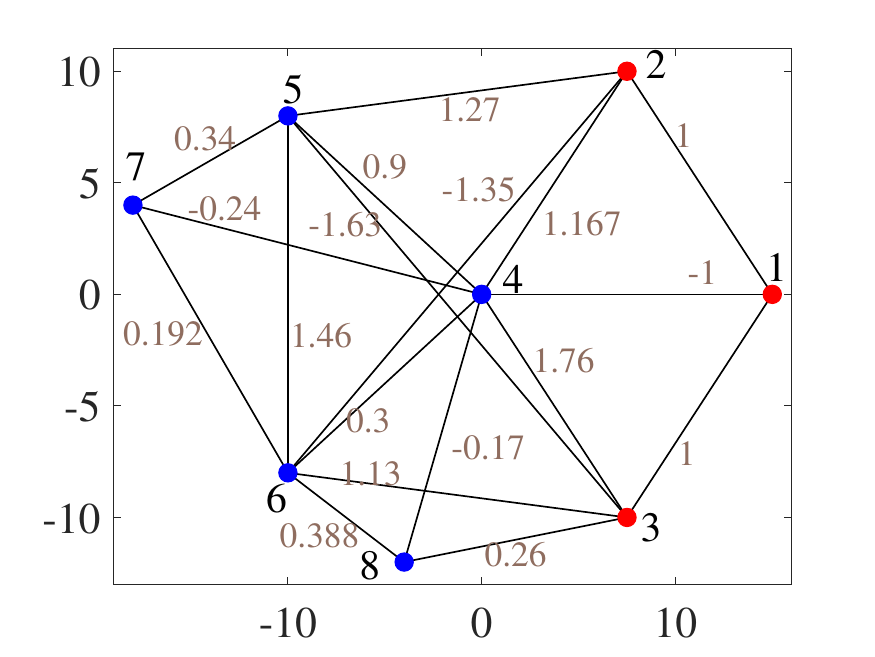}
					\label{fig:SimuScen1_Topo3}
				}
				\subfigure[]{
					\includegraphics[scale=0.22]{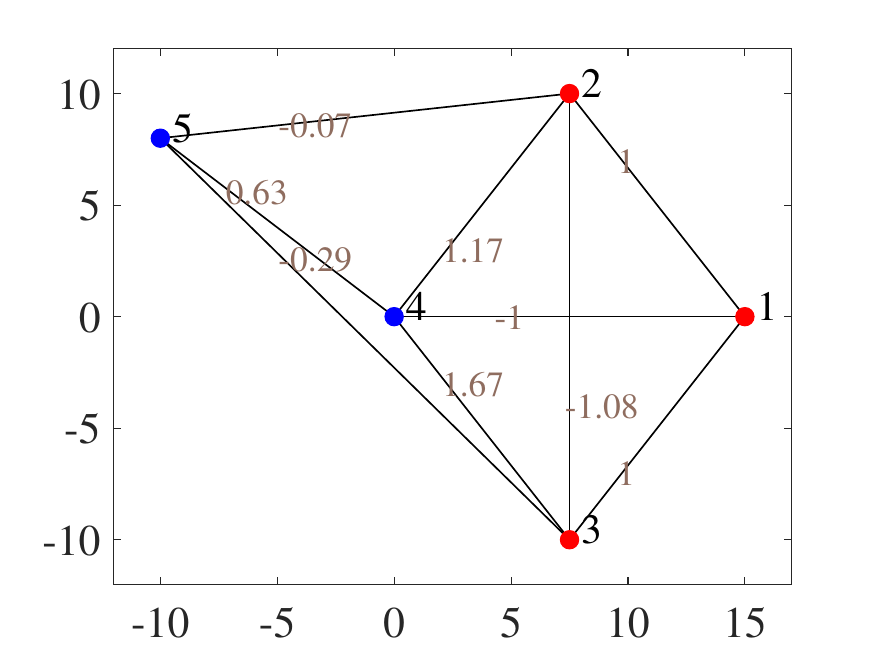}
					\label{fig:SimuScen1_Topo4}
				}
				\caption{An example of edge deletion, vertex addition and deletion. (a) The original framework. (b) The obtained framework after deleting $e_{23}$ and $e_{46}$. (c) The framework after adding two vertices, denoted by $v_7$ and $v_8$. (d) The framework after three vertices deletion.}
				\label{fig:SimuScen1_Topo}
			\end{figure}
			
			\begin{figure}
				\centering
				\includegraphics[scale=0.25]{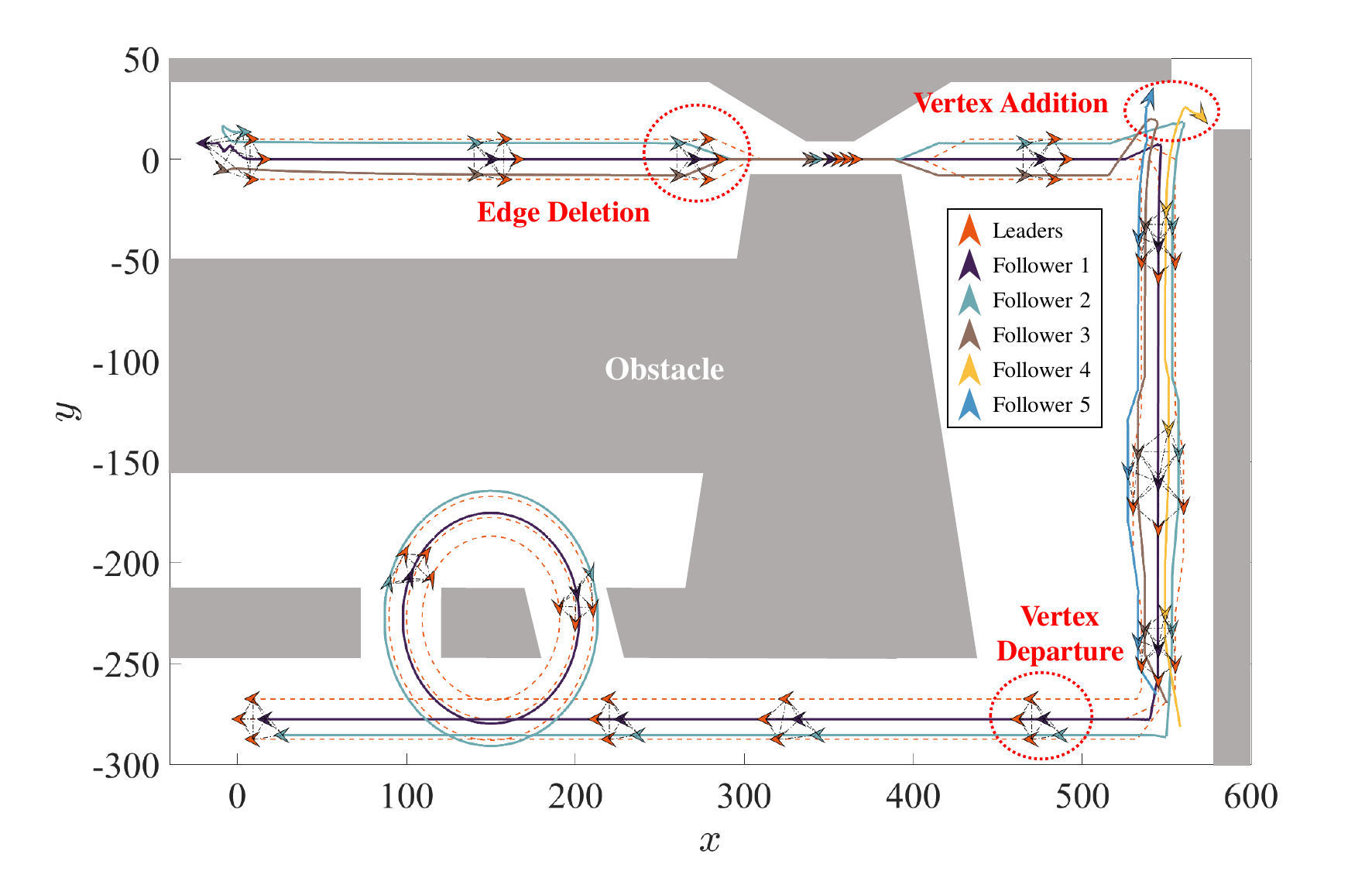}
				\caption{The trajectory of fixed-wing UAVs  under unexpected events.}
				\label{fig:SimuScen1_Traje}
			\end{figure}
			
			\begin{figure}
				\centering
				\includegraphics[width=0.3\textwidth]{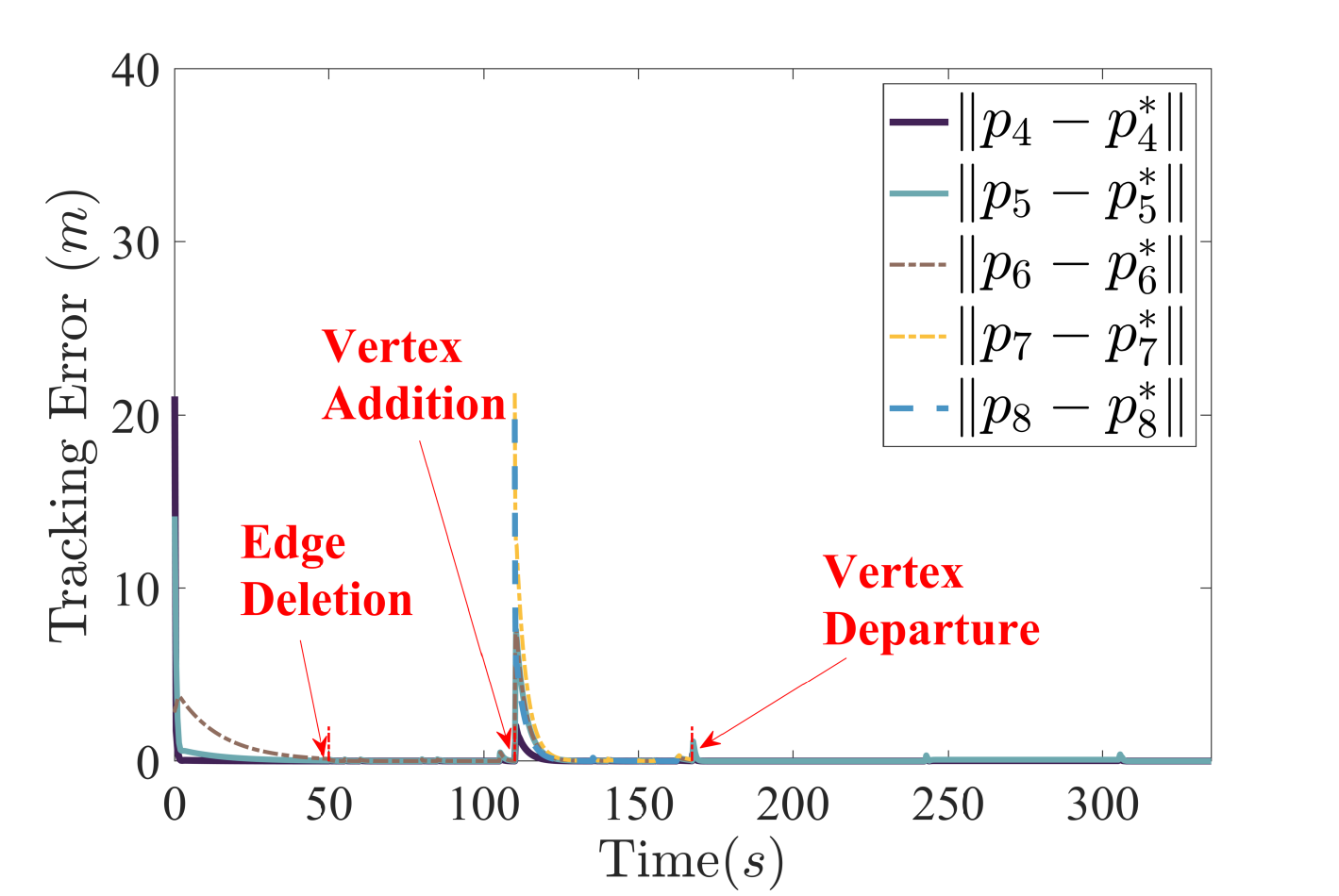}
				\caption{The evolution of formation tracking errors for fixed-wing UAVs in Scenario 1.}
				\label{fig:SimuScen1_TrackError}
			\end{figure}
			
			Based on the simulation results, it is evident that our proposed distributed framework construction algorithms are highly effective to generate and reconstruct affine frameworks, demonstrating the potential for extensive applications in robot swarms.
			
			\section{Conclusion}\label{sec:Conclusion}
			This paper have addressed the issue of constructing affine frameworks in a distributed manner in three scenarios, i.e., vertex addition, edge deletion and vertex deletion. We have designed strategies to construct frameworks with universal rigidity and affine localizability from the geometric perspective based on the structure of weighted graphs used to describe topology of affine formations. Naturally, comprehensive theoretical analysis has been provided to demonstrate the effectiveness, and simulations have been presented to verify the compatibility with distributed affine formation control proposals. Moreover, the comparative simulation has demonstrated the rapidity in constructing large-scale affine formations. 
			Our approach relaxes the requirements for global information and computing resources, making it more easy to implement for robot swarms to perform various tasks. One future research direction is to extend distributed framework construction algorithms to directed graphs. Furthermore, how to integrate the dynamic control scheme design of multi-agent systems with the proposed time-varying framework construction strategy in this paper for comprehensive theoretical analysis to obtain better performance of affine formation control, such as faster convergence speed and lower energy consumption, is also a promising research problem.
			
			\appendices
			\section{The proof of Theorem~\ref{theo:VertexAddition}}
			\label{sec:app}
			
			To complete the proof of Theorem~\ref{theo:VertexAddition}, the following lemmas are introduced for the analysis presented in the sequel.

			\begin{lemm}\label{lemma:ker}\cite{TCYB_Yang_2019}
				Given PSD matrices $\bm{X} \in \mathbb{R}^{n \times n}$ and $\bm{Y} \in \mathbb{R}^{n \times n}$, let $\bm{Z}=\bm{X}+\bm{Y}$. Then for any nonzero vector $\bm{\varepsilon} \in \mathbb{R}^{n }$, $\bm{\varepsilon} \in \operatorname{null}(\bm{Z})$ if and only if $\bm{\varepsilon} \in \operatorname{null}(\bm{X})$ and $\varepsilon \in \operatorname{null}(\bm{Y})$.
			\end{lemm}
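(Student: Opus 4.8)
The plan is to prove the two directions of the equivalence separately. The reverse implication is immediate and uses no positivity, while the forward implication rests on a single auxiliary fact about PSD matrices: for a PSD matrix $\bm{X}$, a vector lies in its null space if and only if it annihilates the associated quadratic form, i.e. $\bm{X}\bm{\varepsilon} = \bm{0}$ if and only if $\bm{\varepsilon}^T \bm{X} \bm{\varepsilon} = 0$.

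First I would dispose of the ``if'' direction. If $\bm{\varepsilon} \in \operatorname{null}(\bm{X})$ and $\bm{\varepsilon} \in \operatorname{null}(\bm{Y})$, then $\bm{X}\bm{\varepsilon} = \bm{0}$ and $\bm{Y}\bm{\varepsilon} = \bm{0}$, whence $\bm{Z}\bm{\varepsilon} = \bm{X}\bm{\varepsilon} + \bm{Y}\bm{\varepsilon} = \bm{0}$, so $\bm{\varepsilon} \in \operatorname{null}(\bm{Z})$. This step is purely algebraic and requires neither matrix to be PSD.

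For the ``only if'' direction, suppose $\bm{\varepsilon} \in \operatorname{null}(\bm{Z})$, so $\bm{Z}\bm{\varepsilon} = \bm{0}$ and hence $\bm{\varepsilon}^T \bm{Z} \bm{\varepsilon} = 0$. Expanding gives $\bm{\varepsilon}^T \bm{X}\bm{\varepsilon} + \bm{\varepsilon}^T \bm{Y}\bm{\varepsilon} = 0$. Since $\bm{X} \succeq 0$ and $\bm{Y} \succeq 0$, both summands are nonnegative, and a sum of two nonnegative reals vanishes only when each vanishes; thus $\bm{\varepsilon}^T \bm{X}\bm{\varepsilon} = 0$ and $\bm{\varepsilon}^T \bm{Y}\bm{\varepsilon} = 0$ separately. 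This use of positivity is where the hypotheses on $\bm{X}$ and $\bm{Y}$ are genuinely needed, and it is the conceptual heart of the argument.

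The remaining step, which I expect to be the only point requiring a little care (though it is standard), is to upgrade these scalar identities to the vector conclusions $\bm{X}\bm{\varepsilon} = \bm{0}$ and $\bm{Y}\bm{\varepsilon} = \bm{0}$. For this I would factor each PSD matrix through a square root, writing $\bm{X} = \bm{M}^T \bm{M}$ for a suitable $\bm{M}$; its existence follows from the spectral decomposition $\bm{X} = \bm{Q}\bm{\Lambda}\bm{Q}^T$ with $\bm{\Lambda} \succeq 0$, taking $\bm{M} = \bm{\Lambda}^{1/2}\bm{Q}^T$. Then $\bm{\varepsilon}^T \bm{X}\bm{\varepsilon} = \|\bm{M}\bm{\varepsilon}\|^2 = 0$ forces $\bm{M}\bm{\varepsilon} = \bm{0}$, and therefore $\bm{X}\bm{\varepsilon} = \bm{M}^T(\bm{M}\bm{\varepsilon}) = \bm{0}$; the identical reasoning applied to $\bm{Y}$ yields $\bm{Y}\bm{\varepsilon} = \bm{0}$. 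Consequently $\bm{\varepsilon} \in \operatorname{null}(\bm{X})$ and $\bm{\varepsilon} \in \operatorname{null}(\bm{Y})$, which completes the equivalence. An alternative to the factorization is to argue directly from the spectral decomposition that $\bm{\varepsilon}^T \bm{X}\bm{\varepsilon} = \sum_i \lambda_i (\bm{q}_i^T \bm{\varepsilon})^2$, so that nonnegativity of the eigenvalues $\lambda_i$ forces every term with $\lambda_i > 0$ to vanish, whence $\bm{X}\bm{\varepsilon} = \sum_i \lambda_i (\bm{q}_i^T \bm{\varepsilon}) \bm{q}_i = \bm{0}$.
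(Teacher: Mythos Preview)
Your proof is correct and entirely standard. Note, however, that the paper does not actually supply its own proof of this lemma: it is quoted as a known result from \cite{TCYB_Yang_2019} and used as a black box in the proof of Theorem~\ref{theo:VertexAddition}. There is therefore no in-paper argument to compare against; your quadratic-form argument (pass to $\bm{\varepsilon}^T\bm{Z}\bm{\varepsilon}=0$, split into two nonnegative summands, then upgrade each via a PSD square root or spectral decomposition) is exactly the classical proof one would expect, and it is complete as written.
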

			
			
			\begin{proof}
				In the newly obtained framework $\left(\mathcal{G}^+,\bm{p}^+\right)$, we have $\mathcal{V}_l^+ = \mathcal{V}_l$, $\mathcal{V}_f^+ := \mathcal{V}_f \cup \left\{v_u\right\}$ and $\bm{\Omega}^+ :=\left[\begin{array}{cc}
					\bm{\Omega}^+_{l l} & \bm{\Omega}^+_{l f} \\
					\bm{\Omega}^+_{f l} & \bm{\Omega}^+_{f f}
				\end{array}\right]$.
				Combining Lemmas~\ref{lemma:StressUniRigid} and \ref{lemma:AffineLocalizable}, to demonstrate $\left(\mathcal{G}^+,\bm{p}^+\right)$ is universally rigid and affinely localizable, it is equivalent to prove the following conditions: (1) $\bm{\Omega}^+$ is PSD; (2)  $\operatorname{rank}\left(\bm{\Omega}^+\right) = n-2$; (3) the block matrix $\bm{\Omega}_{ff}^+$ is positive definite. These are proven one by one as follows.
				
				\textbf{The proof of (1):} Based on Assumption~\ref{assu:OriginalFrame} and eq.~\eqref{eq:Omega_u1}, we have $\bm{\Omega}$ and $\bm{\Omega}_u$ are both PSD. Accordingly, it is obvious that the matrix $\bm{\Omega}^+$ in \eqref{eq:StressMatrixPlus} is PSD, because $\bm{\Omega}_a$ and $\bm{\Omega}_b$ are both PSD. 
					
				\textbf{The proof of (2):} Then, we analyze the rank of $\bm{\Omega}^+$. According to Lemma~\ref{lemm:RN}, Let us start with the kernel of $\bm{\Omega}^+$.
					
				Based on Lemma~\ref{lemma:NullStress}, we obtain the null space of $\bm{\Omega}_a$ in eq.~\eqref{eq:StressMatrixPlus} is
				\begin{equation}
					\footnotesize
					\begin{aligned}
						\mathbb{S}_a =& \operatorname{span}\left(\left[\begin{array}{c}
							\bm{p}_{1, 1}\\
							\vdots \\
							\bm{p}_{i, 1}\\
							\bm{p}_{j, 1}\\
							\bm{p}_{k, 1}\\
							\bm{q}_{u, 1}
						\end{array}\right],~\left[\begin{array}{c}
							\bm{p}_{1, 2}\\
							\vdots \\
							\bm{p}_{i, 2}\\
							\bm{p}_{j, 2}\\
							\bm{p}_{k, 2}\\
							\bm{q}_{u, 2}
						\end{array}\right],~\left[\begin{array}{c}
							1\\
							\vdots \\
							1\\
							1\\
							1\\
							c_u
						\end{array}\right]\right) \\
						\triangleq & \operatorname{span}\left(\bm{n}_1^a,~\bm{n}_2^a,~\bm{n}_3^a\right),
					\end{aligned}
				\end{equation}
			where $\bm{p}_{i, 1}$ is the first component of $\bm{p}_i \left( i \in \left\{1,\cdots,n\right\}\right)$, and $\bm{p}_{i, 2}$ is defined analogously. $\bm{q}_{u,\sim }$ and $c_u$ are any arbitrary scalars. Similarly, the null space of $\bm{\Omega}_b$ is
					\begin{equation}
						\footnotesize
						\begin{aligned}
							\mathbb{S}_b = & \operatorname{span}\left(\left[\begin{array}{c}
								\bm{q}_{1, 1}\\
								\vdots \\
								\bm{q}_{\left(i-1\right)\cdot 1}\\
								\bm{p}_{i, 1}\\
								\bm{p}_{j, 1}\\
								\bm{p}_{k, 1}\\
								\bm{p}_{u, 1}
							\end{array}\right],~\left[\begin{array}{c}
								\bm{q}_{1, 2}\\
								\vdots \\
								\bm{q}_{\left(i-1\right)\cdot 2}\\
								\bm{p}_{i, 2}\\
								\bm{p}_{j, 2}\\
								\bm{p}_{k, 2}\\
								\bm{p}_{u, 2}
							\end{array}\right],~\left[\begin{array}{c}
								c_1\\
								\vdots \\
								c_i\\
								1\\
								1\\
								1\\
								1
							\end{array}\right]\right) \\
							\triangleq & \operatorname{span}\left(\bm{n}_1^b,~\bm{n}_2^b,~\bm{n}_3^b\right).
						\end{aligned}
					\end{equation}
					
			According to Lemma~\ref{lemma:ker} and eq.~\eqref{eq:StressMatrixPlus}, the null space of $\bm{\Omega}^+$ can be presented as $\operatorname{null}\left(\bm{\Omega}^+\right) = \mathbb{S}_a \cap \mathbb{S}_b$, whose nominal form is shown as
					\begin{equation}
						\alpha_1 \bm{n}_1^a + \alpha_2 \bm{n}_2^a + \alpha_3 \bm{n}_3^a = \beta_1 \bm{n}_1^b + \beta_2 \bm{n}_2^b + \beta_3 \bm{n}_3^b,
					\end{equation}
			where $\alpha_i$ and $\beta_i$ $\left(i =1,2,3\right)$ are scalars that are not all zero. Because $\mathbb{S}_a$ and $\mathbb{S}_b$ share three  identical elements, the following equation is deduced,
					\begin{equation}
						\footnotesize
						\left(\alpha_1-\beta_1\right)\left[\begin{array}{c}
							\bm{p}_{i, 1}\\
							\bm{p}_{j, 1}\\
							\bm{p}_{k, 1}
						\end{array}\right] + \left(\alpha_2-\beta_2\right)\left[\begin{array}{c}
							\bm{p}_{i, 2}\\
							\bm{p}_{j, 2}\\
							\bm{p}_{k, 2}
						\end{array}\right]  + \left(\alpha_3-\beta_3\right)\left[\begin{array}{c}
							1\\
							1\\
							1
						\end{array}\right] = \bm{0},
					\end{equation}
			which means
					\begin{equation}
						\footnotesize
						\underbrace{\left[\begin{array}{ccc}
								\bm{p}_i & \bm{p}_j & \bm{p}_k\\
								1 & 1 & 1
							\end{array}\right]}_{\triangleq \bm{P}_k} \left[\begin{array}{c}
							\alpha_1-\beta_1\\
							\alpha_2-\beta_2\\
							\alpha_3-\beta_3
						\end{array}\right] = \bm{0}.
					\end{equation}
			Since the configuration $\bm{p} $ is in general position, vertices $v_i$, $v_j$ and $v_k$ are not affinely dependent so that $\operatorname{rank}\left(\bm{P}_k\right) = 3$ holds. Accordingly, we have $\alpha_i = \beta_i, ~\forall i \in \left\{1,~2,~3\right\}$. Assume there is a vector $\bm{v} \in \operatorname{null}\left(\bm{\Omega}^+\right)$, then we will show that $\bm{v}$ can be represented by $\bm{p}^+_{i, 1}$, $\bm{p}^+_{i, 2}~\left(i \in \left\{1, \cdots,n+1\right\}\right)$ and $\bm{1}_{n+1}$. According to Lemma~\ref{lemma:ker}, $\bm{v} \in \operatorname{null}\left(\bm{\Omega}^+\right)$ implies $\bm{v} \in \mathbb{S}_a$ and $\bm{v} \in \mathbb{S}_b$. That is, 
					\begin{equation}
							\bm{v} = \alpha_1 \bm{n}_1^a + \alpha_2 \bm{n}_2^a + \alpha_3 \bm{n}_3^a =\alpha_1 \bm{n}_1^b + \alpha_2 \bm{n}_2^b + \alpha_3 \bm{n}_3^b. \notag
					\end{equation}
			Thus, let the last entry in $\bm{n}_i^a$ equal the last entry in $\bm{n}_i^b$, and the first $n$ entries in $\bm{n}_i^b$ equal the first $n$ entries in $\bm{n}_i^a$. It follows that 
					\begin{equation}
						\small
						\bm{v} = \alpha_1 \left[\begin{array}{c}
							\bm{p}_{1, 1}\\
							\vdots \\
							\bm{p}_{i, 1}\\
							\bm{p}_{j, 1}\\
							\bm{p}_{k, 1}\\
							\bm{p}_{u, 1}
						\end{array}\right] + \alpha_2 ~\left[\begin{array}{c}
							\bm{p}_{1, 2}\\
							\vdots \\
							\bm{p}_{i, 2}\\
							\bm{p}_{j, 2}\\
							\bm{p}_{k, 2}\\
							\bm{p}_{u, 2}
						\end{array}\right] + \alpha_3\left[\begin{array}{c}
							1\\
							\vdots \\
							1\\
							1\\
							1\\
							1
						\end{array}\right],
					\end{equation}
			which means $${\footnotesize \operatorname{null}\left(\bm{\Omega}^+\right) = \operatorname{col}(\underbrace{\left[\begin{array}{cccccc}
							\bm{p}_1 & \cdots & \bm{p}_i & \bm{p}_j & \bm{p}_k & \bm{p}_u \\
							1 & \cdots & 1 & 1 & 1 & 1
						\end{array}\right]^T}_{\triangleq \bar{P}\left(\bm{p}^+\right)})}.$$
			Since $\operatorname{rank}\left(\bar{P}\left(\bm{p}^+\right)\right) = 3$, we have $\operatorname{nullity}\left(\bm{\Omega}^+\right) = 3$ so that $\operatorname{rank}\left(\bm{\Omega}^+\right) = n+1-3= n-2$. According to Lemma~\ref{lemma:StressUniRigid}, the framework $\left(\mathcal{G}^+, \bm{p}^+ \right)$ is universally rigid.
					
			\textbf{The proof of (3):} Now consider the block matrix $\bm{\Omega}_{ff}^+$. The added vertex $v_u$ plays the role of a follower in $\left(\mathcal{G}^+,\bm{p}^+\right)$. 
			We first prove that the diagonal elements of $\bm{\Omega}_{ff}^+$, denoted as $\varpi_{uu} = s \phi_4^2 >0 $. With a positive $s$, it is equivalent to prove $\phi_4 \ne 0$. 
			Suppose $\phi_4 = 0$, we get the following equations based on eq.~\eqref{eq:phi},
			\begin{equation}
				\begin{aligned}
					\phi_1 +\phi_2+\phi_3 =0 ,
					~ \phi_1 \left(\bm{p}_i -\bm{ p}_k \right) + \phi_2 \left(\bm{p}_j - \bm{p}_k \right)=\bm{0} .
				\end{aligned}
			\end{equation}
			Since $\left(\mathcal{G},\bm{p}\right)$ is in general position, three vertices $v_i,v_j,v_k$ are non-collinear. The vectors $\left(\bm{p}_i -\bm{ p}_k \right)$ and $\left(\bm{p}_j - \bm{p}_k \right)$ are linearly independent, implying $\phi_1 = \phi_2 = 0$. Then, $\phi_3 = 0$ holds. Consequently, $\bm{\phi} = \bm{0}$ is obtained, which contradicts with the fact that $\bm{\phi}$ is a nonzero vector. Therefore, we have $\phi_4 \ne 0$ so that $\varpi_{uu}$ is positive when $s>0$.

			Depending on whether the selected vertices $v_i$, $v_j$ and $v_k$ represent leaders or not, there are four cases in total as below. 
			\begin{itemize}
				\item[\labelitemi] \textbf{All the vertices $v_i,v_j,v_k\in\mathcal{V}_l$.}
					
				Based on Assumption~\ref{assu:OriginalFrame}, we get the block stress matrix $\bm{\Omega}_{ff}$ associated with $\bm{\Omega}$ is positive definite.
				If $v_i,v_j,v_k\in\mathcal{V}_l$, then $\bm{\Omega}_{ff}^+$ can be described by 
					$\bm{\Omega}_{ff}^+ = \left[\begin{array}{c:c}
							\bm{\Omega}_{ff} & \bm{0}_{n_f \times 1}\\
							\hdashline
							\bm{0}_{1 \times n_f} & \varpi_{uu}
						\end{array}\right]$. 
				Since $\varpi_{uu}>0$, $\bm{\Omega}_{ff}^+  \succ 0$ holds.
                
				\item[\labelitemi] \textbf{Two of the vertices $v_i,v_j,v_k$ belong to $\mathcal{V}_l$, and one belongs to $\mathcal{V}_f$.}
					
				Without losing generality, assume $v_i \in \mathcal{V}_f$ and $v_j,~v_k \in \mathcal{V}_l$ so that  $\bm{\Omega}_{ff}$ can be divided into {$\footnotesize
						\bm{\Omega}_{ff} = \left[\begin{array}{c:c}
							\bm{\Omega}_{ff}^{P1} & \bm{\Omega}_{ff}^{P2}\\
							\hdashline
							\left(\bm{\Omega}_{ff}^{P2}\right)^T & \Omega_{ii}
						\end{array}\right] \succ 0$}. 
				Based on Lemma \ref{lemma:PD}, we have $\Omega_{ii} > 0$ and $\bm{\Omega}_{ff}^{P1}  - \dfrac{1}{\Omega_{ii}} \bm{\Omega}_{ff}^{P2} \left(\bm{\Omega}_{ff}^{P2}\right)^T \succ 0 $. After adding a new vertex $v_u$, based on eq.~\eqref{eq:StressMatrixPlus}, the block stress matrix $\bm{\Omega}_{ff}^+ $ associated with the augmented framework $\left(\mathcal{G}^+,\bm{p}^+\right)$ is described as follows.
					\begin{equation}\label{eq:OffPlus1}
						\footnotesize
						\bm{\Omega}_{ff}^+ = \left[\begin{array}{c:cc}
							\bm{\Omega}_{ff}^{P1} & \bm{\Omega}_{ff}^{P2} & \bm{0}_{\left(n_f-1\right) \times 1}\\
							\hdashline
							\left(\bm{\Omega}_{ff}^{P2}\right)^T & \Omega_{ii} + \dfrac{\varpi_{iu}^2}{\varpi_{uu}}& -\varpi_{iu}\\
							\bm{0}_{1 \times \left(n_f-1\right) } &   -\varpi_{iu} & \varpi_{uu}\\
						\end{array}\right].
					\end{equation}
				Since $\varpi_{uu} > 0$ and $\Omega_{ii} + \dfrac{\varpi_{iu}^2}{\varpi_{uu}} - \dfrac{\varpi_{iu}^2}{\varpi_{uu}} = \Omega_{ii} >0$, we get ${\footnotesize \left[\begin{array}{cc}
							\Omega_{ii} + \dfrac{\varpi_{iu}^2}{\varpi_{uu}}& -\varpi_{iu}\\
							-\varpi_{iu} & \varpi_{uu}
						\end{array}\right] \succ 0}$ based on Lemma~\ref{lemma:PD}.
				Hence, the following equation is deduced,
					\begin{equation}
						\footnotesize
						\begin{aligned}
							&\bm{\Omega}_{ff}^{P1} -\left[\begin{array}{cc}
								\bm{\Omega}_{ff}^{P2} & \bm{0}_{\left(n_f-1\right) \times 1}
							\end{array}\right] \cdot \\
							&\left[\begin{array}{cc}
								\Omega_{ii} + \dfrac{\varpi_{iu}^2}{\varpi_{uu}}& -\varpi_{iu}\\
								-\varpi_{iu} & \varpi_{uu}
							\end{array}\right]^{-1}\left[\begin{array}{c}
								\left(\bm{\Omega}_{ff}^{P2}\right)^T \\
								\bm{0 }_{1 \times \left(n_f-1\right) }
							\end{array}\right]\\
								=&\bm{\Omega}_{ff}^{P1}  - \dfrac{1}{\Omega_{ii}} \bm{\Omega}_{ff}^{P2} \left(\bm{\Omega}_{ff}^{P2}\right)^T
								\succ 0.
							\end{aligned}
						\end{equation}
			Thus, $\bm{\Omega}_{ff}^+$ in \eqref{eq:OffPlus1} is positive definite.
						
			\item[\labelitemi]\textbf{Two of the vertices $v_i,v_j,v_k$ belong to $\mathcal{V}_f$, and one belongs to $\mathcal{V}_l$.}
						
			In this case, we apply Lemma \ref{lemma:PD} repeatedly in a similar way to prove that the symmetric matrix $\bm{\Omega}_{ff}^+$ is positive definite. Without losing its generality, assume $v_i,~v_j \in \mathcal{V}_f$ and $v_k \in \mathcal{V}_l$, and $\bm{\Omega}_{ff} $ can be presented as 
						${\footnotesize \bm{\Omega}_{ff} = \left[\begin{array}{c:c}
								\bm{\Omega}_{ff}^{P1} &  \bm{\Omega}_{ff}^{P2} \\
								\hdashline
								\left(\bm{\Omega}_{ff}^{P2}\right)^T & \begin{array}{cc}
									\Omega_{ii} & \Omega_{ij} \\
									\Omega_{ij}  & \Omega_{jj} 
								\end{array}
							\end{array}\right]}$.
			Obviously, due to $\bm{\Omega}_{ff} \succ 0$, we have $\left[\begin{array}{cc}
							\Omega_{ii}   & \Omega_{ij}\\
							\Omega_{ij}   & \Omega_{jj}
						\end{array}\right] \succ 0$ and 
						\begin{equation}\label{eq:ker2}
                        \footnotesize
							\bm{\Omega}_{ff}^{P1}  - \bm{\Omega}_{ff}^{P2} \left[\begin{array}{cc}
								\Omega_{ii}   & \Omega_{ij}\\
								\Omega_{ij}   & \Omega_{jj}
							\end{array}\right]^{-1} \left(\bm{\Omega}_{ff}^{P2}\right)^T  \succ 0.
							\end{equation}
							
			Based on eq.~\eqref{eq:StressMatrixPlus}, the block stress matrix $\bm{\Omega}_{ff}^+$ is established as follows. 
					\begin{equation}\label{eq:ker3}
						\footnotesize
						\bm{\Omega}_{ff}^+ = \left[\begin{array}{c:c}
							\bm{\Omega}_{ff}^{P1} & \begin{array}{cc}
								\bm{\Omega}_{ff}^{P2} &  \bm{0}_{\left(n_f-2\right) \times 1}
							\end{array}\\
							\hdashline
							\begin{array}{c}
								\left(\bm{\Omega}_{ff}^{P2}\right)^T  \\
								\bm{0}_{1 \times \left(n_f-2\right)}
							\end{array} & \bm{\Omega}_{ff}^{P3+}  
						\end{array}\right],
					\end{equation}
			where 
					\begin{equation}
						\footnotesize
						\bm{\Omega}_{ff}^{P3+} \triangleq \left[\begin{array}{ccc}
							\Omega_{ii} + \dfrac{\varpi_{iu}^2}{\varpi_{uu}}& \Omega_{ij} + \dfrac{\varpi_{iu}\varpi_{ju}}{\varpi_{uu}}& -\varpi_{iu}\\
							\Omega_{ij} + \dfrac{\varpi_{iu}\varpi_{ju}}{\varpi_{uu}}& \Omega_{jj} + \dfrac{\varpi_{ju}^2}{\varpi_{uu}}& -\varpi_{ju}\\
							-\varpi_{iu} & -\varpi_{ju} & \varpi_{uu}
						\end{array}\right].\notag
					\end{equation}
							
			Since $\varpi_{uu} >0$ and
					\begin{equation}
						\footnotesize
						\begin{aligned}
							&\left[\begin{array}{cc}
								\Omega_{ii} + \dfrac{\varpi_{iu}^2}{\varpi_{uu}}& \Omega_{ij} + \dfrac{\varpi_{iu}\varpi_{ju}}{\varpi_{uu}} \\
								\Omega_{ij} + \dfrac{\varpi_{iu}\varpi_{ju}}{\varpi_{uu}}& \Omega_{jj} + \dfrac{\varpi_{ju}^2}{\varpi_{uu}}
							\end{array}\right] - \dfrac{1}{\varpi_{uu}}\\
							&\left[\begin{array}{c}
								-\varpi_{iu}\\
								-\varpi_{ju}
							\end{array}\right] \left[\begin{array}{cc}
								-\varpi_{iu} &-\varpi_{ju}
							\end{array}\right] = \left[\begin{array}{cc}
								\Omega_{ii} & \Omega_{ij}\\
								\Omega_{ij} & \Omega_{jj} 
							\end{array}\right] \succ 0,\notag
						\end{aligned}
					\end{equation}
			we have $\bm{\Omega}_{ff}^{P3+} \succ 0$.
			It follows from eq.~\eqref{eq:ker3} and eq.~\eqref{eq:ker2} that 
							\begin{equation}
								\footnotesize
								\begin{aligned}
									\bm{\Omega}_{ff}^{P1} - \left[\begin{array}{cc}
										\bm{\Omega}_{ff}^{P2} & \bm{0}
									\end{array}\right]
									\left(\bm{\Omega}_{ff}^{P3+} \right)^{-1}
									\left[\begin{array}{c}
										\left(\bm{\Omega}_{ff}^{P2}\right)^T \\
										\bm{0}_{1 \times \left(n_f-2\right) }
									\end{array}\right] \succ  0.
								\end{aligned}
							\end{equation}
			By applying Lemma \ref{lemma:PD} again, we have $\bm{\Omega}_{ff}^+\succ0$.
							
			\item[\labelitemi] \textbf{All the vertices $v_i,~v_j,~v_k \in\mathcal{V}_f$.} 
							
			The analysis shares the same idea as in the above two cases. The block stress matrix $\bm{\Omega}_{ff} $ is positive definite and can be divided into
							\begin{equation}
                            \footnotesize
								\notag
								\bm{\Omega}_{ff} = \left[\begin{array}{c:c}
									\bm{\Omega}_{ff}^{P1} &  \bm{\Omega}_{ff}^{P2}\\
									\hdashline
									\left(\bm{\Omega}_{ff}^{P2}\right)^T & \underbrace{\begin{array}{ccc}
										\Omega_{ii}   & \Omega_{ij} & \Omega_{ik}\\
										\Omega_{ij}   & \Omega_{jj} & \Omega_{jk}\\
										\Omega_{ik}   & \Omega_{jk} & \Omega_{kk}
									\end{array}}_{=\bm{\Omega}_{ff}^{P3} }
								\end{array}\right] \succ 0.
							\end{equation}
			Thus, we have 
							\begin{equation} \label{eq:NI4_Omegaff}
                            \footnotesize
								\begin{aligned}
									&\bm{\Omega}_{ff}^{P1} - \bm{\Omega}_{ff}^{P2} \left(\bm{\Omega}_{ff}^{P3}\right)^{-1} \left(\bm{\Omega}_{ff}^{P2}\right)^T \succ 0.
								\end{aligned}
							\end{equation}
							
			After adding a new vertex $v_u$ as a follower, we have 
							\begin{equation}
								\notag
								\footnotesize
								\begin{aligned}
									\bm{\Omega}_{ff}^+ &= \left[\begin{array}{c:c}
										\bm{\Omega}_{ff}^{P1} &  \begin{array}{cc}
											\bm{\Omega}_{ff}^{P2} &   \bm{0}_{\left(n_f-3\right) \times 1}
										\end{array}\\
										\hdashline
										\begin{array}{c}
											\left(\bm{\Omega}_{ff}^{P2}\right)^T  \\
											\bm{0}_{1 \times \left(n_f-3\right)}
										\end{array} & \bm{\Omega}_{ff}^{P3+} 
									\end{array}\right], \\
								\end{aligned}
							\end{equation}
			where 
				$ \bm{\Omega}_{ff}^{P3+} = \left[\begin{array}{cc}
					\bm{\Omega}_{ff}^{P3}   & \bm{0}\\
					\bm{0}  & 0
				\end{array}\right] + \bm{\Omega}_{u}$, and $ \bm{\Omega}_{u}$ is shown as below. 
				\begin{equation}\label{eq:4_OmegaffPlus}
                \footnotesize
					\bm{\Omega}_{u}= \left[\begin{array}{cccc}
						\dfrac{\varpi_{iu}^2}{\varpi_{uu}} &  \dfrac{\varpi_{iu}\varpi_{ju}}{\varpi_{uu}} &  \dfrac{\varpi_{iu}\varpi_{ku}}{\varpi_{uu}} & -\varpi_{iu}\\
						\dfrac{\varpi_{iu}\varpi_{ju}}{\varpi_{uu}}&  \dfrac{\varpi_{ju}^2}{\varpi_{uu}} &  \dfrac{\varpi_{ju} \varpi_{ku}}{\varpi_{uu}} & -\varpi_{ku}\\
						\dfrac{\varpi_{iu}\varpi_{ku}}{\Omega_{uu}}&  \dfrac{\varpi_{ju}\varpi_{ku}}{\varpi_{uu}} &  \dfrac{\varpi_{ku}^2}{\varpi_{uu}} & -\varpi_{ju}\\
						-\varpi_{iu} & -\varpi_{ju} & -\varpi_{ku} & \varpi_{uu}
					\end{array}\right] 
				\end{equation}
								
			Since $\varpi_{uu}>0$ and 
				\begin{equation}
					\footnotesize
					\begin{aligned}
						&\left[\begin{array}{cccc}
							\Omega_{ii} + \dfrac{\varpi_{iu}^2}{\varpi_{uu}}& \Omega_{ij} + \dfrac{\varpi_{iu}\varpi_{ju}}{\varpi_{uu}} & \Omega_{ik} + \dfrac{\varpi_{iu}\varpi_{ku}}{\varpi_{uu}} \\
							\Omega_{ij} + \dfrac{\varpi_{iu}\varpi_{ju}}{\varpi_{uu}}& \Omega_{jj} + \dfrac{\varpi_{ju}^2}{\varpi_{uu}} & \Omega_{jk} + \dfrac{\varpi_{ju} \varpi_{ku}}{\varpi_{uu}} \\
							\Omega_{ik} + \dfrac{\varpi_{iu}\varpi_{ku}}{\varpi_{uu}}& \Omega_{jk} + \dfrac{\varpi_{ju}\varpi_{ku}}{\varpi_{uu}} & \Omega_{kk} + \dfrac{\varpi_{ku}^2}{\varpi_{uu}} 
						\end{array}\right]\\
						& - \left[\begin{array}{c}
							-\varpi_{iu}\\
							-\varpi_{ju}\\
							-\varpi_{ku}
						\end{array}\right]\dfrac{1}{\varpi_{uu}} \left[\begin{array}{ccc}
							-\varpi_{iu} &-\varpi_{ju} & -\varpi_{ku}
						\end{array}\right] \\
						=& \bm{\Omega}_{ff}^{P3} \succ {0},
					\end{aligned}
				\end{equation}
			we get $\bm{\Omega}_{ff}^{P3+}\succ {0} $. By applying Lemma \ref{lemma:PD} again, we can prove the matrix $\bm{\Omega}_{ff}^+$ is positive definite, as shown below.
								
				\begin{equation}
                \footnotesize
					\begin{aligned}
						&\bm{\Omega}_{ff}^{P1} - \left[\begin{array}{cc}
							\bm{\Omega}_{ff}^{P2} & \bm{0}
						\end{array}\right] \left(\bm{\Omega}_{ff}^{P3+}\right)^{-1}
						\left[\begin{array}{c}
							\left(\bm{\Omega}_{ff}^{B2}\right)^T \\
							\bm{0}
						\end{array}\right] \\
						=&\bm{\Omega}_{ff}^{P1} - \bm{\Omega}_{ff}^{P2}
						\left(\bm{\Omega}_{ff}^{P3}\right)^{-1} \left(\bm{\Omega}_{ff}^{P2}\right)^T  \overset{\eqref{eq:NI4_Omegaff}}{\succ} 0.
					\end{aligned}
				\end{equation}
			According to Lemma \ref{lemma:PD}, it is proved that $\bm{\Omega}_{ff}^+$ in \eqref{eq:4_OmegaffPlus} is also positive definite.
			\end{itemize}

			According to Lemma \ref{lemma:NullStress}, the stress matrix $\bm{\Omega}^+$ associated with the augmented framework $\left(\mathcal{G}^+,\bm{p}^+\right)$ follows that
				\begin{equation}
                \footnotesize
					\left(\bm{\Omega}^+ \otimes \mathbf{I}_d\right) \bm{p}^+ = \left[\begin{array}{cc}
						\bar{\bm{\Omega}}_{ll}^+ & \bar{\bm{\Omega}}_{lf}^+ \\
						\bar{\bm{\Omega}}_{fl}^+ & \bar{\bm{\Omega}}_{ff}^+ 
					\end{array}\right]\left[\begin{array}{c}
						\bm{p}_l^+ \\
						\bm{p}_f^+ 
					\end{array}\right] = \bm{0}.
				\end{equation}
			Accordingly, we have $\bar{\bm{\Omega}}_{fl}^+ \bm{p}_l^+ + \bar{\bm{\Omega}}_{ff}^+ \bm{p}_f^+ =\bm{0} $. We clarify that the augmented framework $\left(\mathcal{G}^+,\bm{p}^+\right)$ is affinely localized by the selected leaders since $\bm{\Omega}_{ff}^+ \succ 0$, which leads to $ \bm{p}_f^+ = - \left( \bar{\bm{\Omega}}_{ff}^+ \right)^{-1} \bar{\bm{\Omega}}_{fl}^+ \bm{p}_l^+ $.
							
			With all the discussions above, the augmented framework $\left(\mathcal{G}^+,\bm{p}^+\right)$ obtained from adding a new $v_u$ and three weighted edges meets the requirements for universal rigidity and affine localizability.
			\end{proof}
			
			\bibliographystyle{IEEEtran}
			\bibliography{MyRef}

			\begin{IEEEbiography}[{\includegraphics[width=1in,height=1.25in,clip,keepaspectratio]{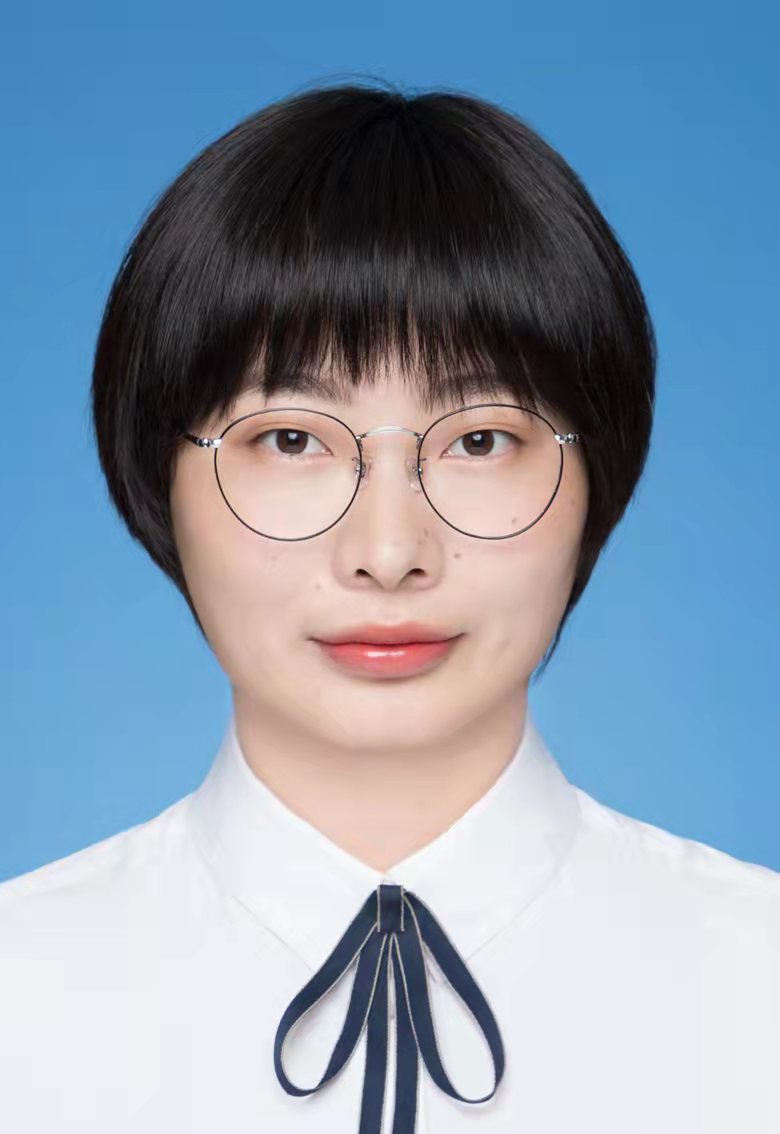}}]{Huiming Li} received her B.S. degree from Northeastern University, Shenyang, China, in 2018, and M.S. degree from National University of Defense Technology (NUDT), Changsha, China, in 2020, where she is currently pursuing her Ph.D. degree. Her research interests include coordinated control and unmanned aerial vehicles.
			\end{IEEEbiography}

			\begin{IEEEbiography}[{\includegraphics[width=1in,height=1.25in,clip,keepaspectratio]{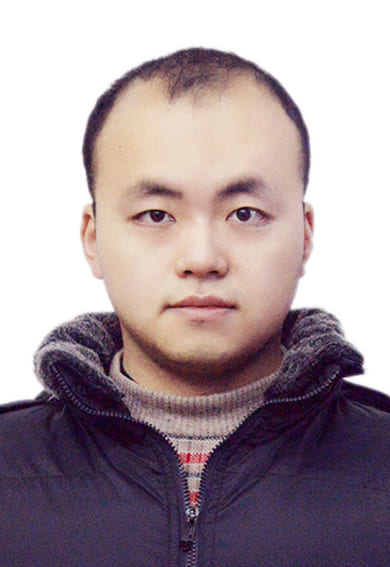}}]{Hao Chen} received the Ph.D degrees in control science and engineering from National University of Defense Technology (NUDT), Changsha, China, in 2020. 
				
				He is currently an Associate Professor in College of Intelligence Science and Technology, NUDT. He was a visiting student with the Technion-Israel Institute of Technology, supported by the China Scholarship Council from 2017 to 2018. His research interests include coordinated control and graph theory.
			\end{IEEEbiography}
			
			\begin{IEEEbiography}[{\includegraphics[width=1in,height=1.25in,clip,keepaspectratio]{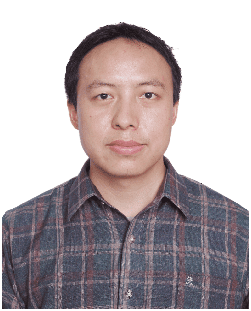}}]{Xiangke Wang} (Senior Member, IEEE) received the B.S., M.S., and Ph.D. degrees in Control Science and Engineering from National University of Defense Technology, China, in 2004, 2006, and 2012, respectively. Since 2012, he has been with the College of Intelligence Science and Technology, National University of Defense Technology, where he is currently a full professor. He was a visiting student at the Research School of Engineering, Australian National University from 2009 to 2011. He is a senior member of IEEE and is supported by the Hunan Outstanding Youth Award Program.  His current research interests include the control of multi-agent systems and its applications on unmanned aerial vehicles. He has authored or coauthored 5 books and more than 200 papers in refereed journals or international conferences, including IEEE Transactions/Letters, CDC, IFAC, ICRA. etc.
			\end{IEEEbiography}
			
			\begin{IEEEbiography}[{\includegraphics[width=1in,height=1.25in,clip,keepaspectratio]{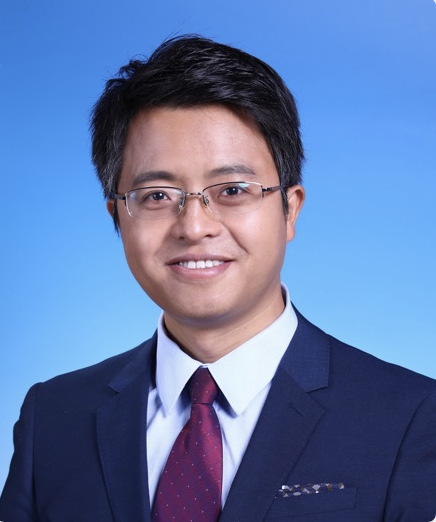}}]{Zhongkui Li} (M’11, SM’21) received the B.S. degree in space engineering from the National University of Defense Technology, China, in 2005, and his Ph.D. degree in dynamics and control from Peking University, China, in 2010. Since 2013, Dr. Li has been with the Department of Mechanics and Engineering Science, College of Engineering, Peking University, China, where he is currently a tenured Associate Professor. His current research interests include cooperative control and planning of multi-agent systems. 
				
			Dr. Li was the recipient of the State Natural Science Award of China in 2015, the Natural Science Award of the Ministry of Education of China in 2022 and 2011, and the National Excellent Doctoral Thesis Award of China in 2012. His coauthored papers received the IET Control Theory \& Applications Premium Award in 2013 and the Best Paper Award of Journal of Systems Science \& Complexity in 2012. He serves as an Associate Editor of IEEE Transactions on Automatic Control, and several other journals.
			\end{IEEEbiography}
			
			\begin{IEEEbiography}[{\includegraphics[width=1in,height=1.25in,clip,keepaspectratio]{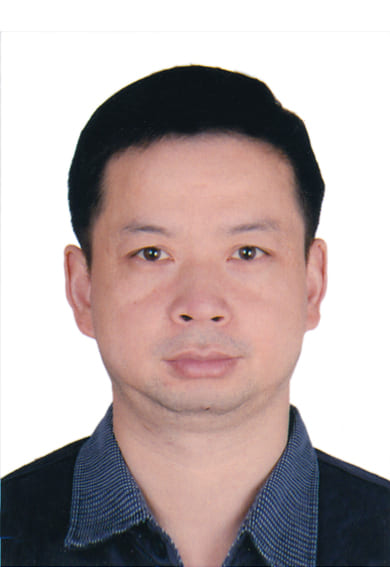}}]{Lincheng Shen} received the Ph.D. degrees in automatic control from the National University of Defense Technology, Changsha, China, in 1994.
				
			In 1989, he joined the Department of Automatic Control, NUDT, where he is currently a Full Professor and serves as the Dean of the Graduate School. His research interests include unmanned aerial vehicles, swarm robotics, and artificial intelligence. Dr. Shen has been serving as an Editorial Board Member of the Journal of Bionic Engineering since 2007.
			\end{IEEEbiography}
			
			\clearpage


		\end{document}